
\documentclass[onecolumn,draft,11pt]{IEEEtran}
\ifCLASSINFOpdf
\else
\fi
\hyphenation{op-tical net-works semi-conduc-tor}

\newcommand{\GRS}{{\mathrm{GRS}}}
\newcommand{\Hull}{{\mathrm{Hull}}}
\newcommand{\rank}{{\mathrm{rank}}}
\newcommand{\diag}{{\mathrm{diag}}}
\newcommand{\C}{{\mathcal{C}}}
\newcommand{\D}{{\mathcal{D}}}
\newcommand{\F}{{\mathbb{F}}}
\newcommand{\bc}{{{\bf c}}}
\newcommand{\wt}{{{\bf wt}}}

\usepackage{amsmath}
\usepackage{amssymb}
\usepackage{bm}
\usepackage{amssymb}
\usepackage{amsthm}
\usepackage{multirow,booktabs}
\usepackage{cases}
\usepackage{tabularx}
\usepackage{adjustbox}
\usepackage[figuresright]{rotating}
\usepackage{longtable}
\usepackage{booktabs}
\usepackage{multirow}
\usepackage{color}
\usepackage{cite}
\usepackage{multirow,booktabs}
\usepackage{cases}
\usepackage{tabularx}
\usepackage{adjustbox}
\usepackage[figuresright]{rotating}
\usepackage{longtable}
\usepackage{booktabs}
\usepackage{multirow}
\usepackage{color}
\usepackage{ulem}
\newtheorem{theorem}{Theorem}
\newtheorem{proposition}[theorem]{Proposition}
\newtheorem{remark}{Remark}
\newtheorem{definition}[theorem]{Definition}
\newtheorem{lemma}[theorem]{Lemma}
\newtheorem{corollary}[theorem]{Corollary}
\newtheorem{example}[theorem]{Example}

\newcommand{\tabincell}[2]{\begin{tabular}{@{}#1@{}}#2\end{tabular}}

\begin{document}
%
\title{The hull of two classical propagation rules and their applications \thanks{First and second authors were supported by the National Natural Science Foundation of China 
		(Nos.U21A20428 and 12171134). Third author was supported by Grant TED2021-130358B-I00 funded by MCIN/AEI/10.13039/501100011033 and by the “European Union NextGenerationEU/PRTR”}}
%
%
%

\author{Yang~Li, Shixin~Zhu$^\dag$\thanks{$^\dag$ Corresponding author}
        and~Edgar~Mart\'inez-Moro
\thanks{Yang~Li and  Shixin~Zhu are with the School of Mathematics, Hefei University of Technology, Hefei 230601, China (e-mail: yanglimath@163.com, zhushixinmath@hfut.edu.cn).}
\thanks{Edgar~Mart\'inez-Moro is with the Institute of Mathematics University of Valladolid, Spain (e-mail: Edgar.Martinez@uva.es).}
\thanks{Manuscript received --; revised --}}

%
%

\markboth{Journal of \LaTeX\ Class Files}%
{Shell \MakeLowercase{\textit{et al.}}: Bare Demo of IEEEtran.cls for Journals}
%



\maketitle

\begin{abstract}
	In this work, we study and 
determine the dimensions of Euclidean and Hermitian hulls of two classical 
propagation rules, namely, the direct sum construction and the $(\mathbf{u},\mathbf{u+v})$-
construction. Some new criteria for the resulting codes derived from these two propagation 
rules being self-dual, self-orthogonal, or linear complementary dual (LCD) codes are given. 
As  an application, we construct some linear codes with prescribed hull dimensions, many new 
binary, ternary Euclidean formally self-dual (FSD) LCD codes, and quaternary Hermitian FSD 
LCD codes. 
Some new even-like, odd-like, Euclidean and Hermitian self-orthogonal codes 
are also obtained. Many of {these} codes are also 
(almost) optimal according to the Database maintained by Markus Grassl. 
Our methods contribute positively to improve the lower bounds on the minimum distance of known LCD codes. 
\end{abstract}

\begin{IEEEkeywords}
Propagation rule, Hull, Formally self-dual code, LCD code, Self-orthogonal code 
\end{IEEEkeywords}

%
\IEEEpeerreviewmaketitle

\section{Introduction}\label{sec1}
Let $q=p^h$ be a prime power. We denote by  $\F_q$  the finite field of order $q$ and by $\F_q^n$ 
 the  $\F_q$-vector space  provided with the Hamming distance. 
We call a $k$-dimensional subspace of $\F_q^n$ with minimum Hamming distance between two different vectors $d$ 
a $q$-ary linear code $\C$, and we will denote it as an $[n,k,d]_q$-code. 
Throughout this paper, if $\C$ has the largest minimum distance among all $[n,k]_q$ linear codes, 
then $\C$ is called \emph{optimal} and if its minumum distance is $d$ and there exists an optimal $[n,k,d+1]_q$ linear code, 
then the code $\C$ is called \emph{almost optimal}.

For any two vectors $\mathbf{x}=(x_1,x_2,\dots,x_n)$, $\mathbf{y}=(y_1,y_2,\dots,y_n)\in \F_q^n$, 
the Euclidean (resp. Hermitian) inner product of $\mathbf{x}$ and $\mathbf{y}$ is defined as 
$(\mathbf{x},\mathbf{y})_E=\sum_{i=1}^n x_iy_i$ (resp. $(\mathbf{x},\mathbf{y})_H=\sum_{i=1}^n x_iy_i^q$). 
Then the Euclidean (resp. Hermitian) dual code of $\C$ is defined as 
$\C^{\bot_E}=\{\mathbf{x}\in \F_q^n:\ (\mathbf{x},\mathbf{y})_E=0\ {\rm for\ all}\ \mathbf{y}\in \C\}$ 
(resp. $\C^{\bot_H}=\{\mathbf{x}\in \F_q^n:\ (\mathbf{x},\mathbf{y})_H=0\ {\rm for\ all}\ \mathbf{y}\in \C\}$). 
The code $\C$ is Euclidean (resp. Hermitian) \emph{formally self-dual (FSD)} if it has 
the same weight distribution as its dual code with respect to the Euclidean (resp. Hermitian) 
inner product. For more details on weight distribution, readers are referred to \cite{MPS2022,MS1977,MVL2022,ZSO2022}. 
Furthermore, let $\mathbf{x}\in \F_2^n$, $\mathbf{x}$ is called \emph{even-like} if $\sum_{i=1}^nx_i=0$ and 
\emph{odd-like} otherwise. For a binary code $\C$, $\C$ is said to be \emph{even-like} if 
all of its codewords are even-like and \emph{odd-like} otherwise.

\subsection{Propagation rules}
 
One classical way of constructing  linear codes with good parameters  is to use propagation rules
which allows one to derive new linear codes from one or more known linear codes. These techniques include
 extending, lengthening, shortening, puncturing, etc., see for example \cite{LX2004}. 
When considering the combination of two or more known linear codes, one can use the 
concatenated construction, direct sum construction, $(\mathbf{u},\mathbf{u+v})$-construction, 
$(\mathbf{u+v+w},\mathbf{2u+v},\mathbf{u})$-construction, and others \cite{KP1992,LX2004,MS1977}. 
In 2001, Blackmore and Norton \cite{BN2001} introduced the so-called matrix-product codes, 
which generalized these combinatorial constructions when the lengths of given linear codes are the same. 
Note also that these propagation rules are widely 
used in quantum stabilizer codes, entanglement-assisted quantum error-correcting codes, 
subsystem codes, locally repairable codes, constant-weight codes, linear complementary dual 
(LCD) codes, etc., see for example \cite{AK2008,CXY2010,GR2015,I2022,LEGL2022,LEL2022,LEL2022LRC,LLY2022,LMWX2022} 
and the references therein. 
It is worth noting that recently the $(\mathbf{u},\mathbf{u+v})$-construction has also 
been used to design a novel Niederreiter-Like cryptosystem \cite{MCAK2021}. 

\subsection{Hulls and related codes} 

The Euclidean (resp. Hermitian) hull of a $q$-ary linear code $\C$ is defined as 
$\Hull_E(\C)=\C\cap \C^{\bot_E}$ (resp. $\Hull_H(\C)=\C\cap \C^{\bot_H}$). 
In 1990, Assmus et al. \cite{AK1990} first introduced the hulls of linear codes to 
classify finite projective planes. 
The knowledge of the hull of a linear code  has many important applications in coding theory, such as 
determining the complexity of algorithms for computing the automorphism group of 
a linear code \cite{L1982} and for checking the permutation equivalence of two linear
codes \cite{L1991,S2000}, as well as for computing the numbers of shared pairs 
(i.e., the parameter $c$) in an entanglement-assisted quantum error-correcting code \cite{GJG2018,LLY2020}. 
Note that some special cases of  hulls   are of much interest, 
namely, $\Hull_E(\C)=\{0\}$ (resp. $\Hull_H(\C)=\{0\}$), $\Hull_E(\C)=\C$ (resp. $\Hull_H(\C)=\C$) 
and $\Hull_E(\C)=\C=\C^{\bot_E}$ (resp. $\Hull_H(\C)=\C=\C^{\bot_H}$), which correspond respectively 
to Euclidean (resp. Hermitian) LCD codes, Euclidean (resp. Hermitian) self-orthogonal codes and 
Euclidean (resp. Hermitian) self-dual codes. Further, if $\C$ is an $[n,k]_q$ Euclidean 
(resp. Hermitian) self-orthogonal code satisfying $n=2k$, then $\C$ is an Euclidean (resp. Hermitian) 
self-dual code, which provides an alternative definition of Euclidean (resp. Hermitian) self-dual codes.

In fact, LCD codes were introduced by Massey \cite{M1992} due to their asymptotically good property and 
they provide an optimum linear coding solution for two-user binary adder channel. LCD codes were  
further developed by Carlet et al. \cite{CG2016} to fight against side channel attacks (SCA) 
and against fault injection attacks (FIA). In 2018, Carlet et al. \cite{CMTQ2018} proved that 
any linear code over $\F_q$ is equivalent to some Euclidean LCD code for $q\geq 4$ and any linear 
code over $\F_{q^2}$ is equivalent to some Hermitian LCD code for $q\geq 3$. From then on, the 
focus has been on the binary, ternary Euclidean LCD codes and the quaternary Hermitian LCD codes 
with good parameters, {see for example}  \cite{AH2020,AHS2020,AHS2021.1,GKLRW2018,LLY2022,LZYC2020,ZLLL2020} and 
the references therein. 
Note that some propagation rules included puncturing 
and shortening methods based on one given linear code were applied to obtain good LCD codes, which 
improved the results in previous literature. 

In addition, self-dual codes and self-orthogonal codes had also been extensively studied for their applications 
to combinatorial $t$-designs, Euclidean or Hermitian lattices, modular forms, DNA codes and quantum 
information theory \cite{BDHO1999,CS1998,KO2022,NRS2006}. Moreover, self-dual codes and self-orthogonal 
codes are asymptotically good in general \cite{D2009,MST1972}. In particular, some classification 
results on binary self-dual codes and binary self-orthogonal codes were given in 
\cite{H2007,KO2022,SLK2022} and the references therein.

\subsection{Our motivation and contribution}

Based on their practical applications pointed above, it is natural to ask whether it is possible to yield good 
linear codes with easily determined Euclidean and Hermitian hull dimensions by using propagation rules 
based on multiple linear codes. 
For this aim, we will  focus on combinations of two given linear codes in this paper. We study the direct sum construction 
and the $(\mathbf{u},\mathbf{u+v})$-construction, and determine their Euclidean and Hermitian hull dimensions 
under certain conditions. Furthermore, we present some criteria to judge whether resulting codes derived 
from these two propagation rules are self-orthogonal, self-dual and LCD codes under both Euclidean and 
Hermitian cases. As a consequence, we apply our results to construct linear codes with prescribed hull 
dimensions including Euclidean and Hermitian self-orthogonal codes, which are (almost) optimal 
for many of them according to Database \cite{Grass}. In addition, many new binary, ternary Euclidean 
FSD LCD codes, quaternary Hermitian FSD LCD codes, even-like codes, and odd-like codes are obtained. 
Some explicit examples that can be used to explain the possible effects of our results in improving 
the lower bounds on the minimum distance of LCD codes are also found. 

\subsection{Organisation of the paper}

After this introduction in Section~\ref{sec1}, we review some useful basic notions and results in Section \ref{sec2}. 
In Section \ref{sec.Propagations rules}, we study the direct sum construction and 
the $(\mathbf{u},\mathbf{u+v})$-construction as well as their Euclidean and Hermitian hull dimensions. 
Section \ref{sec.Applications} shows the construction of   linear codes with prescribed hull dimensions. 
We find new binary, ternary Euclidean FSD LCD codes, quaternary Hermitian FSD LCD codes, even-like codes, odd-like codes, and good 
 Euclidean (resp. Hermitian) 
self-orthogonal and  Euclidean (resp. Hermitian) self-dual codes based on the results in Section \ref{sec.Propagations rules}. 
Several specific examples are also given to show that our result is potentially effective in improving the lower 
bounds on the minimum distance of LCD codes.
Finally, Section \ref{sec5} shows {some} concluding remarks.

\section{Preliminaries}\label{sec2}

Throughout this paper, the corresponding finite field $\F_q$ satisfies $q=p^h$ as before when we consider 
the Euclidean inner product, and 
$h$ will be even when we consider the Hermitian inner product. 

\subsection{Linear intersection pairs}


\begin{definition}[\cite{GGJT2020}]\label{def.linear l-intersection} 
	For an integer $\ell\geq 0$, the pair of linear codes $\C$ and $\mathcal{D}$ of length $n$ over $\F_q$ is called 
	a \emph{linear $\ell$-intersection pair} if $\dim(\C\cap \mathcal{D})=\ell$.
\end{definition}

\begin{lemma}[Theorem\ 2.1 in \cite{GGJT2020}]\label{lem.dim.linear l-intersection} 
	Let $\C_i$ be an $[n,k_i,d_i]_q$ linear code with parity check matrix $H_i$ and generator matrix $G_i$ for $i=1,\ 2$. 
	If $\C_1$ and $\C_2$ is a linear $\ell$-intersection pair, 
	then $\rank(H_1G_2^T)$ and $\rank(G_1H_2^T)$ are independent of the particular election of $H_i$ and $G_i$ and 
	$$\rank(G_1H_2^T)=\rank(H_2G_1^T)=k_1-\ell,\quad 
	\rank(G_2H_1^T)=\rank(H_1G_2^T)=k_2-\ell.$$
\end{lemma}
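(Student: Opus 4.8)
The plan is to read each product $G_1H_2^T$ as the matrix of a linear map between the message spaces and to identify its kernel with the intersection code $\C_1\cap\C_2$ pulled back through $G_1$. Concretely, I would introduce the linear map $\varphi\colon\F_q^{k_1}\to\F_q^{\,n-k_2}$ defined by $\varphi(\mathbf{x})=\mathbf{x}G_1H_2^T$. Since $H_2$ is a parity-check matrix of $\C_2$, i.e. $\C_2=\{\mathbf{y}\in\F_q^n:\mathbf{y}H_2^T=\mathbf{0}\}$, we have $\varphi(\mathbf{x})=\mathbf{0}$ if and only if $\mathbf{x}G_1\in\C_2$. As $\mathbf{x}G_1\in\C_1$ for every $\mathbf{x}$, this shows $\ker\varphi=\{\mathbf{x}\in\F_q^{k_1}:\mathbf{x}G_1\in\C_1\cap\C_2\}$.

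Next I would invoke that $G_1$ has full row rank $k_1$, so $\mathbf{x}\mapsto\mathbf{x}G_1$ is an isomorphism of $\F_q^{k_1}$ onto $\C_1$; restricting it to $\ker\varphi$ gives a bijection onto $\C_1\cap\C_2$, hence $\dim\ker\varphi=\dim(\C_1\cap\C_2)=\ell$ by the linear $\ell$-intersection hypothesis. The rank–nullity theorem then gives $\rank(G_1H_2^T)=k_1-\dim\ker\varphi=k_1-\ell$. Running the same argument with the indices $1$ and $2$ interchanged yields $\rank(G_2H_1^T)=k_2-\ell$.

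The remaining two equalities are then purely formal, since a matrix and its transpose have equal rank: $\rank(H_2G_1^T)=\rank\bigl((G_1H_2^T)^T\bigr)=\rank(G_1H_2^T)=k_1-\ell$, and likewise $\rank(H_1G_2^T)=\rank(G_2H_1^T)=k_2-\ell$. Independence from the particular choice of $G_i$ and $H_i$ is then automatic: the numbers $k_1-\ell$ and $k_2-\ell$ depend only on $\C_1$, $\C_2$ (through $k_1,k_2$ and $\ell=\dim(\C_1\cap\C_2)$), while the computation above shows that every admissible pair of matrices produces a product of exactly that rank.

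I do not anticipate a serious obstacle; the single point that needs care is the kernel identification — correctly translating the condition $\mathbf{x}G_1H_2^T=\mathbf{0}$ into membership $\mathbf{x}G_1\in\C_2$ (which is precisely the defining property of a parity-check matrix) and using the full-row-rank of $G_1$ to transfer dimensions between $\ker\varphi$ and $\C_1\cap\C_2$. Everything else is rank–nullity and the transpose-invariance of rank.
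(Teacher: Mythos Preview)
Your argument is correct. The paper itself does not prove this lemma; it is quoted verbatim from \cite{GGJT2020} as Theorem~2.1 there, so there is no in-paper proof to compare against. Your approach---viewing $G_1H_2^T$ as the matrix of the composite $\F_q^{k_1}\xrightarrow{\ \cdot G_1\ }\C_1\hookrightarrow\F_q^n\xrightarrow{\ \cdot H_2^T\ }\F_q^{\,n-k_2}$, identifying the kernel with the preimage of $\C_1\cap\C_2$ under the isomorphism $\mathbf{x}\mapsto\mathbf{x}G_1$, and finishing with rank--nullity and transpose invariance---is exactly the standard proof one finds in the original reference, and every step is sound.
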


Suppose now that $h$ is even, for any subset $S$ of $\F_{q}^n$, 
we define $S^{\sqrt[]{q}}=\{\mathbf{v}^{\sqrt[]{q}}:\ \mathbf{v}\in S\}$. 
It is clear that $\C^{\bot_H}=(\C^{\sqrt[]{q}})^{\bot_E}$ and $(\C^{\bot_H})^{\bot_H}=\C$. 
We denote by  $A^\dagger=[a_{ji}^{\sqrt[]{q}}]$ the conjugate transpose of a matrix 
$A=[a_{ij}]$ over $\F_q$.  The following result holds. 

\begin{corollary}\label{coro.dim.linear l-intersection}
	Let $\C_i$ be an $[n,k_i,d_i]_q$   linear code and $G_i$ be a generator matrix of $\C_i$ for $i=1,\ 2$. 
	Then the following statements hold. 
	\begin{enumerate}
		\item For the Euclidean inner product, we have $\rank(G_1G_2^T)$ and $\rank(G_2G_1^T)$ 
		are independent of $G_1$ and $G_2$ so that 
		$$\rank(G_1G_2^T)=\rank(G_2G_1^T)=k_1-\dim(\C_1\cap \C_2^{\bot_E})\hbox{ and}$$      
		$$\rank(G_2G_1^T)=\rank(G_1G_2^T)=k_2-\dim(\C_1^{\bot_E}\cap \C_2).$$
		
		\item For the Hermitian inner product, we have $\rank(G_1G_2^\dagger)$ and $\rank(G_2G_1^\dagger)$ 
		are independent of $G_1$ and $G_2$ so that 
		$$\rank(G_1G_2^\dagger)=\rank(G_2G_1^\dagger)=k_1-\dim(\C_1\cap \C_2^{\bot_H})\hbox{ and}$$ 
		$$\rank(G_2G_1^\dagger)=\rank(G_1G_2^\dagger)=k_2-\dim(\C_1^{\bot_H}\cap \C_2).$$
	\end{enumerate}
\end{corollary}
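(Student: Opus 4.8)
The plan is to reduce both statements to Lemma~\ref{lem.dim.linear l-intersection} by choosing the right pair of codes to play the roles of $\C_1$ and $\C_2$ there. For part (1), the key observation is that a generator matrix $G_i$ of $\C_i$ is simultaneously a parity check matrix of $\C_i^{\bot_E}$, since $\C_i = (\C_i^{\bot_E})^{\bot_E}$ and the rows of $G_i$ span $\C_i$. So I would apply Lemma~\ref{lem.dim.linear l-intersection} with the substitution $\C_1 \mapsto \C_1$ (with generator matrix $G_1$) and $\C_2 \mapsto \C_2^{\bot_E}$ (with parity check matrix $H_2 := G_2$, and some generator matrix $G_2'$ of $\C_2^{\bot_E}$). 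Writing $\ell := \dim(\C_1 \cap \C_2^{\bot_E})$, the pair $(\C_1, \C_2^{\bot_E})$ is a linear $\ell$-intersection pair, so the lemma gives $\rank(G_1 H_2^T) = k_1 - \ell$, i.e. $\rank(G_1 G_2^T) = k_1 - \dim(\C_1 \cap \C_2^{\bot_E})$, together with the independence of this rank from the choice of $G_1$ and $G_2$. The second displayed equality in (1) follows by the symmetric substitution, swapping the roles of the two codes: apply the lemma with $\C_1^{\bot_E}$ (parity check matrix $G_1$) and $\C_2$ (generator matrix $G_2$) to obtain $\rank(G_2 G_1^T) = k_2 - \dim(\C_1^{\bot_E} \cap \C_2)$. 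The identities $\rank(G_1G_2^T) = \rank(G_2 G_1^T)$ (as matrices, one is the transpose of the other, so this is immediate and in fact makes the left-hand sides of the two displays coincide — the point is that they simultaneously equal the two different right-hand sides).

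For part (2), I would run the same argument through the standard dictionary between the Hermitian and Euclidean settings recorded just before the corollary: $\C^{\bot_H} = (\C^{\sqrt{q}})^{\bot_E}$ and $G_2^\dagger = (G_2^{\sqrt{q}})^T$. Concretely, if $G_i$ generates $\C_i$, then $G_i^{\sqrt{q}}$ generates $\C_i^{\sqrt{q}}$, and applying part (1) to the pair $\C_1^{\sqrt{q}}$, $\C_2^{\sqrt{q}}$ gives $\rank\bigl(G_1^{\sqrt{q}} (G_2^{\sqrt{q}})^T\bigr) = k_1 - \dim\bigl(\C_1^{\sqrt{q}} \cap (\C_2^{\sqrt{q}})^{\bot_E}\bigr)$. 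Now $G_1^{\sqrt{q}}(G_2^{\sqrt{q}})^T = (G_1 G_2^\dagger)^{\sqrt{q}}$ up to applying the Frobenius entrywise, which does not change the rank; and $(\C_2^{\sqrt{q}})^{\bot_E} = \C_2^{\bot_H}$, while $\C_1^{\sqrt{q}} \cap \C_2^{\bot_H} = (\C_1 \cap (\C_2^{\bot_H})^{\sqrt{q}})^{\sqrt{q}}$ has the same dimension as $\C_1 \cap \C_2^{\bot_H}$ once one checks $(\C_2^{\bot_H})^{\sqrt{q}}$ behaves correctly — alternatively, and more cleanly, just observe $\dim(\C_1^{\sqrt q}\cap \C_2^{\bot_H})=\dim\bigl((\C_1\cap \C_2^{\bot_H\,\sqrt q})^{\sqrt q}\bigr)$ and that taking $\sqrt q$-th powers is a dimension-preserving semilinear bijection. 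This yields $\rank(G_1 G_2^\dagger) = k_1 - \dim(\C_1 \cap \C_2^{\bot_H})$; the other equality follows by symmetry, and the independence from the choice of generator matrices is inherited from part (1).

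I expect the only delicate point to be bookkeeping in the Hermitian case: one must be careful that the Frobenius map $\mathbf{v} \mapsto \mathbf{v}^{\sqrt{q}}$ is $\F_q$-semilinear rather than $\F_q$-linear, so it preserves dimensions and ranks but not, e.g., matrix products over $\F_q$ naively — hence the need to verify $(AB^\dagger)^{\sqrt q}$ versus $A^{\sqrt q}(B^{\sqrt q})^T$ matches up correctly before quoting rank invariance. Everything else is a direct application of Lemma~\ref{lem.dim.linear l-intersection} after the substitutions above, so there is no real obstacle beyond this routine care.
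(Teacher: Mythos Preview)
Your argument for part~(1) is correct and is exactly the substitution the paper has in mind: $G_2$ is a parity check matrix of $\C_2^{\bot_E}$, so Lemma~\ref{lem.dim.linear l-intersection} applied to the pair $(\C_1,\C_2^{\bot_E})$ gives the first display, and symmetry gives the second.

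Part~(2), however, has a genuine error. You apply part~(1) to the pair $(\C_1^{\sqrt q},\C_2^{\sqrt q})$ and then claim
\[
G_1^{\sqrt q}\,(G_2^{\sqrt q})^T \;=\; (G_1G_2^\dagger)^{\sqrt q}
\]
up to Frobenius. This identity is false: a direct entrywise computation gives $G_1^{\sqrt q}(G_2^{\sqrt q})^T=(G_1G_2^T)^{\sqrt q}$, whereas $(G_1G_2^\dagger)^{\sqrt q}=G_1^{\sqrt q}G_2^T$. So your chain of equalities, once unwound, recovers $\rank(G_1G_2^T)=k_1-\dim(\C_1\cap\C_2^{\bot_E})$ --- i.e.\ part~(1) again --- and says nothing about $\rank(G_1G_2^\dagger)$. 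The worry you flag at the end (``verify $(AB^\dagger)^{\sqrt q}$ versus $A^{\sqrt q}(B^{\sqrt q})^T$ matches up'') is precisely where the argument fails.

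The fix is a one-line change of substitution and is what the paper does: apply part~(1) (equivalently Lemma~\ref{lem.dim.linear l-intersection}) to the pair $(\C_1,\C_2^{\sqrt q})$, not $(\C_1^{\sqrt q},\C_2^{\sqrt q})$. Since $G_2^{\sqrt q}$ generates $\C_2^{\sqrt q}$ and $G_2^\dagger=(G_2^{\sqrt q})^T$, part~(1) gives directly
\[
\rank(G_1G_2^\dagger)=\rank\!\bigl(G_1(G_2^{\sqrt q})^T\bigr)=k_1-\dim\!\bigl(\C_1\cap(\C_2^{\sqrt q})^{\bot_E}\bigr)=k_1-\dim(\C_1\cap\C_2^{\bot_H}),
\]
with no Frobenius bookkeeping needed on either side. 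The paper phrases this as ``$G_2^{\sqrt q}$ is a parity check matrix of $\C_2^{\bot_H}$'' and invokes Lemma~\ref{lem.dim.linear l-intersection} directly, which is the same thing.
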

\begin{proof}
	We prove only the Hermitian case in {$2$)}, and a similar proof can be done for the Euclidean case in {$1$)}. 
	For $i=1,\ 2$, since $G_i$ is a generator matrix of $\C_i$, we have $G_i^{\sqrt{q}}$ is a generator matrix of 
	$\C_i^{\sqrt{q}}$. It follows from $\C_i^{\bot_H}=(\C_i^{\sqrt{q}})^{\bot_E}$ that $G_i^{\sqrt{q}}$ is a parity 
	check matrix of $\C_i^{\bot_H}$. Combining Definition~\ref{def.linear l-intersection}, we can derive { from Lemma~\ref{lem.dim.linear l-intersection}} both equalities.
\end{proof}

If we take $\C_1=\C_2=\C$, since $\rank(G_1G_2^\dagger)$ is independent of $G_1$ and $G_2$, then the following results 
in \cite[Propositions 3.1 and 3.2]{GJG2018} can also be obtained from 
Corollary \ref{coro.dim.linear l-intersection} as special cases. 

\begin{lemma}[Propositions 3.1 and 3.2 in  \cite{GJG2018}]\label{lem.Hermitian hulls} 
	Let $\C$ be an $[n,k,d]$ linear code. Assume that $G$ is a generator matrix of $\C$. Then the following statements hold. 
	\begin{enumerate}
		\item  For the Euclidean case, we have $\rank(GG^T)=k-\dim(\Hull_E(\C))=k-\dim(\Hull_E(\C^{\bot_E}))$. 
		
		\item  For the Hermitian case, we have $\rank(GG^\dagger)=k-\dim(\Hull_H(\C))=k-\dim(\Hull_H(\C^{\bot_H}))$. 
	\end{enumerate}
\end{lemma}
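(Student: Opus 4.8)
The plan is to obtain this lemma as the diagonal case $\C_1=\C_2=\C$ of Corollary~\ref{coro.dim.linear l-intersection}, and then to settle the second equality in each item by a one-line duality remark. \textbf{First,} I would apply Corollary~\ref{coro.dim.linear l-intersection} with $\C_1=\C_2=\C$ and $G_1=G_2=G$. The Euclidean item immediately gives $\rank(GG^T)=k-\dim(\C\cap\C^{\bot_E})=k-\dim(\Hull_E(\C))$, and the Hermitian item gives $\rank(GG^\dagger)=k-\dim(\C\cap\C^{\bot_H})=k-\dim(\Hull_H(\C))$. I would emphasise at this point that the independence of $\rank(GG^T)$ and $\rank(GG^\dagger)$ from the chosen generator matrix is already part of Corollary~\ref{coro.dim.linear l-intersection} (hence of Lemma~\ref{lem.dim.linear l-intersection}), so the left-hand sides of the two asserted identities are unambiguous.

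\textbf{Next,} I would identify the hull of $\C$ with the hull of its dual, in fact as the same subspace rather than merely a subspace of the same dimension. Using $(\C^{\bot_E})^{\bot_E}=\C$ one has $\Hull_E(\C^{\bot_E})=\C^{\bot_E}\cap(\C^{\bot_E})^{\bot_E}=\C^{\bot_E}\cap\C=\Hull_E(\C)$; and using $(\C^{\bot_H})^{\bot_H}=\C$, recorded just before Corollary~\ref{coro.dim.linear l-intersection}, the identical computation yields $\Hull_H(\C^{\bot_H})=\Hull_H(\C)$. Combining these two equalities with the first step closes the argument for both items.

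If one prefers a self-contained argument that does not invoke Corollary~\ref{coro.dim.linear l-intersection}, I would instead run a rank--nullity argument directly on the $k\times k$ matrix $GG^\dagger$ (respectively $GG^T$). Since $G$ has full row rank $k$, the map $\mathbf{v}\mapsto\mathbf{v}G$ is an isomorphism of $\F_q^k$ onto $\C$, and a short computation with the Hermitian form shows that $\mathbf{v}G\in\Hull_H(\C)$ precisely when $\mathbf{v}(GG^\dagger)=\mathbf{0}$; that is, the left kernel of $GG^\dagger$ is carried bijectively onto $\Hull_H(\C)$, so $\dim(\Hull_H(\C))=k-\rank(GG^\dagger)$, and the Euclidean case is verbatim with $G^T$ in place of $G^\dagger$.

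I do not expect a real obstacle: the statement is a corollary of what precedes it, and even the direct route only demands care with transposes versus conjugate transposes in the sesquilinear Hermitian setting (the Euclidean case being strictly easier, its form being bilinear). The one point I would make explicit is the independence of $\rank(GG^\dagger)$ from the choice of $G$, since this is exactly what makes the two right-hand sides $k-\dim(\Hull_H(\C))$ and $k-\dim(\Hull_H(\C^{\bot_H}))$ genuinely equal, and it is supplied by Lemma~\ref{lem.dim.linear l-intersection}.
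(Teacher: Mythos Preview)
Your proposal is correct and matches the paper's approach exactly: the paper itself remarks, immediately before stating the lemma, that taking $\C_1=\C_2=\C$ in Corollary~\ref{coro.dim.linear l-intersection} recovers the result, and leaves it at that. Your additional observation that $\Hull_E(\C^{\bot_E})=\Hull_E(\C)$ and $\Hull_H(\C^{\bot_H})=\Hull_H(\C)$ (via double-duality) cleanly justifies the second equality in each item, which the paper does not spell out.
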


 We will now recall some results that were initially enunciated for the $l$-Galois inner product in \cite{LP2020}. 
Since the $l$-Galois inner product is coincident with the Euclidean inner product when $l=0$ and the Hermitian 
inner product when $l=\frac{h}{2}$ with even $h$ \cite{FZ2017} we have the following.

\begin{proposition}[Corollaries 3.5 and 3.6 in \cite{LP2020}]\label{prop1}  
	Let $\C_i$ be an $[n_i,k_i,d_i]_q$ linear code and $G_i$ be a generator matrix of $\C_i$ for $i=1,\ 2$. 
	Then the following properties hold. 
	\begin{enumerate}
		
		\item  $\C_1$ is an Euclidean (resp. Hermitian) self-orthogonal code if and only if $G_1G_1^T$ (resp. $G_1G_1^\dagger$) is the all zeros matrix. 
		
		\item $\C_1$ is an Euclidean (resp. Hermitian) LCD code if and only if $G_1G_1^T$ (resp. $G_1G_1^\dagger$) is a non-singular matrix. 
		
	\end{enumerate}
\end{proposition}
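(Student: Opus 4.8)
The plan is to reduce both equivalences to one elementary observation about the Gram-type matrices $G_1 G_1^T$ (resp. $G_1 G_1^\dagger$). Writing $g_1,\dots,g_{k_1}$ for the rows of $G_1$, I would first note that the $(i,j)$-entry of $G_1 G_1^T$ equals the Euclidean inner product $(g_i,g_j)_E$, straight from the definition of matrix multiplication; and, after checking that $G_1^\dagger=(G_1^{\sqrt q})^T$, that the $(i,j)$-entry of $G_1 G_1^\dagger$ equals $\sum_t (G_1)_{it}(G_1)_{jt}^{\sqrt q}=(g_i,g_j)_H$. Since $G_1$ has full row rank $k_1$, both matrices are square of order $k_1$; and replacing $G_1$ by $PG_1$ for an invertible $P$ sends them to $P(G_1 G_1^T)P^T$ and $P(G_1 G_1^\dagger)P^\dagger$, so neither ``being the zero matrix'' nor ``being nonsingular'' depends on the chosen generator matrix (this independence is in any case already ensured by Lemma~\ref{lem.Hermitian hulls}).

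For part 1, I would use that $\C_1$ is Euclidean (resp. Hermitian) self-orthogonal precisely when $\C_1\subseteq \C_1^{\bot_E}$ (resp. $\C_1\subseteq \C_1^{\bot_H}$). Because $g_1,\dots,g_{k_1}$ span $\C_1$ and the inner product is bilinear (resp. sesquilinear), this containment holds if and only if $(g_i,g_j)_E=0$ (resp. $(g_i,g_j)_H=0$) for all $1\le i,j\le k_1$, which by the observation above is exactly the assertion that $G_1 G_1^T$ (resp. $G_1 G_1^\dagger$) is the all-zeros matrix.

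For part 2, I would recall that $\C_1$ being Euclidean (resp. Hermitian) LCD means $\Hull_E(\C_1)=\{0\}$ (resp. $\Hull_H(\C_1)=\{0\}$), and then invoke Lemma~\ref{lem.Hermitian hulls} with $\C=\C_1$ and $G=G_1$, which gives $\rank(G_1 G_1^T)=k_1-\dim\Hull_E(\C_1)$ and $\rank(G_1 G_1^\dagger)=k_1-\dim\Hull_H(\C_1)$. Hence $\C_1$ is Euclidean (resp. Hermitian) LCD iff the relevant hull dimension is $0$ iff the $k_1\times k_1$ matrix $G_1 G_1^T$ (resp. $G_1 G_1^\dagger$) has rank $k_1$, i.e.\ is nonsingular. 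Should one prefer to avoid quoting Lemma~\ref{lem.Hermitian hulls}, the same conclusion follows directly: writing an arbitrary codeword of $\C_1$ as $\mathbf{a}G_1$, the condition $\mathbf{a}G_1\in\C_1^{\bot_E}$ becomes $\mathbf{a}G_1 G_1^T=\mathbf{0}$, so $\mathbf{a}\mapsto\mathbf{a}G_1$ is a linear isomorphism from the left kernel of $G_1 G_1^T$ onto $\Hull_E(\C_1)$, and analogously in the Hermitian case with $G_1 G_1^\dagger$.

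None of this is hard; the only step demanding genuine care is the Hermitian bookkeeping — fixing the convention for $\sqrt q$ and for the conjugate transpose ${}^\dagger$, and verifying that the entries of $G_1 G_1^\dagger$ really are the Hermitian inner products of the rows of $G_1$. Once that identification is in place, the Hermitian argument is word-for-word the Euclidean one.
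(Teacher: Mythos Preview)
Your argument is correct. The paper does not supply a proof of this proposition at all: it is stated with attribution to \cite{LP2020} (Corollaries~3.5 and~3.6 there) and used as a black box. Your write-up is therefore more than the paper provides, and it is sound. The entrywise identification of $G_1G_1^T$ and $G_1G_1^\dagger$ with the arrays of Euclidean and Hermitian inner products of the rows is exactly the right observation; from it part~1 is immediate by bilinearity/sesquilinearity, and part~2 follows either from Lemma~\ref{lem.Hermitian hulls} as you do, or from your alternative kernel argument. The only place to be slightly careful, as you already flag, is the Hermitian convention: in this paper $A^\dagger=[a_{ji}^{\sqrt{q}}]$ and $\C^{\bot_H}=(\C^{\sqrt{q}})^{\bot_E}$, so the exponent in the Hermitian inner product is $\sqrt{q}$ rather than the $q$ that appears (inconsistently) in the paper's Section~\ref{sec1} definition; your computation uses $\sqrt{q}$, which is the one compatible with the paper's $\dagger$ and with Lemma~\ref{lem.Hermitian hulls}.
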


\subsection{Some particular matrices}

We say that a square matrix $M$ is a  \emph{monomial matrix} if there is exactly one nonzero entry in each row and each column, 
and all other positions are $0$. In particular, 
$M$ becomes a \emph{permutation matrix} $P$ if all of its nonzero entries are $1$. 
For two linear codes $\C_1$ and $\C_2$ with respective generator matrices $G_1$ and $G_2$, 
they are called \emph{permulation equivalent} if there exists a permutation matrix $P$ 
such that $G_2=G_1P$. It is well known that permutation equivalent codes have the same length, 
dimension and minimum distance. 
The following lemma  shows  that the dimension of the hulls of two permutation equivalent 
codes are the same. 

\begin{lemma}[Lemma 6 in \cite{LEGL2022} and Proposition 4.2 in \cite{LP2020}]\label{lem.per-equvilent hull equal}
	Let $\C_1$, $\C_2$ be any two permutation equivalent linear codes. Then 
	$\dim(\Hull_E(\C_1))=\dim(\Hull_E(\C_2))$ and $\dim(\Hull_H(\C_1))=\dim(\Hull_H(\C_2)).$ 
 
\end{lemma}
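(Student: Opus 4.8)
The plan is to reduce the statement about hull dimensions to the rank invariance already established in Lemma~\ref{lem.Hermitian hulls}. Suppose $\C_1$ and $\C_2$ are permutation equivalent, say $G_2 = G_1 P$ for some $n\times n$ permutation matrix $P$, where $G_1$ is a generator matrix of $\C_1$; then $G_2$ is a generator matrix of $\C_2$ since $P$ is invertible and right-multiplication by an invertible matrix does not change the row space dimension. First I would treat the Euclidean case. By Lemma~\ref{lem.Hermitian hulls}(1), $\dim(\Hull_E(\C_2)) = k - \rank(G_2 G_2^T)$, where $k$ is the common dimension of the two codes. The key computation is $G_2 G_2^T = (G_1 P)(G_1 P)^T = G_1 P P^T G_1^T$, and since a permutation matrix is orthogonal, $P P^T = I_n$, so $G_2 G_2^T = G_1 G_1^T$. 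Hence $\rank(G_2 G_2^T) = \rank(G_1 G_1^T)$, and applying Lemma~\ref{lem.Hermitian hulls}(1) again to $\C_1$ gives $\dim(\Hull_E(\C_1)) = k - \rank(G_1 G_1^T) = k - \rank(G_2 G_2^T) = \dim(\Hull_E(\C_2))$.

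For the Hermitian case the argument runs in parallel, using Lemma~\ref{lem.Hermitian hulls}(2): $\dim(\Hull_H(\C_i)) = k - \rank(G_i G_i^\dagger)$. Here I would observe that a permutation matrix has entries in $\{0,1\} \subseteq \F_q$, which are fixed by the $\sqrt{q}$-Frobenius map, so $P^\dagger = P^T = P^{-1}$. Therefore $G_2 G_2^\dagger = (G_1 P)(G_1 P)^\dagger = G_1 P P^\dagger G_1^\dagger = G_1 P P^{-1} G_1^\dagger = G_1 G_1^\dagger$, whence $\rank(G_2 G_2^\dagger) = \rank(G_1 G_1^\dagger)$ and the two Hermitian hull dimensions coincide.

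There is no real obstacle here; the only point that needs a word of care is the assertion that $P^\dagger = P^T$, i.e.\ that the conjugation $a \mapsto a^{\sqrt{q}}$ acts trivially on the $0/1$ entries of a permutation matrix — but this is immediate since $0^{\sqrt{q}} = 0$ and $1^{\sqrt{q}} = 1$. One could alternatively phrase the whole proof without invoking generator matrices, by noting directly that the permutation sending $\C_1$ to $\C_2$ also sends $\C_1^{\bot_E}$ to $\C_2^{\bot_E}$ (resp.\ $\C_1^{\bot_H}$ to $\C_2^{\bot_H}$), hence restricts to a bijection $\Hull_E(\C_1) \to \Hull_E(\C_2)$ (resp.\ for the Hermitian hull), which is an $\F_q$-linear isomorphism and so preserves dimension; but the rank-based argument above is the most economical given what has already been proved in the excerpt.
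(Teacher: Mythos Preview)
Your argument is correct. The paper does not actually supply its own proof of this lemma: it is stated with citations to \cite{LEGL2022} and \cite{LP2020} and used as a black box, so there is nothing to compare against. Your rank-based approach via Lemma~\ref{lem.Hermitian hulls} is clean and the only delicate point---that $P^\dagger = P^T$ because permutation matrices have entries in the prime subfield---is handled properly.
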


An $n\times n$ \emph{Toeplitz matrix} $T_n$ is a matrix whose all diagonals parallel to the main diagonal 
have constant entries, i.e., 
\begin{equation}\label{eq.Toeplitz matrix}
T_n=\left(\begin{array}[]{cccccc}
t & a_1 & a_2 & \cdots & \cdots & a_{n-1} \\
b_1 & t & a_1 & \ddots & \ddots & \vdots \\ 
b_2 & b_1 & \ddots & \ddots & \ddots & \vdots \\
\vdots & \ddots & \ddots & \ddots & a_1 & a_2 \\
\vdots & \ddots & \ddots & b_1 & t & a_1 \\
b_{n-1} & \cdots & \cdots & b_2 & b_1 & t \\
\end{array}\right), 
\end{equation}
where $t,a_i,b_i\in \F_q$ for $1\leq i\leq n-1$. 
$O_{s\times t}$ will denote the {all entries zero matrix} of size $s\times t$, when $s=t$, 
we abbreviate $O_{s\times t}$ as $O_s$. 
A $m\times n$ matrix $A$ is \emph{right non-singular} if there is an $n\times m$ matrix 
$B$ such that $AB=I_m$. 

\subsection{Matrix-product codes} 

In this subsection, we review some notions on matrix-product codes the reader can refer to \cite{BN2001,LLY2022,LP2020} for further details. 
\begin{definition}[Definition 5.2 in \cite{LP2020}]\label{Def.martix product code}
	Let $A=[a_{ij}]$ be an $m\times n$ matrix over $\F_q$ and $\C_1,\C_2,\cdots,\C_m$ be codes of length $n$ over $\F_q$. 
	The \emph{matrix-product code} $[\C_1,\C_2,\cdots,\C_m]\cdot A$ is given by all the vectors $ 
	[\sum_{i=1}^{m}\mathbf{c}_ia_{i1},\sum_{i=1}^{m}\mathbf{c}_ia_{i2},\cdots,\sum_{i=1}^{m}\mathbf{c}_ia_{in}]$, 
	where $\mathbf{c}_i\in \C_i$ is a row vector of length $n$ for $i=1,2,\cdots,m$. 
\end{definition}

Liu et al. determined the $l$-Galois hulls of matrix-product codes in \cite[Theorems 5.6 and 5.11]{LP2020} 
as well as their dimensions in \cite[Corollary 5.12]{LP2020} when the matrix $A \sigma^l(A^T)={\rm diag}(\lambda_1,\lambda_2)$ 
or the matrix $A\sigma^l(A^T)$ is upper triangular or lower triangular, where $\sigma$ is the Frobenius automorphism of $\F_q$. 

\begin{lemma}[Corollary 5.12 in \cite{LP2020}]\label{lem.MP.HULL.DIM}
	Let $\C_1,\C_2,\cdots,\C_m$ be linear codes of length $n$ over $\F_q$ and $\C=[\C_1,\C_2,\cdots,\C_m]\cdot A$ be 
	the matrix-product code, where $A$ is an $m\times n$ matrix. Then the following statements hold. 
	\begin{enumerate}
		\item  For the Euclidean case, if $A$ is right non-singular and $AA^T$ is block upper triangular 
		or block lower triangular, then  
		\begin{equation}\label{eq.MP.HULL.DIM1}
		0\leq \dim(\Hull_E(\C))\leq \sum_{i=1}^m\dim(\Hull_E(\C_i)),
		\end{equation}
		and if $AA^T=\diag(\lambda_1,\lambda_2,\cdots,\lambda_m)$, where $\lambda_i\neq 0$ for all $1\leq i\leq m$, then 
		\begin{equation}\label{eq.MP.HULL.DIM2}
		\dim(\Hull_E(\C))=\sum_{i=1}^m\dim(\Hull_E(\C_i)). 
		\end{equation}
		\item  For the Hermitian case, if $A$ is right non-singular and $AA^\dagger$ is block upper triangular 
		or block lower triangular, then  
		\begin{equation}\label{eq.MP.HULL.DIM3}
		0\leq \dim(\Hull_H(\C))\leq \sum_{i=1}^m\dim(\Hull_H(\C_i)), 
		\end{equation}
		and if $AA^\dagger=\diag(\lambda_1,\lambda_2,\cdots,\lambda_m)$, where $\lambda_i\neq 0$ for all $1\leq i\leq m$, then 
		\begin{equation}\label{eq.MP.HULL.DIM4}
		\dim(\Hull_H(\C))=\sum_{i=1}^m\dim(\Hull_H(\C_i)). 
		\end{equation}
	\end{enumerate}
\end{lemma}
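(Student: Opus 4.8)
The plan is to collapse all four claims into a single statement about $\rank(GG^{T})$ (resp. $\rank(GG^{\dagger})$) and then quote Lemma~\ref{lem.Hermitian hulls}. First I would fix a generator matrix $G_{i}$ of each $\C_{i}$ and read off from Definition~\ref{Def.martix product code} that $\C=[\C_{1},\dots,\C_{m}]\cdot A$ is generated by the block matrix $G$ whose $(i,j)$-th block equals $a_{ij}G_{i}$; equivalently $G=\diag(G_{1},\dots,G_{m})\,(A\otimes I_{n})$. Because $A$ is right non-singular, say $AB=I_{m}$, one has $(A\otimes I_{n})(B\otimes I_{n})=I_{mn}$, so $A\otimes I_{n}$ is right non-singular as well; since $\diag(G_{1},\dots,G_{m})$ has full row rank $\sum_{i}k_{i}$, so does $G$, whence $\dim\C=\sum_{i}\dim\C_{i}$ and $G$ is indeed a generator matrix of $\C$.

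The engine of the proof is the identity
\[
GG^{T}=\diag(G_{1},\dots,G_{m})\,\bigl((AA^{T})\otimes I_{n}\bigr)\,\diag(G_{1},\dots,G_{m})^{T},
\]
which exhibits $GG^{T}$ as the $m\times m$ block matrix with $(i,j)$-th block $(AA^{T})_{ij}G_{i}G_{j}^{T}$; the Hermitian version is obtained verbatim by replacing $(\cdot)^{T}$ with $(\cdot)^{\dagger}$ and using $(A\otimes I_{n})^{\dagger}=A^{\dagger}\otimes I_{n}$, giving $GG^{\dagger}$ with $(i,j)$-th block $(AA^{\dagger})_{ij}G_{i}G_{j}^{\dagger}$. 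Lemma~\ref{lem.Hermitian hulls} then turns the statement into a comparison of ranks: $\dim(\Hull_E(\C))=\sum_{i}k_{i}-\rank(GG^{T})$ and $\dim(\Hull_E(\C_{i}))=k_{i}-\rank(G_{i}G_{i}^{T})$ (and likewise in the Hermitian setting), so it suffices to prove $\rank(GG^{T})=\sum_{i}\rank(G_{i}G_{i}^{T})$ in the diagonal case and $\rank(GG^{T})\ge\sum_{i}\rank(G_{i}G_{i}^{T})$ in the block-triangular case; the lower bound $\dim(\Hull_E(\C))\ge 0$ in \eqref{eq.MP.HULL.DIM1} and \eqref{eq.MP.HULL.DIM3} is automatic.

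When $AA^{T}=\diag(\lambda_{1},\dots,\lambda_{m})$ with every $\lambda_{i}\ne 0$, all off-diagonal blocks of $GG^{T}$ vanish, so $GG^{T}=\diag(\lambda_{1}G_{1}G_{1}^{T},\dots,\lambda_{m}G_{m}G_{m}^{T})$ and $\rank(GG^{T})=\sum_{i}\rank(\lambda_{i}G_{i}G_{i}^{T})=\sum_{i}\rank(G_{i}G_{i}^{T})$, which gives \eqref{eq.MP.HULL.DIM2}; \eqref{eq.MP.HULL.DIM4} is the identical Hermitian computation with $(\cdot)^{\dagger}$. When $AA^{T}$ is block upper (or lower) triangular, $GG^{T}$ inherits that block-triangular shape with diagonal blocks $(AA^{T})_{ii}G_{i}G_{i}^{T}$, and one applies the elementary fact that a block-triangular matrix has rank at least the sum of the ranks of its diagonal blocks (immediate from a column-space argument, or by induction on the number of blocks); provided the diagonal blocks of $AA^{T}$ are non-singular, in particular each scalar $(AA^{T})_{ii}\ne 0$ so that $\rank((AA^{T})_{ii}G_{i}G_{i}^{T})=\rank(G_{i}G_{i}^{T})$, this yields $\rank(GG^{T})\ge\sum_{i}\rank(G_{i}G_{i}^{T})$ and hence \eqref{eq.MP.HULL.DIM1}; \eqref{eq.MP.HULL.DIM3} follows in exactly the same way.

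I expect the block-triangular case to be the only delicate point. The rank inequality for block-triangular matrices is routine, but its usefulness hinges on the non-degeneracy of the diagonal blocks of $AA^{T}$ (resp. $AA^{\dagger}$): if $A$ is right non-singular yet some $(AA^{T})_{ii}$ vanishes, then $\rank((AA^{T})_{ii}G_{i}G_{i}^{T})$ drops below $\rank(G_{i}G_{i}^{T})$ and the upper bound in \eqref{eq.MP.HULL.DIM1} can genuinely fail, so the block-triangularity hypothesis must be read as carrying non-singular diagonal blocks. Everything else --- the identification of $G$ from Definition~\ref{Def.martix product code}, the two rank formulas supplied by Lemma~\ref{lem.Hermitian hulls}, and the passage from the Euclidean to the Hermitian statements by swapping transposes for conjugate transposes --- is bookkeeping.
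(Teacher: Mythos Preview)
The paper does not prove this lemma; it is quoted as Corollary~5.12 of \cite{LP2020} and used as a black box. Your derivation --- writing $G=\diag(G_{1},\dots,G_{m})(A\otimes I_{n})$, expanding $GG^{T}$ (resp.\ $GG^{\dagger}$) via the mixed-product property of Kronecker products, and translating into hull dimensions through Lemma~\ref{lem.Hermitian hulls} --- is the natural approach and handles the diagonal cases \eqref{eq.MP.HULL.DIM2} and \eqref{eq.MP.HULL.DIM4} cleanly.

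Your caution about the triangular clause is well placed and in fact necessary for the statement as recorded here. Observe that $AA^{T}$ is symmetric and $AA^{\dagger}$ is Hermitian, so a (scalar-wise) triangular $AA^{T}$ is automatically diagonal; the only way the first clause adds anything to the second is if some diagonal entry vanishes, and then \eqref{eq.MP.HULL.DIM1} can genuinely fail: over $\F_{2}$ with $m=1$, $A=(1\ 1)$ (right non-singular, $AA^{T}=0$) and $\C_{1}=\F_{2}$, the matrix-product code $\C=\{00,11\}$ is self-dual, giving $\dim\Hull_{E}(\C)=1>0=\dim\Hull_{E}(\C_{1})$. Your proviso that the diagonal blocks of $AA^{T}$ (resp.\ $AA^{\dagger}$) be non-singular is exactly the hypothesis needed, and under it your block-triangular rank inequality $\rank(GG^{T})\ge\sum_{i}\rank(G_{i}G_{i}^{T})$ finishes the argument.
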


\subsection{Generalized Reed-Solomon  codes}
Now, let us introduce some basic notations and results on generalized Reed-Solomon (GRS) codes, which will 
be needed later. Let $a_1,a_2,\dots,a_n$ be $n$ distinct elements of $\F_q$ and $v_1,v_2,\dots,v_n$ be $n$ 
nonzero elements of $\F_q$, where $n$ is an integer satisfying $1<n\leq q$. 
Denote $\mathbf{a}=(a_1,a_2,\dots,a_n)$ and $\mathbf{v}=(v_1,v_2,\dots,v_n)$. 
For $1\leq k\leq n$, we define the $k$-dimensional GRS code of length $n$ associated with $\mathbf{a}$ and 
$\mathbf{v}$ as 
\begin{equation}\label{eq. GRS.df}
\GRS_k(\mathbf{a},\mathbf{v})=\{(v_1f(a_1),v_2f(a_2),\dots,v_nf(a_n)):\ f(x)\in \F_q[x],\ \deg(f(x))\leq k-1\}.
\end{equation}
It is obvious that $\GRS_k(\mathbf{a},\mathbf{v})$ has a generator matrix 
\begin{equation}\label{eq.GRS.generator matrix}
G=\left(\begin{array}{cccc}
v_1 & v_2 & \cdots & v_n \\
v_1a_1 & v_2a_2 & \cdots & v_na_n \\
v_1a^2_1 & v_2a^2_2 & \cdots & v_na^2_n \\
\vdots & \vdots & \ddots & \vdots \\
v_1a^{k-1}_1 & v_2a^{k-1}_2 & \cdots & v_na^{k-1}_n \\
\end{array}\right).
\end{equation}
Note also that, since $\C^{\bot_H}=(\C^q)^{\bot_E}$, we have 
\begin{equation}\label{eq.GRS.Hermitian dual.df}
\GRS_k(\mathbf{a},\mathbf{v})^{\bot_H}  = (\GRS_k(\mathbf{a},\mathbf{v})^q)^{\bot_E} 
= \GRS_k(\mathbf{a^q},\mathbf{v^q})^{\bot_E}.
\end{equation}
Note that the Euclidean and Hermitian dual codes of a GRS code are also GRS codes. 

\begin{lemma}[Theorem 9.1.6 in \cite{LX2004}]\label{lem.GRS.Euclidean}
	The Euclidean dual of the GRS code $\GRS_k(\mathbf{a},\mathbf{v})$ over $\F_q$ of length $n$ is 
	$\GRS_{n-k}(\mathbf{a},\mathbf{v'})$ for some $\mathbf{v'}\in (\F_q^*)^n$. 
\end{lemma}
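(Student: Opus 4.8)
The plan is to exhibit an explicit multiplier vector $\mathbf{v'}$ and verify directly that $\GRS_{n-k}(\mathbf{a},\mathbf{v'})$ is orthogonal to $\GRS_k(\mathbf{a},\mathbf{v})$ with respect to the Euclidean inner product; a dimension count then forces equality with the Euclidean dual. First I would set $u(x)=\prod_{i=1}^{n}(x-a_i)\in\F_q[x]$ and observe that its formal derivative satisfies $u'(a_i)=\prod_{j\neq i}(a_i-a_j)\neq 0$, since the $a_i$ are pairwise distinct (here $n\leq q$ guarantees that enough distinct evaluation points exist). Then define $v'_i=\bigl(v_i\,u'(a_i)\bigr)^{-1}$ for $1\leq i\leq n$, so that $\mathbf{v'}=(v'_1,\dots,v'_n)\in(\F_q^*)^n$.

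Next, take an arbitrary codeword $(v_1 f(a_1),\dots,v_n f(a_n))$ of $\GRS_k(\mathbf{a},\mathbf{v})$ with $\deg f\leq k-1$ and an arbitrary codeword $(v'_1 g(a_1),\dots,v'_n g(a_n))$ of $\GRS_{n-k}(\mathbf{a},\mathbf{v'})$ with $\deg g\leq n-k-1$. Their Euclidean inner product is $\sum_{i=1}^{n} v_i v'_i f(a_i)g(a_i)=\sum_{i=1}^{n}\frac{(fg)(a_i)}{u'(a_i)}$, and since $\deg(fg)\leq n-2$ this is a special case of the classical identity $\sum_{i=1}^{n}\frac{P(a_i)}{u'(a_i)}=0$, valid for every $P\in\F_q[x]$ with $\deg P\leq n-2$. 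I would prove that identity by Lagrange interpolation: the interpolating polynomial $\sum_{i=1}^{n} P(a_i)\prod_{j\neq i}\frac{x-a_j}{a_i-a_j}$ equals $P(x)$, and comparing the coefficient of $x^{n-1}$ on both sides yields $\sum_{i=1}^{n}\frac{P(a_i)}{u'(a_i)}=0$ because $\deg P\leq n-2$. All of this is purely formal over $\F_q$, so no characteristic issues arise.

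Finally, the orthogonality just established shows $\GRS_{n-k}(\mathbf{a},\mathbf{v'})\subseteq \GRS_k(\mathbf{a},\mathbf{v})^{\bot_E}$. Since the generator matrix in \eqref{eq.GRS.generator matrix} has full rank $k$ (it is a generalized Vandermonde matrix, the $v_i$ being nonzero), we have $\dim \GRS_k(\mathbf{a},\mathbf{v})=k$, hence $\dim \GRS_k(\mathbf{a},\mathbf{v})^{\bot_E}=n-k=\dim\GRS_{n-k}(\mathbf{a},\mathbf{v'})$, and the inclusion above is an equality. The main obstacle is the algebraic identity $\sum_{i=1}^{n} P(a_i)/u'(a_i)=0$; once that is in hand, the rest is bookkeeping with dimensions and the defining evaluation description \eqref{eq. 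GRS.df} of GRS codes.
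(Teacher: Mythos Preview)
Your proof is correct and is essentially the standard textbook argument (indeed, this is the proof given in \cite{LX2004}, from which the lemma is quoted). The paper itself does not supply a proof of this lemma; it simply cites it as Theorem~9.1.6 of \cite{LX2004}.
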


\begin{lemma}\label{lem.GRS.Hermitian}
	The Hermitian dual of the GRS code $\GRS_k(\mathbf{a},\mathbf{v})$ over $\F_{q^2}$ of length $n$ is 
	$\GRS_{n-k}(\mathbf{a^q},\mathbf{(v^q)'})$ for some $\mathbf{(v^q)'}\in (\F_{q^2}^*)^n$. 
\end{lemma}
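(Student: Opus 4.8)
The plan is to reduce the Hermitian dual to a Euclidean dual and then invoke Lemma~\ref{lem.GRS.Euclidean}. Indeed, by \eqref{eq.GRS.Hermitian dual.df} we already have $\GRS_k(\mathbf{a},\mathbf{v})^{\bot_H}=\GRS_k(\mathbf{a^q},\mathbf{v^q})^{\bot_E}$, so it is enough to check that $\GRS_k(\mathbf{a^q},\mathbf{v^q})$ is a genuine $k$-dimensional GRS code of length $n$ over $\F_{q^2}$ and then to apply the known description of the Euclidean dual of a GRS code.

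First I would verify that the twisted data $\mathbf{a^q}=(a_1^q,\dots,a_n^q)$ and $\mathbf{v^q}=(v_1^q,\dots,v_n^q)$ meet the hypotheses in the definition of a GRS code. The map $x\mapsto x^q$ is the Frobenius automorphism of $\F_{q^2}$ (it fixes $\F_q$ pointwise), hence a bijection of $\F_{q^2}$; therefore $a_1^q,\dots,a_n^q$ are again $n$ pairwise distinct elements of $\F_{q^2}$ and $v_1^q,\dots,v_n^q$ are again $n$ nonzero elements of $\F_{q^2}$, and since $1<n\le q<q^2$ the code $\GRS_k(\mathbf{a^q},\mathbf{v^q})$ is well defined over $\F_{q^2}$. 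The identity $\GRS_k(\mathbf{a},\mathbf{v})^q=\GRS_k(\mathbf{a^q},\mathbf{v^q})$ underlying \eqref{eq.GRS.Hermitian dual.df} is itself obtained by applying the coordinatewise $q$-power map to a generic codeword $(v_1f(a_1),\dots,v_nf(a_n))$: one gets $(v_1^q f^{(q)}(a_1^q),\dots,v_n^q f^{(q)}(a_n^q))$, where $f^{(q)}$ is the polynomial obtained from $f$ by raising each coefficient to the $q$-th power, and $f^{(q)}$ still has degree at most $k-1$ and ranges over all polynomials of degree less than $k$ over $\F_{q^2}$ as $f$ does.

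Finally I would apply Lemma~\ref{lem.GRS.Euclidean}, with the base field taken to be $\F_{q^2}$, to the code $\GRS_k(\mathbf{a^q},\mathbf{v^q})$: its Euclidean dual equals $\GRS_{n-k}(\mathbf{a^q},\mathbf{(v^q)'})$ for some $\mathbf{(v^q)'}\in(\F_{q^2}^*)^n$. Combining this with \eqref{eq.GRS.Hermitian dual.df} yields $\GRS_k(\mathbf{a},\mathbf{v})^{\bot_H}=\GRS_{n-k}(\mathbf{a^q},\mathbf{(v^q)'})$, which is exactly the assertion. There is essentially no obstacle here; the only point deserving attention is the bookkeeping in the second paragraph, namely that the Frobenius twist maps a GRS code to a GRS code and preserves both defining conditions (distinct evaluation points and nonzero column multipliers), so that Lemma~\ref{lem.GRS.Euclidean} applies over $\F_{q^2}$ without change.
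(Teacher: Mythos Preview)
Your argument is correct and follows exactly the paper's route: combine Equation~\eqref{eq.GRS.Hermitian dual.df} with Lemma~\ref{lem.GRS.Euclidean} (applied over $\F_{q^2}$). One small slip: since the code is over $\F_{q^2}$, the length constraint is $1<n\le q^2$, not $n\le q$; this does not affect the proof.
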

\begin{proof}
	Combining  Equation~(\ref{eq.GRS.Hermitian dual.df}) and Lemma \ref{lem.GRS.Euclidean}, one can easily 
	verify the desired result. 
\end{proof}
Finally, 
from the generator matrix showed in Equation (\ref{eq.GRS.generator matrix}), we can derive the following 
proposition directly. 

\begin{proposition}\label{prop.GRS nested}
	For two GRS codes $\GRS_{k_1}(\mathbf{a},\mathbf{v})$ and $\GRS_{k_2}(\mathbf{a},\mathbf{v})$, 
	we have $\GRS_{k_1}(\mathbf{a},\mathbf{v})\subseteq \GRS_{k_2}(\mathbf{a},\mathbf{v})$ 
	if and only if $k_1\leq k_2$. 
\end{proposition}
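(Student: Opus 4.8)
The plan is to read off both implications from the polynomial description in Equation~(\ref{eq. GRS.df}), using as the only auxiliary fact that $\dim \GRS_k(\mathbf{a},\mathbf{v}) = k$ whenever $1 \le k \le n$.

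First I would settle the easy direction. Suppose $k_1 \le k_2$. Any codeword of $\GRS_{k_1}(\mathbf{a},\mathbf{v})$ has the form $(v_1 f(a_1), v_2 f(a_2), \dots, v_n f(a_n))$ with $\deg f(x) \le k_1 - 1 \le k_2 - 1$, so by Equation~(\ref{eq. GRS.df}) the very same vector lies in $\GRS_{k_2}(\mathbf{a},\mathbf{v})$; hence $\GRS_{k_1}(\mathbf{a},\mathbf{v}) \subseteq \GRS_{k_2}(\mathbf{a},\mathbf{v})$. Equivalently, one may observe that the top $k_1$ rows of the generator matrix $G$ in Equation~(\ref{eq.GRS.generator matrix}) written for $\GRS_{k_2}(\mathbf{a},\mathbf{v})$ form a generator matrix of $\GRS_{k_1}(\mathbf{a},\mathbf{v})$, which exhibits the containment directly.

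For the converse I would use a dimension count. The point needing a word of justification is that the $k$ rows of $G$ in Equation~(\ref{eq.GRS.generator matrix}) are $\F_q$-linearly independent: a relation $\sum_{j=0}^{k-1} c_j (v_1 a_1^j, v_2 a_2^j, \dots, v_n a_n^j) = \mathbf{0}$ gives $v_i \sum_{j=0}^{k-1} c_j a_i^j = 0$ for every $i$, and since each $v_i \neq 0$ the polynomial $\sum_{j=0}^{k-1} c_j x^j$ of degree at most $k-1$ vanishes at the $n \ge k$ distinct points $a_1, \dots, a_n$, so it is the zero polynomial and all $c_j = 0$. Hence $\dim \GRS_k(\mathbf{a},\mathbf{v}) = k$ for $1 \le k \le n$. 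Now if $\GRS_{k_1}(\mathbf{a},\mathbf{v}) \subseteq \GRS_{k_2}(\mathbf{a},\mathbf{v})$, then $k_1 = \dim \GRS_{k_1}(\mathbf{a},\mathbf{v}) \le \dim \GRS_{k_2}(\mathbf{a},\mathbf{v}) = k_2$, which is the claimed inequality.

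I do not anticipate any real obstacle: the statement is essentially immediate from the definitions, and the only step that is not a one-line triviality is the Vandermonde-type linear independence of the rows of $G$ recorded above, which the earlier material already takes for granted.
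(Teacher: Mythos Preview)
Your argument is correct and matches the paper's approach: the paper states that the proposition follows directly from the generator matrix in Equation~(\ref{eq.GRS.generator matrix}), and your proof spells out precisely that observation (the nesting of rows for one direction, the Vandermonde-based dimension count for the other).
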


\subsection{The Griesmer's bound}

For an $[n,k,d]_q$ linear code, there are many trade-offs among its parameters 
$n$, $k$ and $d$.  
{One of them is the so-called Griesmer's bound. 
\begin{lemma}[Theorem 5.7.4 in \cite{LX2004}]\label{lem.Griesmer bound} 
	Let $\C$ be a $q$-ary code of parameters $[n,k,d]$, where $k\geq 1$. Then 
	\begin{equation}\label{eq. Griesmer bound}
	n\geq \sum_{i=0}^{k-1} \lceil \frac{d}{q^i} \rceil. 
	\end{equation}
\end{lemma}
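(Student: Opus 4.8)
The plan is to establish the bound by induction on $k$ via the classical \emph{residual code} construction. The base case $k=1$ is immediate: an $[n,1,d]_q$ code is spanned by a single codeword of weight $d$, which occupies at least $d$ of the $n$ coordinates, so $n\geq d=\lceil d/q^{0}\rceil$. For the inductive step, fix an $[n,k,d]_q$ code $\C$ with $k\geq 2$ and choose $\mathbf{c}\in\C$ of minimum weight $d$. Since a monomial transformation of the coordinates (a permutation followed by scaling of individual coordinates by nonzero scalars) preserves $n$, $k$ and every Hamming weight, we may assume that $\mathbf{c}$ is the all-ones vector on its support and that this support is $\{1,\dots,d\}$. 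Let $\C'\subseteq\F_q^{n-d}$ be the residual code obtained from $\C$ by deleting the first $d$ coordinates.

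Two facts about $\C'$ are required. First, $\dim\C'=k-1$. The puncturing map $\C\to\C'$ is $\F_q$-linear and its kernel is the set of codewords of $\C$ supported inside $\{1,\dots,d\}$; if $\mathbf{c}'$ is such a codeword with $c'_i\neq 0$ for some $i\leq d$, then $\mathbf{c}-(c_i/c'_i)\mathbf{c}'\in\C$ vanishes at coordinate $i$ and at all coordinates $>d$, hence has weight at most $d-1$ and must be $\mathbf{0}$, so $\mathbf{c}'\in\F_q\mathbf{c}$; thus the kernel is exactly the line $\F_q\mathbf{c}$ and $\dim\C'=\dim\C-1$. Second, and this is the crux of the argument, the minimum distance $d'$ of $\C'$ satisfies $d'\geq\lceil d/q\rceil$; I would prove this by an averaging argument. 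Given any nonzero $\bar{\mathbf{x}}\in\C'$, lift it to $\mathbf{x}=(x_1,\dots,x_n)\in\C$ with residual $\bar{\mathbf{x}}=(x_{d+1},\dots,x_n)$, and for each $\lambda\in\F_q$ consider $\mathbf{x}-\lambda\mathbf{c}\in\C$. This vector agrees with $\bar{\mathbf{x}}$ off the support of $\mathbf{c}$, so it is nonzero and $\mathrm{wt}(\mathbf{x}-\lambda\mathbf{c})\geq d$, which forces the number of indices $i\leq d$ with $x_i\neq\lambda$ to be at least $d-\mathrm{wt}(\bar{\mathbf{x}})$. Summing over the $q$ choices of $\lambda$ and using $\sum_{\lambda\in\F_q}\#\{\,i\leq d:x_i=\lambda\,\}=d$, the left-hand total is $(q-1)d$, whence $(q-1)d\geq q\bigl(d-\mathrm{wt}(\bar{\mathbf{x}})\bigr)$, i.e.\ $\mathrm{wt}(\bar{\mathbf{x}})\geq d/q$, and integrality of weights gives $\mathrm{wt}(\bar{\mathbf{x}})\geq\lceil d/q\rceil$. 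This averaging estimate is the only genuinely delicate step; everything else is routine bookkeeping.

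Granting these two facts, $\C'$ is an $[n-d,\,k-1,\,d']_q$ code with $1\leq\lceil d/q\rceil\leq d'$, so the induction hypothesis applies, and since $r\mapsto\lceil r/q^{i}\rceil$ is nondecreasing,
\[
n-d\;\geq\;\sum_{i=0}^{k-2}\Bigl\lceil\frac{d'}{q^{i}}\Bigr\rceil\;\geq\;\sum_{i=0}^{k-2}\Bigl\lceil\frac{\lceil d/q\rceil}{q^{i}}\Bigr\rceil .
\]
Finally I would use the elementary identity $\bigl\lceil\lceil d/q\rceil/q^{i}\bigr\rceil=\lceil d/q^{i+1}\rceil$, valid for all integers $d\geq 1$ and $i\geq 0$, to rewrite the last sum as $\sum_{i=0}^{k-2}\lceil d/q^{i+1}\rceil=\sum_{j=1}^{k-1}\lceil d/q^{j}\rceil$; adding $d=\lceil d/q^{0}\rceil$ to both sides of $n-d\geq\sum_{j=1}^{k-1}\lceil d/q^{j}\rceil$ yields $n\geq\sum_{j=0}^{k-1}\lceil d/q^{j}\rceil$, completing the induction.
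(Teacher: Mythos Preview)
Your proof is correct and is exactly the classical residual-code induction (the argument given in Ling--Xing, Theorem~5.7.4, which the paper cites). The paper itself does not prove this lemma but merely quotes it from the textbook, so there is nothing further to compare.
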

If a linear code $\C$ meets {its Griesmer's bound}, we also call it \emph{optimal}.

\section{Two classical propagation rules}\label{sec.Propagations rules}

In this section, we recall two classical propagation rules based on two given linear codes and 
determine the dimension of the hull of the linear codes resulting from them. 
Throughout this section, we always prove all the results in the Hermitian 
case when we take into account both the Euclidean inner product and the Hermitian inner product. 
Indeed one can also prove the same results for the Euclidean case, and thus, we omit those similar proofs. 

\subsection{The direct sum construction}	
\begin{theorem}\label{th.propagation1.direct sum}
	Let $\C_i$ be an $[n_i,k_i,d_i]_{q}$ linear code and $G_i$ be a generator matrix of $\C_i$ for $i=1,\ 2$. 
	The direct sum of $\C_1$ and $\C_2$ is defined by 
	$
	\C_1\oplus \C_2=\{(\mathbf{c}_1,\mathbf{c}_2):\ \mathbf{c}_1\in \C_1,\ \mathbf{c}_2\in \C_2\}.
$
	Then the following statements hold. 
	\begin{enumerate}
		\item  $\C_1\oplus \C_2$ is an $[n_1+n_2,k_1+k_2,\min\{d_1,d_2\}]_{q}$ linear code with a generator matrix 
		$$G=\left( \begin{array}{cc}
		G_1 & O_{k_1\times n_2} \\
		O_{k_2\times n_1} & G_2 \\
		\end{array} \right).$$
		
		\item  For the Euclidean case, suppose that $\dim(\Hull_E(\C_1))=l_1$ and $\dim(\Hull_E(\C_2))=l_2$. Then 
		\begin{equation}\label{eq.direct sum EHull}
		\dim(\Hull_E(\C_1\oplus \C_2))=l_1+l_2.
		\end{equation} 
		
		\item  For the Hermitian case, suppose that $\dim(\Hull_H(\C_1))=l_1$ and $\dim(\Hull_H(\C_2))=l_2$. Then 
		\begin{equation}\label{eq.direct sum HHull}
		\dim(\Hull_H(\C_1\oplus \C_2))=l_1+l_2.
		\end{equation}
	\end{enumerate}
\end{theorem}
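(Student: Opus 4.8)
The plan is to prove parts 2) and 3) together by reducing the hull computation to a rank computation via Lemma~\ref{lem.Hermitian hulls}, and then exploiting the block-diagonal structure of the generator matrix $G$ of $\C_1\oplus\C_2$. Since we are instructed to write the Hermitian case (part 3)), I would first record that $G=\begin{pmatrix} G_1 & O_{k_1\times n_2}\\ O_{k_2\times n_1} & G_2\end{pmatrix}$ is indeed a generator matrix of $\C_1\oplus\C_2$, as established in part 1). Then I would compute the matrix $GG^\dagger$ directly from this block form.

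The key observation is that the conjugate transpose respects the block structure: $G^\dagger=\begin{pmatrix} G_1^\dagger & O_{n_1\times k_2}\\ O_{n_2\times k_1} & G_2^\dagger\end{pmatrix}$, and hence
\[
GG^\dagger=\begin{pmatrix} G_1G_1^\dagger & O_{k_1\times k_2}\\ O_{k_2\times k_1} & G_2G_2^\dagger\end{pmatrix},
\]
because the off-diagonal products involve a zero block in every term. Since rank is additive over a block-diagonal matrix, $\rank(GG^\dagger)=\rank(G_1G_1^\dagger)+\rank(G_2G_2^\dagger)$. Now apply Lemma~\ref{lem.Hermitian hulls} to $\C_1\oplus\C_2$ (which has dimension $k_1+k_2$) and to each $\C_i$ separately: $\rank(GG^\dagger)=(k_1+k_2)-\dim(\Hull_H(\C_1\oplus\C_2))$, while $\rank(G_iG_i^\dagger)=k_i-\dim(\Hull_H(\C_i))=k_i-l_i$. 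Substituting and solving for $\dim(\Hull_H(\C_1\oplus\C_2))$ gives $(k_1+k_2)-\rank(GG^\dagger)=(k_1-l_1)+(k_2-l_2)$ subtracted from $k_1+k_2$, which simplifies to $l_1+l_2$, as desired. The Euclidean case in part 2) is identical with $G_i^\dagger$ replaced by $G_i^T$ and Lemma~\ref{lem.Hermitian hulls}(1) in place of (2).

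Honestly, I do not expect a serious obstacle here — the result is essentially a bookkeeping consequence of block-diagonality plus the rank formula for hulls. The only point that requires a moment's care is verifying that the cross terms in the product $GG^\dagger$ vanish: each entry of an off-diagonal block of $GG^\dagger$ is a sum $\sum_j (G_1)_{\cdot j}\cdot 0 + \sum_j 0\cdot (G_2^\dagger)_{\cdot j}$, so it is genuinely zero, not merely "small". One could alternatively note that this theorem is a special case of Lemma~\ref{lem.MP.HULL.DIM} with $m=2$, $A=I_2$ (so $AA^\dagger=\diag(1,1)$ is diagonal with nonzero entries) applied after padding the codes to a common length — but the direct argument via Lemma~\ref{lem.Hermitian hulls} is cleaner and avoids the length-matching technicality, so that is the route I would take.
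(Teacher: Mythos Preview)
Your proposal is correct and follows essentially the same approach as the paper: compute $GG^\dagger$ explicitly as a block-diagonal matrix, use additivity of rank over blocks, and apply Lemma~\ref{lem.Hermitian hulls} to convert ranks into hull dimensions. The paper presents exactly this computation (and likewise defers the Euclidean case to the Hermitian one), so there is nothing to add.
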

\begin{proof}
	{ 1)} This is a well-known result. 

	{ 2)} Similar to the approach in {  3)} below, {  2)} holds. 
		 
	{ 3)} Since $\dim(\Hull_H(\C_1))=l_1$ and $\dim(\Hull_H(\C_2))=l_2$, by Lemma \ref{lem.Hermitian hulls}, we have 
	$\rank(G_1G_1^\dagger)=k_1-l_1$ and $\rank(G_2G_2^\dagger)=k_2-l_2$. Moreover, it follows from  
	\begin{equation*}\begin{split}
	GG^\dagger & = \left( \begin{array}{cc}
	G_1 & O_{k_1\times n_2} \\
	O_{k_2\times n_1} & G_2 \\
	\end{array} \right) 
	\left( \begin{array}{cc}
	G_1^\dagger & O_{n_1\times k_2} \\
	O_{n_2\times k_1} & G_2^\dagger \\
	\end{array} \right)\\
	& = \left( \begin{array}{cc}
	G_1G_1^\dagger & O_{k_1\times k_2} \\
	O_{k_2\times k_1} & G_2G_2^\dagger \\
	\end{array} \right)
	\end{split}\end{equation*}
	that $\rank(GG^\dagger)=\rank(G_1G_1^\dagger)+\rank(G_2G_2^\dagger)=k_1+k_2-l_1-l_2$, i.e., 
	$$\dim(\Hull_H(\C_1\oplus \C_2))=k_1+k_2-\rank(GG^\dagger)=l_1+l_2.$$
\end{proof}
\begin{remark}\label{rem1.direct sum}
	When the linear codes $\C_1$ and $\C_2$ have the same length, 
	the direct sum construction associated to $\C_1$ and $\C_2$ is exactly the matrix-product code 
	$[\C_1,\C_2]\cdot A$, where $A=\left(\begin{array}[]{cc}
	1 & 0 \\
	0 & 1 \\
	\end{array}\right)$. In this case, $AA^T=AA^\dagger=I_2$ and the dimension of the hull of $\C_1\oplus \C_2$ follows from 
	from Lemma \ref{lem.MP.HULL.DIM}. 
	However, in Theorem \ref{th.propagation1.direct sum}, we can also consider the case where the 
	lengths of $\C_1$ and $\C_2$ are different, thus  
	it is a generalization of Lemma \ref{lem.MP.HULL.DIM}. 
\end{remark}
The following results can be deduced from Theorem \ref{th.propagation1.direct sum}.

\begin{corollary}\label{coro1.th.propagation1}
	With the same notation as in Theorem \ref{th.propagation1.direct sum}, the following statements hold. 
	\begin{enumerate}
		\item  $\C_1\oplus \C_2$ is an Euclidean (resp. Hermitian) self-orthogonal code if and only if 
		both $\C_1$ and $\C_2$ are Euclidean (resp. Hermitian) self-orthogonal codes. 
		\item  $\C_1\oplus \C_2$ is an Euclidean (resp. Hermitian) LCD code if and only if 
		both $\C_1$ and $\C_2$ are Euclidean (resp. Hermitian) LCD codes.  
		\item  $\C_1\oplus \C_2$ is an Euclidean (resp. Hermitian) self-dual code if and only if 
		both $\C_1$ and $\C_2$ are Euclidean (resp. Hermitian) self-dual codes.
	\end{enumerate}
\end{corollary}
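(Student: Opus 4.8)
The plan is to derive Corollary~\ref{coro1.th.propagation1} as a direct consequence of the hull-dimension formulas in Theorem~\ref{th.propagation1.direct sum}, together with the characterizations of self-orthogonal, LCD, and self-dual codes in terms of hull dimensions that were collected in Section~\ref{sec1} and Proposition~\ref{prop1}. I will treat the Hermitian case and remark that the Euclidean case is identical. Set $k=k_1+k_2$ and $n=n_1+n_2$, write $l_i=\dim(\Hull_H(\C_i))$, and recall from Equation~(\ref{eq.direct sum HHull}) that $\dim(\Hull_H(\C_1\oplus\C_2))=l_1+l_2$.

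First I would handle the self-orthogonal statement. Since $\Hull_H(\C_i)\subseteq\C_i$, we always have $0\le l_i\le k_i$, and $\C_i$ is Hermitian self-orthogonal if and only if $l_i=k_i$. Now $\C_1\oplus\C_2$ has dimension $k_1+k_2$, so it is self-orthogonal if and only if $l_1+l_2=k_1+k_2$; because $l_1\le k_1$ and $l_2\le k_2$, this forces $l_1=k_1$ and $l_2=k_2$, i.e. both $\C_1$ and $\C_2$ are self-orthogonal, and conversely. For the LCD statement, $\C_i$ is Hermitian LCD if and only if $l_i=0$; then $\dim(\Hull_H(\C_1\oplus\C_2))=l_1+l_2=0$ if and only if $l_1=l_2=0$ (both terms are nonnegative), which is exactly the condition that both codes are LCD. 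Alternatively one can argue via Proposition~\ref{prop1}(2) using the block-diagonal form of $GG^\dagger$ computed in the proof of Theorem~\ref{th.propagation1.direct sum}: $GG^\dagger=\diag(G_1G_1^\dagger,\,G_2G_2^\dagger)$ is nonsingular iff both diagonal blocks are, and is the zero matrix iff both blocks vanish, which gives the LCD and self-orthogonal equivalences simultaneously.

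For the self-dual statement I would combine the self-orthogonality equivalence just proved with a length count: $\C_1\oplus\C_2$ is Hermitian self-dual iff it is Hermitian self-orthogonal and $n_1+n_2=2(k_1+k_2)$, by the alternative definition of self-dual codes recalled in Section~\ref{sec1}. By part~(1), self-orthogonality of $\C_1\oplus\C_2$ is equivalent to $\C_i$ being self-orthogonal for $i=1,2$, which in particular gives $2k_i\le n_i$. Hence $n_1+n_2\ge 2k_1+2k_2$, with equality iff $n_1=2k_1$ and $n_2=2k_2$, i.e. iff each $\C_i$ is self-dual. Combining, $\C_1\oplus\C_2$ is self-dual iff both $\C_1$ and $\C_2$ are self-dual, and the Euclidean case follows verbatim by replacing $\bot_H$ with $\bot_E$ and $G_i^\dagger$ with $G_i^T$.

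The only mild subtlety — and the one place worth writing out carefully rather than waving at — is the direction of the inequalities in the self-orthogonal and self-dual arguments: the equivalences rely on the fact that a sum of two quantities each bounded above (resp. below) by a fixed value attains the extreme total only when each summand is extremal. This is elementary, but it is the crux that turns the numerical hull-dimension identity of Theorem~\ref{th.propagation1.direct sum} into an "if and only if" for each of the three code families, so I would state it explicitly. No genuine obstacle is expected; the corollary is essentially bookkeeping on top of the theorem.
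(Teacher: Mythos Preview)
Your proposal is correct and follows essentially the same argument as the paper: deduce each equivalence from the hull-dimension identity $\dim(\Hull_H(\C_1\oplus\C_2))=l_1+l_2$ together with the bounds $0\le l_i\le k_i$, and for self-duality combine the self-orthogonality equivalence with the length constraint $n_1+n_2=2(k_1+k_2)$ and $2k_i\le n_i$. The only addition you make is the alternative block-diagonal $GG^\dagger$ argument via Proposition~\ref{prop1}, which the paper does not spell out but is entirely consistent with its computation.
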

\begin{proof}
	{ 1)} By Theorem \ref{th.propagation1.direct sum}.{3)}, we can conclude that $\C_1\oplus \C_2$ 
	is a Hermitian self-orthogonal code if and only if $k_1+k_2=l_1+l_2$. Since $0\leq l_1\leq k_1$ and 
	$0\leq l_2\leq k_2$, we have $k_1+k_2=l_1+l_2$ if and only if $l_1=k_1$ and $l_2=k_2$, which 
	is equivalent to both $\C_1$ and $\C_2$ being Hermitian self-orthogonal codes. 
	Therefore, the result { 1)} holds. 
	
	{ 2)} Similar to {  1)} above, we can conclude that $\C_1\oplus \C_2$ is a Hermitian LCD code if and only if $l_1+l_2=0$, 
	if and only if $l_1=l_2=0$, which is equivalent to both $\C_1$ and $\C_2$ being Hermitian LCD codes. 
	Therefore, the result {  2)} holds. 
	
	{  3)} We note that $\C_1\oplus \C_2$ is a Hermitian self-dual code if and only if $\C_1\oplus \C_2$ is a Hermitian 
	self-orthogonal code and $n_1+n_2=2k_1+2k_2$. By { 1)} above, $\C_1\oplus \C_2$ is a Hermitian self-orthogonal code 
	if and only if both $\C_1$ and $\C_2$ are Hermitian self-orthogonal codes, which yield $2k_1\leq n_1$ and $2k_2\leq n_2$, 
	respectively. It follows that $n_1+n_2=2k_1+2k_2$ if and if only $n_1=2k_1$ and $n_2=2k_2$. 
	It follows that $\C_1\oplus \C_2$ is a Hermitian self-dual code if and only if both $\C_1$ and $\C_2$ are Hermitian self-dual 
	codes. Therefore, the result {  3)} holds. 
\end{proof}
\begin{remark}\label{rem2.propagation1}
	In \cite[Theorem 3.1]{LS2022}, the authors proved any linear code $\C$ over $\F_{q}$ can be seen as the 
	direct sum of an Euclidean (resp. Hermitian) self-orthogonal code and an Euclidean (resp. Hermitian) LCD code. 
	In Corollary \ref{coro1.th.propagation1}, we give some criteria to identify whether the direct sum of two linear 
	codes is an Euclidean (resp. Hermitian) self-orthogonal code, an Euclidean (resp. Hermitian) LCD code or 
	an Euclidean (resp. Hermitian) self-dual code. 
\end{remark}

\subsection{The $(\mathbf{u},\mathbf{u+v})$-construction}

\begin{theorem}\label{th.propagation1.(u,u+v)-construction}
	{ Let $\C_i$ be an $[n,k_i,d_i]_q$ linear code and $G_i$ be a generator matrix of $\C_i$ for $i=1,\ 2$, 
		and $\C$ the linear code defined by }
	$
	\C=\{(\mathbf{u},\mathbf{u+v}):\ \mathbf{u}\in \C_1,\ \mathbf{v}\in \C_2\}.
	$
	Then the following statements hold. 
	\begin{enumerate}
		\item  $\C$ is a $[2n,k_1+k_2,\min\{2d_1,d_2\}]_q$ linear code with a generator matrix 
		$$G=\left( \begin{array}{cc}
		G_1 & G_1 \\
		O_{k_2\times n} & G_2 \\
		\end{array} \right).$$
		
		\item  For the Euclidean case, suppose that $\dim(\Hull_E(\C_1))=l_1$ and $\dim(\Hull_E(\C_2))=l_2$. 
		If $q=2^m$ or $\C_1$ is an Euclidean self-orthogonal code, then 
		\begin{equation}\label{eq.uv EHull.1}
		2\dim(\C_1\cap \C_2^{\bot_E})+l_2-k_1\leq \dim(\Hull_E(\C))\leq 2\dim(\C_1\cap \C_2^{\bot_E})+k_2-k_1.
		\end{equation}
		Moreover, if $\C_2$ is an Euclidean self-orthogonal code, then 
		\begin{equation}\label{eq.uv EHull.2}
		\dim(\Hull_E(\C))=2\dim(\C_1\cap \C_2^{\bot_E})+k_2-k_1.
		\end{equation}

		\item  For the Hermitian case, suppose that $\dim(\Hull_H(\C_1))=l_1$ and $\dim(\Hull_H(\C_2))=l_2$. 
		If $q=2^m$ or $\C_1$ is a Hermitian self-orthogonal code, then 
		\begin{equation}\label{eq.uv HHull.1}
		2\dim(\C_1\cap \C_2^{\bot_H})+l_2-k_1\leq \dim(\Hull_H(\C))\leq 2\dim(\C_1\cap \C_2^{\bot_H})+k_2-k_1.
		\end{equation}
		
		Moreover, if $\C_2$ is a Hermitian self-orthogonal code, then 
		\begin{equation}\label{eq.uv HHull.2}
		\dim(\Hull_H(\C))=2\dim(\C_1\cap \C_2^{\bot_H})+k_2-k_1.
		\end{equation}
	\end{enumerate}
\end{theorem}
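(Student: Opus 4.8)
## Proof proposal

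The plan is to mimic the strategy used for the direct sum (Theorem~\ref{th.propagation1.direct sum}) but now the relevant Gram matrix $GG^\dagger$ will not be block diagonal, so I will need to compute its rank carefully. First I would write down, with $G=\left(\begin{array}{cc} G_1 & G_1 \\ O_{k_2\times n} & G_2 \end{array}\right)$,
\begin{equation*}
GG^\dagger=\left(\begin{array}{cc}
2\,G_1G_1^\dagger & G_1G_2^\dagger \\
G_2G_1^\dagger & G_2G_2^\dagger
\end{array}\right),
\end{equation*}
and then, by Lemma~\ref{lem.Hermitian hulls}, $\dim(\Hull_H(\C))=(k_1+k_2)-\rank(GG^\dagger)$. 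The whole proof reduces to bounding (and in the self-orthogonal case, computing exactly) $\rank(GG^\dagger)$. Note that $\rank(G_1G_2^\dagger)=\rank(G_2G_1^\dagger)=k_1-\dim(\C_1\cap\C_2^{\bot_H})$ by Corollary~\ref{coro.dim.linear l-intersection}; write $r:=\dim(\C_1\cap\C_2^{\bot_H})$, so these off-diagonal blocks have rank $k_1-r$.

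Next I would exploit the hypothesis. If $q=2^m$, then $2\,G_1G_1^\dagger=O$, and if $\C_1$ is Hermitian self-orthogonal then $G_1G_1^\dagger=O$ by Proposition~\ref{prop1}; in either case the top-left block vanishes and
\begin{equation*}
GG^\dagger=\left(\begin{array}{cc}
O & G_1G_2^\dagger \\
G_2G_1^\dagger & G_2G_2^\dagger
\end{array}\right).
\end{equation*}
For such a block matrix one has the elementary rank bounds $\rank(G_2G_1^\dagger)+\rank(G_1G_2^\dagger)-\text{(overlap terms)} \le \rank(GG^\dagger)\le \rank\begin{pmatrix}G_1G_2^\dagger\end{pmatrix}+\rank\begin{pmatrix}G_2G_1^\dagger & G_2G_2^\dagger\end{pmatrix}$. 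More precisely I would argue: the last $k_2$ columns of $GG^\dagger$ span a space containing the column space of $\begin{pmatrix}G_1G_2^\dagger\\ G_2G_2^\dagger\end{pmatrix}$, whose rank is $\rank(G_2^\dagger)=k_2$ minus $\dim$ of a suitable kernel; and the bottom block row $\begin{pmatrix}G_2G_1^\dagger & G_2G_2^\dagger\end{pmatrix}=G_2\begin{pmatrix}G_1^\dagger & G_2^\dagger\end{pmatrix}$ has rank $k_2$ since $\begin{pmatrix}G_1^\dagger & G_2^\dagger\end{pmatrix}$ is right non-singular (it contains $G_i^\dagger$ of full column rank, indeed $\begin{pmatrix}G_1 & G_1\\ O & G_2\end{pmatrix}$ already shows $\rank G = k_1+k_2$). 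Combining: $\rank(GG^\dagger)\ge k_2 + (k_1-r) - (\text{correction})$ and $\rank(GG^\dagger)\le k_2+(k_1-r)$, where the correction is exactly the extra drop coming from $\C_2$ not being self-orthogonal, controlled by $l_2 = \dim\Hull_H(\C_2) = k_2-\rank(G_2G_2^\dagger)$. Translating via $\dim\Hull_H(\C)=k_1+k_2-\rank(GG^\dagger)$ yields precisely
\begin{equation*}
2r+l_2-k_1\le \dim(\Hull_H(\C))\le 2r+k_2-k_1,
\end{equation*}
the claimed inequality~(\ref{eq.uv HHull.1}).

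Finally, for the equality~(\ref{eq.uv HHull.2}), assume in addition $\C_2$ Hermitian self-orthogonal, so $G_2G_2^\dagger=O$ and
\begin{equation*}
GG^\dagger=\left(\begin{array}{cc}
O & G_1G_2^\dagger \\
G_2G_1^\dagger & O
\end{array}\right),
\end{equation*}
whose rank is simply $\rank(G_1G_2^\dagger)+\rank(G_2G_1^\dagger)=2(k_1-r)$ since the two nonzero blocks occupy disjoint rows and disjoint columns. Then $\dim(\Hull_H(\C))=k_1+k_2-2(k_1-r)=2r+k_2-k_1$, as required; note this also forces $l_2=k_2$ so it is consistent with the lower bound. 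The Euclidean statements~(\ref{eq.uv EHull.1}),~(\ref{eq.uv EHull.2}) follow by the same computation with $GG^T$ in place of $GG^\dagger$ and $2G_1G_1^T$ in place of $2G_1G_1^\dagger$.

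I expect the main obstacle to be the middle case, where the top-left block is zero but $G_2G_2^\dagger$ need not be: getting the \emph{exact} two-sided bound (rather than a weaker one) requires a clean rank argument for a $2\times 2$ block matrix with one zero block, keeping precise track of how the kernel of $G_2G_2^\dagger$ interacts with the image of $G_2G_1^\dagger$. I would handle this by row/column reducing: using that $\begin{pmatrix}G_1^\dagger & G_2^\dagger\end{pmatrix}$ is right non-singular, I can post-multiply $GG^\dagger$ by an invertible matrix to bring it to a form $\begin{pmatrix} O & G_1G_2^\dagger \\ G_2 & O\end{pmatrix}$-type shape (full rank $k_2$ in the bottom rows) and then the only remaining rank contribution from the top rows is $\rank(G_1G_2^\dagger)$ reduced by at most $k_2-\rank(G_2G_2^\dagger)=l_2$; this gives both inequalities simultaneously.
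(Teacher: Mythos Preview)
Your overall strategy matches the paper exactly: compute $GG^\dagger$, use the hypothesis ($q=2^m$ or $\C_1$ self-orthogonal) to kill the top-left block, and then bound the rank of
\[
M=\begin{pmatrix} O_{k_1} & G_1G_2^\dagger \\ G_2G_1^\dagger & G_2G_2^\dagger \end{pmatrix}.
\]
Your treatment of the equality case (when $G_2G_2^\dagger=O$) is also correct.

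The gap is in your rank argument for the inequality. You claim the bottom block row $(G_2G_1^\dagger\ \ G_2G_2^\dagger)=G_2\,(G_1^\dagger\ \ G_2^\dagger)$ has rank $k_2$ because ``$(G_1^\dagger\ \ G_2^\dagger)$ is right non-singular''. This is false in general: $(G_1^\dagger\ \ G_2^\dagger)$ is an $n\times(k_1+k_2)$ matrix and there is no reason for it to have rank $n$; worse, the bottom row can have rank strictly less than $k_2$ (e.g.\ take $\C_1=\C_2$ Hermitian self-orthogonal, so both $G_2G_1^\dagger$ and $G_2G_2^\dagger$ vanish and the bottom row has rank $0$). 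Consequently your intermediate bounds $k_2+(k_1-r)-(\text{correction})\le \rank(GG^\dagger)\le k_2+(k_1-r)$ do not hold and in any case would not translate into the stated inequality (\ref{eq.uv HHull.1}); they give a different pair of numbers.

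The paper sidesteps this entirely with a one-line block-rank estimate that you should use instead: for any block matrix $\begin{pmatrix} O & A\\ B & C\end{pmatrix}$ one has
\[
\rank(A)+\rank(B)\ \le\ \rank\begin{pmatrix} O & A\\ B & C\end{pmatrix}\ \le\ \rank(A)+\rank(B)+\rank(C).
\]
(The lower bound follows since the top rows contribute a subspace of the row space living in $\{0\}^{k_1}\times\F_q^{k_2}$ of dimension $\rank(A)$, while the bottom rows project onto the first $k_1$ coordinates with image of dimension $\rank(B)$; the upper bound is subadditivity.) With $\rank(A)=\rank(B)=k_1-\dim(\C_1\cap\C_2^{\bot_H})$ from Corollary~\ref{coro.dim.linear l-intersection} and $\rank(C)=k_2-l_2$ from Lemma~\ref{lem.Hermitian hulls}, this gives exactly
\[
2(k_1-r)\ \le\ \rank(GG^\dagger)\ \le\ 2(k_1-r)+k_2-l_2,
\]
and hence (\ref{eq.uv HHull.1}) after $\dim(\Hull_H(\C))=k_1+k_2-\rank(GG^\dagger)$. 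No appeal to right non-singularity or to post-multiplication by invertible matrices is needed.
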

\begin{proof}
	{ 1)} This is a well-known result. 

	{ 2)} Similar to the approach in {  3)} below, {  2)} holds.   
	
	{ 3)} Since $\dim(\Hull_H(\C_1))=l_1$ and $\dim(\Hull_H(\C_2))=l_2$, by Lemma \ref{lem.Hermitian hulls}, we have 
	$\rank(G_1G_1^\dagger)=k_1-l_1$ and $\rank(G_2G_2^\dagger)=k_2-l_2$. Moreover, we have  
\begin{equation*}
	\begin{split}
	GG^\dagger & = \left( \begin{array}{cc}
	G_1 & G_1 \\
	O_{k_2\times n} & G_2 \\
	\end{array} \right) 
	\left( \begin{array}{cc}
	G_1^\dagger & O_{n\times k_2} \\
	G_1^\dagger & G_2^\dagger \\
	\end{array} \right)\\
	& = \left( \begin{array}{cc}
	2G_1G_1^\dagger & G_1G_2^\dagger \\
	G_2G_1^\dagger & G_2G_2^\dagger \\
	\end{array} \right).
	\end{split}
\end{equation*}
	We now discuss $\rank(GG^\dagger)$. When $q=2^m$ or $\C_1$ is a Hermitian self-orthogonal code, 
	it follows that 
	$$\rank(G_1G_2^\dagger)+\rank(G_2G_1^\dagger)\leq 
	\rank(GG^\dagger)\leq \rank(G_1G_2^\dagger)+\rank(G_2G_1^\dagger)+\rank(G_2G_2^\dagger)$$
	from 
\begin{equation*}
	\begin{split}
	GG^\dagger & = \left( \begin{array}{cc}
	O_{k_1\times k_1} & G_1G_2^\dagger \\
	G_2G_1^\dagger & G_2G_2^\dagger \\
	\end{array} \right). 
	\end{split}
\end{equation*}
	We notice that $(G_1G_2^\dagger)^\dagger=G_2G_1^\dagger$, and hence, 
	from Corollary~\ref{coro.dim.linear l-intersection}.2), we can derive that 
	$\rank(G_1G_2^\dagger)+\rank(G_2G_1^\dagger)=2\rank(G_1G_2^\dagger)=2k_1-2\dim(\C_1\cap \C_2^{\bot_H})$ and 
	$\rank(G_1G_2^\dagger)+\rank(G_2G_1^\dagger)+\rank(G_2G_2^\dagger)=2k_1+k_2-2\dim(\C_1\cap \C_2^{\bot_H})-l_2$. 
	In Summary, in this case, we can conclude that 
	$2k_1-2\dim(\C_1\cap \C_2^{\bot_H})\leq \rank(GG^\dagger)\leq 2k_1+k_2-2\dim(\C_1\cap \C_2^{\bot_H})-l_2$, i.e., 
$$
	2\dim(\C_1\cap \C_2^{\bot_H})+l_2-k_1\leq \dim(\Hull_H(\C))\leq 2\dim(\C_1\cap \C_2^{\bot_H})+k_2-k_1, 
$$
	which implies that Equation~(\ref{eq.uv HHull.1}) holds. 
	Moreover, when $\C_2$ is a Hermitian self-orthogonal code, we have $l_2=k_2$. 
	Therefore, Equation~(\ref{eq.uv HHull.2}) can be deduced from Equation ~(\ref{eq.uv HHull.1}) and 
	this completes the proof. 
\end{proof}

In next theorem, we will take $\C_1$ and $\C_2$ as two special linear codes over $\F_2${, namely, $\C_1$ will be an even-like code 
	and $\C_2$ an $[n,1,n]_2$ repetition code.} 

\begin{theorem}\label{th.propagation2.(u,u+v)-construction}
	Let $\C_2$ be an $[n,1,n]_2$ repetition code generated by 
	$\mathbf{1_n}=(\underbrace{1\ 1\ \cdots\ 1}_{n})$ and other notations be the same as 
	Theorem \ref{th.propagation1.(u,u+v)-construction}. 
	If $\C_1$ is an even-like code, then $\C$ is a $[2n,k_1+1,\min\{2d_1,n\}]_2$ code with  
	\begin{equation}\label{eq.EHull.3}
	\dim(\Hull_E(\C))=\left\{\begin{array}[]{ll}
	k_1, & {\rm if\ n\ is\ odd}, \\
	k_1+1, & {\rm if\ n\ is\ even}. \\
	\end{array}
	\right.
	\end{equation}
\end{theorem}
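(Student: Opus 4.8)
\emph{Proof proposal.} The plan is to read off the parameters of $\C$ from Theorem~\ref{th.propagation1.(u,u+v)-construction}.1 with $\C_2$ the $[n,1,n]_2$ repetition code, and then to pin down $\dim(\Hull_E(\C))$ by computing $\rank(GG^T)$ directly and invoking Lemma~\ref{lem.Hermitian hulls}.1, rather than relying solely on the bounds of Theorem~\ref{th.propagation1.(u,u+v)-construction}.2, which turn out to be tight only when $n$ is even. First, since $G_2=\mathbf{1_n}$ and $k_2=1$, Theorem~\ref{th.propagation1.(u,u+v)-construction}.1 gives that $\C$ is a $[2n,k_1+1,\min\{2d_1,n\}]_2$ code with generator matrix
$$G=\left(\begin{array}{cc} G_1 & G_1 \\ O_{1\times n} & \mathbf{1_n} \end{array}\right).$$

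Next I would compute $GG^T$ blockwise. Over $\F_2$ we have $2G_1G_1^T=O_{k_1}$, so the top-left block vanishes. The top-right block is $G_1\mathbf{1_n}^T$, whose $i$-th entry equals the sum of the coordinates of the $i$-th row of $G_1$; since that row lies in the even-like code $\C_1$, this sum is $0$, whence $G_1\mathbf{1_n}^T=O_{k_1\times 1}$ and, by transposition, $\mathbf{1_n}G_1^T=O_{1\times k_1}$. The bottom-right block is the scalar $\mathbf{1_n}\mathbf{1_n}^T=n\bmod 2$. Hence
$$GG^T=\left(\begin{array}{cc} O_{k_1} & O_{k_1\times 1} \\ O_{1\times k_1} & n\bmod 2 \end{array}\right),$$
so $\rank(GG^T)=0$ if $n$ is even and $\rank(GG^T)=1$ if $n$ is odd. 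Substituting into $\dim(\Hull_E(\C))=(k_1+1)-\rank(GG^T)$ from Lemma~\ref{lem.Hermitian hulls}.1 yields exactly Equation~(\ref{eq.EHull.3}).

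The only genuinely delicate point — and the sole place the hypothesis ``$\C_1$ even-like'' is used — is the vanishing of the off-diagonal blocks, i.e.\ $\C_1\subseteq\C_2^{\bot_E}$ (the even-weight code). If one prefers to argue through Theorem~\ref{th.propagation1.(u,u+v)-construction}.2, note $\C_1\cap\C_2^{\bot_E}=\C_1$ has dimension $k_1$, so the bounds read $k_1+l_2\le\dim(\Hull_E(\C))\le k_1+1$. When $n$ is even, $\C_2$ is Euclidean self-orthogonal, so Equation~(\ref{eq.uv EHull.2}) applies and gives $k_1+1$ at once; when $n$ is odd, $l_2=0$ and the bounds leave a gap, which one closes by observing that $(\mathbf{0},\mathbf{1_n})\in\C$ satisfies $((\mathbf{0},\mathbf{1_n}),(\mathbf{0},\mathbf{1_n}))_E=n\equiv 1\pmod 2$, so $\C$ is not self-orthogonal and thus $\dim(\Hull_E(\C))\le k_1$, forcing equality with the lower bound. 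I would present the direct $GG^T$ computation as the main argument since it dispatches both parities uniformly, and mention the second route only as a remark.
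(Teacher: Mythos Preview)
Your proposal is correct and follows essentially the same approach as the paper's own proof: apply Theorem~\ref{th.propagation1.(u,u+v)-construction}.1) to obtain the generator matrix $G$, compute $GG^T$ blockwise (using $q=2$ for the top-left block and the even-like hypothesis for the off-diagonal blocks), read off $\rank(GG^T)\in\{0,1\}$ according to the parity of $n$, and conclude via Lemma~\ref{lem.Hermitian hulls}.1). The alternative route you sketch at the end, through the bounds of Theorem~\ref{th.propagation1.(u,u+v)-construction}.2) together with the explicit non-self-orthogonal witness $(\mathbf{0},\mathbf{1_n})$ for odd $n$, does not appear in the paper but is a nice complementary observation.
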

\begin{proof}
	From Theorem \ref{th.propagation1.(u,u+v)-construction}.{1)}, under the given conditions, 
	$\C$ is an $[2n,k_1+1,\min\{2d_1,n\}]_2$ linear code with a generator matrix 
	$$G=\left( \begin{array}[]{cc}
	G_1 & G_1 \\
	O_{1\times n} & \mathbf{1_n} \\
	\end{array} \right).$$ 
	
	Moreover, we have  
	\begin{equation}\label{eq1.propogation2}
	\begin{split}
	GG^T & = \left( \begin{array}{cc}
	G_1 & G_1 \\
	O_{1\times n} & \mathbf{1_n} \\
	\end{array} \right) 
	\left( \begin{array}{cc}
	G_1^T & O_{n\times 1} \\
	G_1^T & \mathbf{1_n}^T \\
	\end{array} \right)\\
	& = \left( \begin{array}{cc}
	2G_1G_1^T & G_1\mathbf{1_n}^T \\
	\mathbf{1_n}G_1^T & \mathbf{1_n}\mathbf{1_n}^T \\
	\end{array} \right).
	\end{split}
	\end{equation}  
	Since $G_1$ is a generator matrix of $\C_1$, we can set 
	$$G_1=\left( \begin{array}{cccc}
	g_{11} & g_{12} & \cdots & g_{1n} \\
	g_{21} & g_{22} & \cdots & g_{2n} \\
	\vdots & \vdots & \cdots & \vdots \\ 
	g_{k_11} & g_{k_12} & \cdots & g_{k_1 n} \\
	\end{array}
	\right),$$
	which follows that 
\begin{equation*}
	\begin{split}
	\mathbf{1_n}G_1^T & = ( G_1\mathbf{1_n}^T)^T \\
	& = (\sum_{i=1}^n g_{1i},\sum_{i=1}^n g_{2i},\cdots,\sum_{i=1}^n g_{k_1i}) \\
	& = O_{1\times k_1} 
	\end{split}
\end{equation*}
	from the facts that $\C_1$ is an even-like code and $q=2$. Furthermore, we also have 
	$2G_1G_1^\dagger=O_{k_1\times k_1}$ and $\mathbf{1_n}\mathbf{1_n}^T=n$. Then, $GG^T$ 
	in Equation~(\ref{eq1.propogation2}) can be further written as 
	\begin{equation}\label{eq2.propogation2}
	\begin{split}
	GG^T & = \left( \begin{array}{cc}
	O_{k_1\times k_1} & O_{k_1\times 1} \\
	O_{1\times k_1} & n \\
	\end{array} \right),  
	\end{split}
	\end{equation}
	which implies that  
	$$\rank(GG^T)=\rank(n)=\left\{\begin{array}[]{ll}
	1, & {\rm if}\ $n$\ {\rm is\ odd}, \\
	0, & {\rm if}\ $n$\ {\rm is\ even}, \\
	\end{array}
	\right. $$
	and hence, { we can derive that 
$$
		\dim(\Hull_E(\C))=\left\{\begin{array}[]{ll}
		k_1, & {\rm if\ n\ is\ odd}, \\
		k_1+1, & {\rm if\ n\ is\ even}. \\
		\end{array}
		\right. .
$$
	}	
\end{proof}
Note that the code in the proof of the result above can also be seen as a particular instance of the propagation rule in \cite{Edgar2004}.

\begin{remark}\label{rem3.uv}
	Let $\C_i$ be an $[n,k_i,d_i]_q$ linear code for $i=1,\ 2$.  
	As we know, by taking $A=\left(\begin{array}[]{cc}
	1 & 1 \\
	0 & 1 \\
	\end{array}\right)$, the matrix-product code $\C=[\C_1,\C_2]\cdot A$ coincides the 
	$(\mathbf{u}, \mathbf{u}+\mathbf{v})$-construction studied in Theorem \ref{th.propagation1.(u,u+v)-construction}. 
	Notice that 
	$$AA^T=AA^\dagger=\left(\begin{array}[]{cc}
	1 & 1 \\
	0 & 1 \\
	\end{array}\right)\left(\begin{array}[]{cc}
	1 & 0 \\
	1 & 1 \\
	\end{array}\right)=\left(\begin{array}[]{cc}
	2 & 1 \\
	1 & 1 \\
	\end{array}\right)$$ is neither diagonal nor triangular, and thus, the results in 
	Theorems \ref{th.propagation1.(u,u+v)-construction} and \ref{th.propagation2.(u,u+v)-construction} 
	can not be obtained from Lemma \ref{lem.MP.HULL.DIM}. 
\end{remark}

\begin{corollary}\label{coro1.th.propagation2}
	With the same notation as in Theorem \ref{th.propagation1.(u,u+v)-construction},  
	if one of the following conditions is met: 
	\begin{enumerate}
		\item  $q=2^m$ and $\C_2$ is an Euclidean (resp. Hermitian) self-orthogonal code; 
		\item  both $\C_1$ and $\C_2$ are Euclidean (resp. Hermitian) self-orthogonal codes, 
	\end{enumerate} 
	then the following statements hold. 
	\begin{enumerate}
		\item  $\C$ is an Euclidean (resp. Hermitian) self-orthogonal code if and only if 
		$\C_1\subseteq \C_2^{\bot_E}$ (resp. $\C_1\subseteq \C_2^{\bot_H}$), 
		if and only if $\C_2\subseteq \C_1^{\bot_E}$ (resp. $\C_2\subseteq \C_1^{\bot_H}$). 
		
		\item  $\C$ is an Euclidean (resp. Hermitian) LCD code if and only if 
		$2\dim(\C_1\cap \C_2^{\bot_E})=k_1-k_2$ (resp. $2\dim(\C_1\cap \C_2^{\bot_H})=k_1-k_2$). 
		
		\item  $\C$ is an Euclidean (resp. Hermitian) self-dual code if and only if 
		$\C_1=\C_2^{\bot_E}$ (resp. $\C_1=\C_2^{\bot_H}$), if and only if $\C_2=\C_1^{\bot_E}$ 
		(resp. $\C_2=\C_1^{\bot_H}$). 
	\end{enumerate}
	\begin{proof}
		{ 1)} By Theorem \ref{th.propagation1.(u,u+v)-construction}.{3)}, we can conclude that 
		$\C$ is a Hermitian self-orthogonal code if and only if $2\dim(\C_1\cap \C_2^{\bot_H})+k_2-k_1=k_1+k_2$, i.e., 
		$\dim(\C_1\cap \C_2^{\bot_H})=k_1=\dim(\C_1)$. And since $\C_1\cap \C_2^{\bot_H}\subseteq \C_1$, it 
		follows that $\C_1\cap \C_2^{\bot_H}=\C_1$, which is equivalent to $\C_1\subseteq \C_2^{\bot_H}$ and also 
		to $\C_2\subseteq \C_1^{\bot_H}$. Therefore, the result { 1)} holds. 
		
		{ 2)}  By Theorem \ref{th.propagation1.(u,u+v)-construction}.{3)}, we can conclude that 
		$\C$ is a Hermitian LCD code if and only if $2\dim(\C_1\cap \C_2^{\bot_H})+k_2-k_1=0$, i.e., 
		$2\dim(\C_1\cap \C_2^{\bot_H})=k_1-k_2$. Therefore, the result { 2)} holds. 
		
		{ 3)} We know that $\C$ is a Hermitian self-dual code if and only if $\C$ is a Hermitian self-orthogonal 
		code and $n=k_1+k_2$. By {  1)} above, from $n=k_1+k_2$, it is easy to check that $\C_1\subseteq \C_2^{\bot_H}$ 
		and $\C_2\subseteq \C_1^{\bot_H}$ are equivalent to $\C_1=\C_2^{\bot_H}$ and $\C_2=\C_1^{\bot_H}$, respectively.  
		Therefore, the result { 3)} holds.
	\end{proof} 
\end{corollary}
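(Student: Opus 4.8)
The plan is to read off all three equivalences from Theorem~\ref{th.propagation1.(u,u+v)-construction}, whose \emph{exact} hull-dimension formula becomes available precisely under the two hypotheses listed. Following the convention of this section I would write out only the Hermitian case, the Euclidean one being word-for-word the same after replacing $\bot_H$ by $\bot_E$ and $\dagger$ by $T$. The first---and really the only---point that needs care is checking that each of conditions~1) and~2) places us in the equality case~(\ref{eq.uv HHull.2}), not merely the two-sided estimate~(\ref{eq.uv HHull.1}): in case~1), $q=2^m$ licenses~(\ref{eq.uv HHull.1}) while the Hermitian self-orthogonality of $\C_2$ upgrades it to~(\ref{eq.uv HHull.2}); in case~2), the self-orthogonality of $\C_1$ licenses~(\ref{eq.uv HHull.1}) and, again, that of $\C_2$ upgrades it. Hence in both cases
\[
\dim(\Hull_H(\C))=2\dim(\C_1\cap\C_2^{\bot_H})+k_2-k_1 ,
\]
and the rest is elementary bookkeeping with this identity and with $\dim\C=k_1+k_2$.

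For part~1): $\C$ is Hermitian self-orthogonal iff $\dim(\Hull_H(\C))=\dim\C=k_1+k_2$, i.e. $2\dim(\C_1\cap\C_2^{\bot_H})=2k_1$, i.e. $\dim(\C_1\cap\C_2^{\bot_H})=k_1=\dim\C_1$; since $\C_1\cap\C_2^{\bot_H}\subseteq\C_1$ this is equivalent to $\C_1\cap\C_2^{\bot_H}=\C_1$, that is $\C_1\subseteq\C_2^{\bot_H}$. The further equivalence with $\C_2\subseteq\C_1^{\bot_H}$ I would obtain by applying $\bot_H$ to both sides of the inclusion, using that dualisation reverses inclusions and that $(\C_2^{\bot_H})^{\bot_H}=\C_2$. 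For part~2): $\C$ is Hermitian LCD iff $\dim(\Hull_H(\C))=0$, i.e. $2\dim(\C_1\cap\C_2^{\bot_H})+k_2-k_1=0$, which is exactly the stated criterion.

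For part~3): since $\C$ has length $2n$, it is Hermitian self-dual iff it is Hermitian self-orthogonal and $k_1+k_2=n$. By part~1) the first condition reads $\C_1\subseteq\C_2^{\bot_H}$, and combined with $k_1+k_2=n$ it gives $\dim\C_1=k_1=n-k_2=\dim\C_2^{\bot_H}$, hence $\C_1=\C_2^{\bot_H}$. Conversely $\C_1=\C_2^{\bot_H}$ forces both $\C_1\subseteq\C_2^{\bot_H}$ (hence self-orthogonality, via part~1)) and, by comparing dimensions, $k_1+k_2=n$. The equivalence $\C_1=\C_2^{\bot_H}\iff\C_2=\C_1^{\bot_H}$ follows once more from $(\C_2^{\bot_H})^{\bot_H}=\C_2$. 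I do not anticipate any genuine obstacle here: the substance sits in Theorem~\ref{th.propagation1.(u,u+v)-construction}, and the corollary merely specialises $\dim(\Hull_H(\C))$ to the values $k_1+k_2$, $0$, and to the self-dual constraint.
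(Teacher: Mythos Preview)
Your proposal is correct and follows essentially the same route as the paper: both read off the three equivalences directly from the exact hull-dimension formula~(\ref{eq.uv HHull.2}) of Theorem~\ref{th.propagation1.(u,u+v)-construction} by setting $\dim(\Hull_H(\C))$ equal to $k_1+k_2$, to $0$, and then combining self-orthogonality with $k_1+k_2=n$. Your write-up is in fact slightly more explicit than the paper's in two places---you spell out why each of the two hypotheses lands in the equality case~(\ref{eq.uv HHull.2}) rather than merely~(\ref{eq.uv HHull.1}), and you give the dimension count in part~3) rather than just asserting that the inclusion upgrades to equality---but these are expository refinements of the same argument.
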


\begin{remark}\label{rem4.uv}
	For the Euclidean case, in \cite[Corollary IV.6]{MJ2016}, the authors gave a 
	conclusion similar to Corollary \ref{coro1.th.propagation2} 
	when $p\equiv 1\ ({\rm mod}\ 4)$, or $q$ is square and $p\equiv 3\ ({\rm mod}\ 4)$, 
	where $p$ is the characteristic of $\F_q$. For the Hermitian case, a similar conclusion 
	was also proposed in \cite[Corollary 4.9]{JM2017}. 
	{ We notice that the minimum distances of the self-orthogonal codes obtained by these two methods have only one lower bound, i.e., their minimum distances can not be determined exactly. 
	 }
	Due to these facts, we can see that the results in Corollary \ref{coro1.th.propagation2} 
	are more precise and new. 
\end{remark}

Recall that we call an $[n,k,d]_q$ linear code $\C$ Euclidean (resp. Hermitian) 
self-orthogonal if $\dim(\Hull_E(\C))=k$ (resp. $\dim(\Hull_H(\C))=k$) and further 
call it Euclidean (resp. Hermitian) self-dual if the condition $n=2k$ is also 
satisfied at this point. Notice that the $[2n,k_1+1,\min\{2d_1,d_2\}]_2$ linear 
code with $k_1$-dimensional Euclidean hull can be deduced from Theorem 
\ref{th.propagation2.(u,u+v)-construction}.

\begin{definition}\label{def.almost so and almost sd}
	{\ } We say that an $[n,k,d]_q$ linear code $\C$ is \emph{almost\ Euclidean\ self-orthogonal}  
	if $\dim(\Hull_E(\C))=k-1$. Moreover, we say that  $\C$ is \emph{almost\ Euclidean\ self-dual} 
	if $\C$ is almost Euclidean self-orthogonal and $n=2k$. 
\end{definition}

\begin{lemma}\label{coro2.th.propagation2}
	With the same notation as in Theorem \ref{th.propagation2.(u,u+v)-construction}, the following statements hold. 
	\begin{enumerate}
		\item  $\C$ is a binary Euclidean almost self-orthogonal code 
		if and only if $n$ is odd. 
		
		\item  $\C$ is a binary Euclidean self-orthogonal code 
		if and only if $n$ is even. 
	\end{enumerate} 
\end{lemma}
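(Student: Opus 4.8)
The plan is to obtain both equivalences directly from the hull-dimension formula already established in Theorem~\ref{th.propagation2.(u,u+v)-construction}, combined with the definitions of Euclidean self-orthogonal and almost Euclidean self-orthogonal codes. Recall that, under the hypotheses carried over from that theorem, $\C$ is a $[2n,k_1+1,\min\{2d_1,n\}]_2$ code, so its dimension is $k=k_1+1$, and moreover
$$
\dim(\Hull_E(\C))=\left\{\begin{array}{ll}
k_1, & \text{if } n \text{ is odd},\\
k_1+1, & \text{if } n \text{ is even}.
\end{array}\right.
$$
So the whole proof reduces to comparing this value with $k$ and with $k-1$.

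For part 2), by definition $\C$ is Euclidean self-orthogonal precisely when $\dim(\Hull_E(\C))=k=k_1+1$. Reading off the displayed formula, this equality holds exactly when $n$ is even; when $n$ is odd the hull dimension is $k_1=k-1<k$, so $\C$ is not self-orthogonal. This gives the claimed equivalence in 2).

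For part 1), by Definition~\ref{def.almost so and almost sd} the code $\C$ is almost Euclidean self-orthogonal precisely when $\dim(\Hull_E(\C))=k-1=k_1$. Comparing once more with the formula, this happens exactly when $n$ is odd, which gives 1). (One could also note that the two cases are mutually exclusive and exhaustive, matching the parity dichotomy.)

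There is essentially no obstacle here, since all the work is done by Theorem~\ref{th.propagation2.(u,u+v)-construction}; the only point requiring a little care is bookkeeping the dimension of $\C$, namely that $\dim(\C)=k_1+1$ and not $k_1$, so that the value $\dim(\Hull_E(\C))=k_1$ corresponds to the \emph{almost} self-orthogonal case rather than to full self-orthogonality.
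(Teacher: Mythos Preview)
Your proof is correct and follows the same approach as the paper: both simply invoke the hull-dimension formula from Theorem~\ref{th.propagation2.(u,u+v)-construction} and compare against the definitions of (almost) self-orthogonal codes. You spell out the bookkeeping in more detail, but the argument is identical.
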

\begin{proof}
	By the definition of Euclidean (almost) self-orthogonal codes, combining the 
	conclusion depicted in Theorem \ref{th.propagation2.(u,u+v)-construction}, 
	the desired results follow. 
\end{proof}

\begin{remark}\label{rem3.MPLCD} $\quad$
	\begin{enumerate}
		\item In \cite[Theorems 3.1-3.4]{LLY2022}, the authors proposed some sufficient conditions 
		(some are also necessary)  for a matrix-product code 
		$\C=[\C_1,\C_2,\cdots,\C_s]\cdot A$ being an Euclidean LCD code. Considering $A=\left(\begin{array}[]{cc}
		1 & 1 \\
		0 & 1 \\
		\end{array}\right)$ again, it is easy to verify that conditions in \cite[Theorems 3.1-3.4]{LLY2022},  
		do not hold entirely, which indicates that as a special case of matrix-product codes, 
		the $(\mathbf{u},\mathbf{u+v})$-construction is not included in \cite[Theorem 3.1-3.4]{LLY2022}. 
		In this sense, we can say that Corollary \ref{coro1.th.propagation2} is new. 
		\item  The given linear codes were required to be Euclidean LCD codes in \cite{LLY2022}. 
		However, Corollary \ref{coro1.th.propagation2} illustrates that when given linear codes 
		are not LCD codes, the resulting code may also be a LCD code under certain conditions. 
		And since Corollary \ref{coro1.th.propagation2} holds for both the Euclidean case and the Hermitian case 
		over any finite field and some criteria for the resulting codes being self-orthogonal or self-dual codes 
		are also given in Corollary \ref{coro1.th.propagation2}, we can say that Corollary \ref{coro1.th.propagation2} 
		extends the conclusions in \cite{LLY2022}. 
		\item It should be also noticed that Corollary \ref{coro2.th.propagation2} establishes the relationships 
		between binary Euclidean (almost) self-orthogonal codes derived from the $(\mathbf{u},\mathbf{u+v})$ 
		construction and the parity of code length $n$.  As far as we know, these relationships are new.  
		
		\item For the same reason with Remark \ref{rem1.direct sum}, two given linear codes of different 
		lengths can be used in Corollary \ref{coro1.th.propagation1}. And determination standards for self-orthogonal 
		and self-dual codes are presented. Therefore, in these respects, Corollary \ref{coro1.th.propagation1} can also 
		be regarded as a generalization of \cite{LLY2022}. 
	\end{enumerate}

\end{remark}

\section{Applications}\label{sec.Applications}
In this section, we employ two propagation rules in Section \ref{sec.Propagations rules} for the constructions of 
linear codes with prescribed hull dimensions, FSD LCD codes, even-like and odd-like codes as well as good  
Euclidean and Hermitian (almost) self-orthogonal codes. 
Additionally, some concrete examples are given to illustrate that our result is potentially effective in improving the lower 
bounds on the minimum distance of LCD codes. 

\subsection{Linear codes with prescribed hull dimension}

According to Theorems \ref{th.propagation1.direct sum} and \ref{th.propagation1.(u,u+v)-construction}, 
the following example suggests that we can use some known linear codes to generate new 
linear codes with prescribed hull dimension. 

\begin{example}\label{exam.mechanism of the propagation rules}
	Let $\F_4=\{0,1,\omega,\omega^2\}$, where $\omega^2+\omega+1=0$. Let matrices $G_1$ and $G_2$ be 
	$$G_1=\left(\begin{array}[]{cccccc}
	1 & 0 & \omega & 0 & 1 & \omega \\
	0 & 1 & 0 & \omega^2 & 0 & \omega \\		
	\end{array}\right)\ {\rm and}\ 
	G_2=\left(\begin{array}[]{cccccc}
	1 & 0 & 1 & 0 & \omega & \omega \\
	0 & 1 & 0 & 1 & \omega & \omega \\		
	\end{array}\right),$$
	respectively. Computed with the Magma software package \cite{BCP1997}, we have the following results: 
	\begin{itemize}
		\item The linear code $\C$ generated by $G_1$ is a $[6,2,3]_4$ Hermitian LCD code and the linear code $\D$ 
		generated by $G_2$ is a $[6,2,4]_4$ Hermitian self-orthogonal code; 
		
		\item The Hermitian dual codes of $\C$ and $\D$ have the same parameters $[6,4,2]_4$, denoted by $\C^{\bot_H}$ 
		and $\D^{\bot_H}$, respectively;
		
		\item $\dim(\Hull_H(\C))=\dim(\Hull_H(\C^{\bot_H}))=0$ and $\dim(\Hull_H(\D))=\dim(\Hull_H(\D^{\bot_H}))=2$; 
		
		\item $\C\cap \D^{\bot_H}$ is a $[6,0,6]_4$ linear code and $\C^{\bot_H}\cap \D^{\bot_H}$ is a $[6,2,3]_4$ linear code. 
		Hence, $\dim(\C\cap \D^{\bot_H})=0$ and $\dim(\C^{\bot_H}\cap \D^{\bot_H})=2$. 
	\end{itemize}
	Using the direct sum construction and the $(\mathbf{u},\mathbf{u+v})$-construction, all possible resulting linear
	codes are listed in Table \ref{tab:mechanism of the propagation rules}. 
	
	\begin{table}[t]
		\centering
		\caption{The mechanism of the two propagation rules (Linear codes with prescribed hull dimensions)}
		\label{tab:mechanism of the propagation rules}       
		\begin{tabular}{ccccc}\hline
			$\mathbf{\C_1}$ & $\mathbf{\C_2}$ & {\bf Resulting codes} & {\bf Dimension of the Hermitian hull} & {\bf Propagation rule} \\ \hline
			
			$\C$ & $\C$ & $[12,4,3]_4$ & 0(Hermitian LCD code) & Theorem \ref{th.propagation1.direct sum} \\ 
			$\C$ & $\C^{\bot_H}$ & $[12,6,2]_4$ & 0(Hermitian LCD code)& Theorem \ref{th.propagation1.direct sum} \\ 
			$\C$ & $\D^{\bot_H}$ & $[12,6,2]_4$ & 2 & Theorem \ref{th.propagation1.direct sum} \\ 
			
			$\C^{\bot_H}$ & $\C^{\bot_H}$ & $[12,8,2]_4$ & 0(Hermitian LCD code) & Theorem \ref{th.propagation1.direct sum} \\ 
			$\C^{\bot_H}$ & $\D^{\bot_H}$ & $[12,8,2]_4$ & 2 & Theorem \ref{th.propagation1.direct sum} \\ 
			
			$\D$ & $\D^{\bot_H}$ & $[12,6,2]_4$ & 4 & Theorem \ref{th.propagation1.direct sum} \\ 
			$\D^{\bot_H}$ & $\D^{\bot_H}$ & $[12,8,2]_4$ & 4 & Theorem \ref{th.propagation1.direct sum} \\ 
			
			$\C$ & $\D$ & $[12,4,4]_4$ & 0(Hermitian LCD code) & Theorem \ref{th.propagation1.(u,u+v)-construction}.3) \\ 
			$\C^{\bot_H}$ & $\D$ & $[12,6,4]_4$ & 2 & Theorem \ref{th.propagation1.(u,u+v)-construction}.3) \\ 
			$\D$ & $\D$ & $[12,4,4]_4$ & 4(Hermitian self-orthogonal code) & Theorem \ref{th.propagation1.(u,u+v)-construction}.3) \\ 
			$\D^{\bot_H}$ & $\D$ & $[12,6,4]_4$ & 6(Hermitian self-dual code) & Theorem \ref{th.propagation1.(u,u+v)-construction}.3) \\ 
			\hline
		\end{tabular}
	\end{table}
\end{example}

\begin{remark}\label{rem5.mechanism of the two propagation rules}
	From Lemma \ref{lem.per-equvilent hull equal}, although some of the initial linear codes and resulting linear codes 
	in Table \ref{tab:mechanism of the propagation rules} have the same parameters, they are not permutation equivalent 
	because of the different dimensions of their Hermitian hulls, therefore all of the results 
	in Table \ref{tab:mechanism of the propagation rules} are dissimilar. 
\end{remark}

\subsection{Application to FSD LCD codes} 
Note that the equivalence of LCD codes and linear codes under the Euclidean inner product 
and the Hermitian inner product introduced in \cite{CMTQ2018}, and hence, these facts 
motivate us to consider binary and ternary Euclidean LCD codes and quaternary Hermitian LCD codes. 

In the sequel, we first use the direct sum construction to derive new FSD codes from two given FSD 
codes.  And then we will provide some explicit constructions of new binary and ternary Euclidean FSD LCD codes 
as well as quaternary Hermitian FSD LCD codes. To this end, we need the following result. 

\begin{lemma}\label{lem.fsd.judge}{(Proposition 2.1 in \cite{L2022} and Proposition 1 in \cite{SXS2020})}.
	Let $A$ be an $n\times n$ matrix satisfying $A^T=QAQ$, where $Q$ is a monomial matrix such that $Q^2=I_n$ 
	and $I_n$ is an identity matrix of size $n\times n$. Then the linear code $\C$ generated by $(I_n\ A)$ is 
	a FSD code of length $2n$ with respect to the Euclidean inner product and the Hermitian inner product.  
\end{lemma}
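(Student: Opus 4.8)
\emph{Proof proposal.} The plan is to show that $\C$ is monomially equivalent to its Euclidean dual $\C^{\bot_E}$, to deduce the Euclidean FSD property from the fact that a monomial transformation is a bijective Hamming isometry (hence preserves the whole weight distribution), and then to obtain the Hermitian case from the Euclidean one by entrywise conjugation.

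First I would record the generator matrices in play. Since $(I_n\ A)$ has rank $n$, the code $\C$ is a $[2n,n]_q$ code, and from $(I_n\ A)(-A^T\ I_n)^T=-A+A=O_n$ together with the dimension count $\dim \C^{\bot_E}=n$ it follows that $\C^{\bot_E}$ is generated by $H=(-A^T\ I_n)$. Likewise, since $\C^{\bot_H}=(\C^{\sqrt q})^{\bot_E}$ and $\C^{\sqrt q}$ is generated by $(I_n\ A^{\sqrt q})$, the Hermitian dual $\C^{\bot_H}$ is generated by $(-A^\dagger\ I_n)$.

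Second, for the Euclidean case, I would introduce the $2n\times 2n$ matrix $M=\left(\begin{array}{cc}O_n & -Q\\ Q & O_n\end{array}\right)$, which is a monomial matrix because $Q$ is, and compute $HM=(-A^T\ I_n)M=(Q\ A^TQ)$. Left-multiplying by the invertible matrix $Q$ (merely a change of generating set, so it does not change the code) gives $(Q^2\ QA^TQ)=(I_n\ QA^TQ)$, and the hypotheses $A^T=QAQ$ and $Q^2=I_n$ yield $QA^TQ=Q(QAQ)Q=Q^2AQ^2=A$. Hence $HM$ generates the same code as $(I_n\ A)$, i.e. $\C^{\bot_E}M=\C$. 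As right multiplication by a monomial matrix is a bijective Hamming isometry, $\C$ and $\C^{\bot_E}$ have identical weight distributions, so $\C$ is Euclidean FSD.

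Third, for the Hermitian case, conjugating $A^T=QAQ$ entrywise gives $(A^{\sqrt q})^T=Q^{\sqrt q}A^{\sqrt q}Q^{\sqrt q}$, where $Q^{\sqrt q}$ is again a monomial matrix and $(Q^{\sqrt q})^2=I_n$. Applying the Euclidean case to $A^{\sqrt q}$ with monomial involution $Q^{\sqrt q}$, the code $\C^{\sqrt q}$ generated by $(I_n\ A^{\sqrt q})$ is Euclidean FSD, hence has the same weight distribution as $(\C^{\sqrt q})^{\bot_E}=\C^{\bot_H}$. Since entrywise conjugation is a weight-preserving bijection of $\F_q^{2n}$, the codes $\C^{\sqrt q}$ and $\C$ share the same weight distribution, and therefore so do $\C$ and $\C^{\bot_H}$; that is, $\C$ is Hermitian FSD. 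The only genuine (and minor) difficulty is spotting the correct monomial matrix $M$: the anti-diagonal placement of $\pm Q$ is exactly what turns $A^T=QAQ$ into the identity $QA^TQ=A$ on which the argument rests; everything else is routine verification.
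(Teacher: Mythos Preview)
Your argument is correct. The paper itself does not prove this lemma: it is quoted as a known result from \cite{L2022} and \cite{SXS2020}, so there is no in-paper proof to compare against. Your approach---exhibiting an explicit $2n\times 2n$ monomial matrix $M$ built from the anti-diagonal placement of $\pm Q$, verifying $\C^{\bot_E}M=\C$, and then reducing the Hermitian case to the Euclidean one via entrywise conjugation---is exactly the standard way these results are established in the cited references, and every computation you wrote checks out (in particular $HM=(Q\ A^TQ)$ and $QA^TQ=A$ from $A^T=QAQ$, $Q^2=I_n$). The Hermitian reduction is also sound: conjugation preserves the monomial-involution hypothesis on $Q$, and the chain $\C\leftrightarrow\C^{\sqrt q}\leftrightarrow(\C^{\sqrt q})^{\bot_E}=\C^{\bot_H}$ of weight-preserving bijections gives what you need.
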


\begin{theorem}\label{th.equvilent fsd code}
	Let $\C_1$, $\C_2$ be $[2n,n,d_1]_q$ and $[2n,n,d_2]_q$ linear codes. Let $G_1$ be a generator matrix of 
	$\C_1$ and $G_2$ be a generator matrix of $\C_2$, where $G_1=(I_{n}\ f_1(A_1))$, $G_2=(I_{n}\ f_2(A_2))$, 
	$f_i(x)\in \F_q[x]$ and $A_i$ is any Toeplitz matrix for $i=1,\ 2$. Then there exists a 
	$[4n,2n,\min\{d_1,d_2\}]_q$ FSD code $\C$ with respect to the Euclidean inner product and the Hermitian inner product. 
\end{theorem}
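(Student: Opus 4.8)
The plan is to take $\C=\C_1\oplus\C_2$ and to show that, after a suitable coordinate permutation, it becomes a code in the systematic form covered by Lemma~\ref{lem.fsd.judge}. By Theorem~\ref{th.propagation1.direct sum}.1), $\C_1\oplus\C_2$ is a $[4n,2n,\min\{d_1,d_2\}]_q$ linear code with generator matrix
$$G=\begin{pmatrix} I_n & f_1(A_1) & O_n & O_n \\ O_n & O_n & I_n & f_2(A_2)\end{pmatrix}.$$
Permuting the $4n$ columns so that the two block-columns carrying the identity submatrices occupy the first $2n$ positions turns $G$ into $(I_{2n}\ A)$ with the block-diagonal matrix $A=\diag(f_1(A_1),f_2(A_2))$. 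Since a coordinate permutation preserves length, dimension, minimum distance and weight distribution, and sends a code's Euclidean (resp. Hermitian) dual to the permuted dual, it preserves the FSD property; hence it suffices to check that the code $\C'$ generated by $(I_{2n}\ A)$ is FSD for both inner products (cf. also Lemma~\ref{lem.per-equvilent hull equal}).

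To invoke Lemma~\ref{lem.fsd.judge} for $\C'$, I would take $Q=\diag(J_n,J_n)$, where $J_n$ is the $n\times n$ exchange matrix having $1$'s on the anti-diagonal; it is a monomial (indeed permutation) matrix with $Q^2=I_{2n}$. The one elementary fact needed is that every $n\times n$ Toeplitz matrix $T$ satisfies $T^T=J_nTJ_n$: writing $(T)_{ij}=c_{j-i}$ for scalars $c_{-(n-1)},\dots,c_{n-1}\in\F_q$, one computes $(J_nTJ_n)_{ij}=(T)_{n+1-i,\,n+1-j}=c_{i-j}=(T^T)_{ij}$. Because $f_i$ is a polynomial and $J_n^2=I_n$, this identity passes to $f_i(A_i)$: indeed $f_i(A_i)^T=f_i(A_i^T)=f_i(J_nA_iJ_n)=J_nf_i(A_i)J_n$. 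Therefore
$$A^T=\diag(f_1(A_1)^T,f_2(A_2)^T)=\diag(J_nf_1(A_1)J_n,\ J_nf_2(A_2)J_n)=QAQ.$$

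By Lemma~\ref{lem.fsd.judge}, $\C'$ is then an FSD code of length $4n$ with respect to both the Euclidean and the Hermitian inner product, and transporting this back along the coordinate permutation shows that $\C=\C_1\oplus\C_2$ is the desired $[4n,2n,\min\{d_1,d_2\}]_q$ FSD code. I do not anticipate a real obstacle: the only two points that need care are the bookkeeping of the permutation putting $G$ into the systematic form $(I_{2n}\ A)$, and the observation that the Toeplitz symmetry $T^T=J_nTJ_n$ survives evaluation of the polynomials $f_i$ — both of which are routine once stated.
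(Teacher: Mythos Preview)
Your proposal is correct and follows essentially the same route as the paper: form the direct sum, permute columns into systematic shape $(I_{2n}\ A)$ with $A=\diag(f_1(A_1),f_2(A_2))$, and apply Lemma~\ref{lem.fsd.judge} via a block-diagonal monomial matrix $Q$. The only difference is that the paper appeals to \cite[Theorem~2.6]{L2022} for the existence of monomial matrices $Q_i$ with $f_i(A_i)^T=Q_if_i(A_i)Q_i$, whereas you make this explicit by taking $Q_i=J_n$ and verifying the Toeplitz identity $T^T=J_nTJ_n$ directly; this is a harmless (and arguably cleaner) specialization.
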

\begin{proof}
	Assume that $f_1(x)=\sum_{i=0}^{n-1}a_ix^i$ and $f_2(x)=\sum_{j=0}^{n-1}b_jx^j$ are two polynomials in $\F_q[x]$. 
	By \cite[Theorem 2.6]{L2022}, we know that there are monomial matrices $Q_1$ and $Q_2$ with $Q_1^2=Q_2^2=I_{n}$ 
	such that 
	\begin{equation}\label{eq.equvialent fsd 1}
	f_1(A_1)^T=Q_1f_1(A_1)Q_1\ {\rm and}\ f_2(A_2)^T=Q_2f_2(A_2)Q_2.
	\end{equation}
	
	By Theorem \ref{th.propagation1.direct sum}.{1)}, we know that $\C_1\oplus \C_2$ is a 
	$[4n,2n,\min\{d_1,d_2\}]_q$ linear code with a generator matrix 
	$$G=\left(\begin{array}[]{c|c}
	I_{n}\ f_1(A_1) & O_{n\times 2n} \\ \hline
	O_{n\times 2n} & I_{n}\ f_2(A_2) \\
	\end{array}\right).$$ 
	Since the size of $f_1(A_1)$ is $n\times n$, there exists a linear code $\C$ generated 
	by the matrix  
	$$G'=\left(\begin{array}[]{cc|cc}
	I_{n} & O_{n} & f_1(A_1) & O_{n} \\ 
	O_{n} & I_{n} &  O_{n} & f_2(A_2) \\
	\end{array}\right),$$  which is permutation equivalent 
	to $\C_1\oplus \C_2$. 
	
	If $F(A)=
	\left(\begin{array}[]{cc}
	f_1(A_1) & O_{n} \\  
	O_{n} & f_2(A_2) \\
	\end{array}\right)$, then $G'$ can be written as $G'=\left(I_{2n}\ F(A) \right)$. 
	Set $Q=\left(\begin{array}[]{cc}
	Q_1 & O_n \\
	O_n & Q_2 \\
	\end{array}\right)$, where $Q_1$ and $Q_2$ are defined as Equation~(\ref{eq.equvialent fsd 1}). 
	Then, it is easy to verify that $Q$ is a monomial matrix with 
	$Q^2=\left(\begin{array}[]{cc}
	Q_1^2 & O_n \\
	O_n & Q_2^2 \\
	\end{array}\right)=I_{2n}$  and $$
	F(A)^T  = \left(\begin{array}[]{cc}
	f_1(A_1)^T & O_n \\
	O_n & f_2(A_2)^T \\
	\end{array}\right)  
	= \left(
	\begin{array}[]{cc}
	Q_1f_1(A_1)Q_1 & O_{n} \\  
	O_{n} & Q_2f_2(A_2)Q_2 \\
	\end{array}
	\right)  
	= QF(A)Q.
	$$
	Therefore, by Lemma \ref{lem.fsd.judge}, we can conclude that $\C$ is a FSD code. 
\end{proof}
\begin{remark}\label{rem.equvialent fsd} $\,$
	\begin{enumerate}
		\item From \cite[Theorem 2.6]{L2022} , it is clear that Toeplitz matrices 
		$A_1$ and $A_2$ are essential to make Equation~(\ref{eq.equvialent fsd 1}) hold. 
		Hence, the matrices $A_1$ and $A_2$ in Theorem \ref{th.equvilent fsd code} can not be taken arbitrarily. 
		\item Keeping conditions of Theorem \ref{th.equvilent fsd code} and repeating the steps of 
		Theorem \ref{th.equvilent fsd code}, we can easily know that a $[4sn,2sn,\min\{d_1,d_2\}]_q$ 
		FSD code exists for any integer $s\geq 1$. 
	\end{enumerate}
\end{remark}


\begin{theorem}\label{th.FSD LCD}
	With the same notation as in Theorem \ref{th.equvilent fsd code}, assume that both 
	$\C_1$ and $\C_2$ are Euclidean (resp. Hermitian) LCD codes. 
	Then there exists a $[4sn,2sn,\min\{d_1,d_2\}]_q$ Euclidean (resp. Hermitian) 
	FSD LCD code for any integer $s\geq 1$. 
\end{theorem}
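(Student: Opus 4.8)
The plan is to combine Theorem~\ref{th.equvilent fsd code} with Corollary~\ref{coro1.th.propagation1}.2) and an inductive iteration of the direct sum construction. First I would recall that by Theorem~\ref{th.equvilent fsd code} the code $\C$ built from $\C_1\oplus\C_2$ (after the permutation that groups the identity blocks together) is a $[4n,2n,\min\{d_1,d_2\}]_q$ FSD code. The point is that this $\C$ is permutation equivalent to $\C_1\oplus\C_2$, so by Lemma~\ref{lem.per-equvilent hull equal} we have $\dim(\Hull_E(\C))=\dim(\Hull_E(\C_1\oplus\C_2))$ (resp. for the Hermitian case). Now since $\C_1$ and $\C_2$ are Euclidean (resp. Hermitian) LCD codes, Corollary~\ref{coro1.th.propagation1}.2) tells us $\C_1\oplus\C_2$ is Euclidean (resp. Hermitian) LCD, i.e. $\dim(\Hull_E(\C_1\oplus\C_2))=0$; hence $\C$ is simultaneously FSD and LCD. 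This settles the case $s=1$.

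For general $s$, I would argue by induction, exactly mirroring the second item of Remark~\ref{rem.equvialent fsd}. Suppose a $[4sn,2sn,\min\{d_1,d_2\}]_q$ Euclidean (resp. Hermitian) FSD LCD code $\C^{(s)}$ exists with a generator matrix of the form $(I_{2sn}\ F_s(A))$ where $F_s(A)=Q_s F_s(A) Q_s^{-1}$ holds appropriately (so Lemma~\ref{lem.fsd.judge} applies). Form the direct sum $\C^{(s)}\oplus\C_1$ (or $\oplus\,\C^{(1)}$ of the $s=1$ construction), which by Theorem~\ref{th.propagation1.direct sum}.1) has length $4sn+4n=4(s+1)n$ and dimension $2(s+1)n$, and whose minimum distance is $\min\{\min\{d_1,d_2\},d_1\}=\min\{d_1,d_2\}$. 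Repeating the block-permutation argument of Theorem~\ref{th.equvilent fsd code} — interleaving the identity blocks of the two summands and grouping the $f_i(A_i)$ blocks into a single block-diagonal $F_{s+1}(A)$, with the monomial involution $Q_{s+1}=\diag(Q_s,Q_1)$ satisfying $Q_{s+1}^2=I$ and $F_{s+1}(A)^T=Q_{s+1}F_{s+1}(A)Q_{s+1}$ — shows the resulting code $\C^{(s+1)}$ is FSD by Lemma~\ref{lem.fsd.judge}. Since it is permutation equivalent to $\C^{(s)}\oplus\C_1$, and both summands are LCD, Corollary~\ref{coro1.th.propagation1}.2) together with Lemma~\ref{lem.per-equvilent hull equal} gives that $\C^{(s+1)}$ is LCD. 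This closes the induction.

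The main obstacle, such as it is, is bookkeeping rather than mathematics: one has to be careful that the permutation reorganizing an $s$-fold direct sum into the standard form $(I\ F(A))$ required by Lemma~\ref{lem.fsd.judge} genuinely preserves both the FSD property (which is why we route it through Lemma~\ref{lem.fsd.judge} applied to the consolidated block matrix, not just to the individual pieces) and the hull dimension (Lemma~\ref{lem.per-equvilent hull equal}). One must also confirm that the block-diagonal matrix $F_{s+1}(A)$ still arises as $f(A)$ for a Toeplitz $A$ and a polynomial $f$ in the sense needed, or — more simply — just invoke Lemma~\ref{lem.fsd.judge} directly with the monomial matrix $Q_{s+1}=\diag(Q_s,Q_1)$, bypassing the Toeplitz/polynomial bookkeeping entirely since the relation $F_{s+1}(A)^T=Q_{s+1}F_{s+1}(A)Q_{s+1}$ follows immediately from the blockwise relations. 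Apart from that, the proof is a direct assembly of Theorem~\ref{th.equvilent fsd code}, Corollary~\ref{coro1.th.propagation1}.2), and Lemma~\ref{lem.per-equvilent hull equal}, and I would keep it short.
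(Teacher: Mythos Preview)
Your proposal is correct and follows essentially the same approach as the paper: reduce to $s=1$, use Corollary~\ref{coro1.th.propagation1}.2) to get that $\C_1\oplus\C_2$ is LCD, invoke Theorem~\ref{th.equvilent fsd code} for the permutation-equivalent FSD code, and apply Lemma~\ref{lem.per-equvilent hull equal} to transfer the LCD property. The paper is terser --- it simply cites Remark~\ref{rem.equvialent fsd}.2) to dispose of general $s$ --- whereas you spell out the induction and the block-diagonal monomial matrix $Q_{s+1}=\diag(Q_s,Q_1)$ explicitly; your added bookkeeping is harmless and arguably makes the $s>1$ step cleaner, but it is the same argument.
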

\begin{proof}
	According to Remark \ref{rem.equvialent fsd}.{2)}, it is sufficient to prove 
	that the desired result holds for $s=1$. 
	When $s=1$, since both $\C_1$ and $\C_2$ are Hermitian LCD codes, then by Corollary \ref{coro1.th.propagation1}.{2)}, we know that $\C_1\oplus \C_2$ is a 
	$[4n,2n,\min\{d_1,d_2\}]$ Hermitian LCD code, which yields $\dim(\Hull_H(\C_1\oplus \C_2))=0$. 
	And by Theorem \ref{th.equvilent fsd code}, we can obtain a permutation equivalent $[4n,2n,\min\{d_1,d_2\}]$ FSD code. 
	Therefore, the desired result follows from Lemma \ref{lem.per-equvilent hull equal}{.2)}. 
\end{proof}

\begin{theorem}\label{th.2-ary FSD LCD codes}
	For $2n_1\in \{24, 26, 28, 30, 32\}$, $2n_2\in \{34, 36, 38\}$, there is a binary Euclidean FSD LCD 
	$[4sn_1,2sn_1,6]_2$ and a $[4sn_2,2sn_2,7]_2$ code for any integer $s\geq 1$. 
	In addition, the binary Euclidean FSD LCD $[44s,22s,5]_2$, $[80s,40s,8]_2$ and 
	$[100s,50s,9]_2$ codes exist for any integer $s\geq 1$. 
\end{theorem}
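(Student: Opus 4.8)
The plan is to realize every code in the statement as an application of Theorem~\ref{th.FSD LCD}, so that the whole argument reduces to exhibiting, for each target length parameter, a suitable pair of binary Euclidean LCD codes $\C_1,\C_2$ of the right form and minimum distance. Recall that Theorem~\ref{th.FSD LCD} guarantees a $[4sn,2sn,\min\{d_1,d_2\}]_2$ Euclidean FSD LCD code for any $s\geq 1$ as soon as we have $[2n,n,d_1]_2$ and $[2n,n,d_2]_2$ Euclidean LCD codes $\C_1=(I_n\ f_1(A_1))$, $\C_2=(I_n\ f_2(A_2))$ with $A_i$ Toeplitz and $f_i\in\F_2[x]$. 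So for each case I would first fix $n$ (namely $n=n_1\in\{12,13,14,15,16\}$, $n=n_2\in\{17,18,19\}$, and $n\in\{11,20,25\}$ for the three extra families), then take $\C_1=\C_2$ to be a single well-chosen $[2n,n]_2$ LCD code in the prescribed generator form, whence $\min\{d_1,d_2\}=d$ is just its minimum distance.

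The key steps, in order: (i) For each required $n$, produce an explicit $[2n,n,d]_2$ binary LCD code $\C$ whose generator matrix has the shape $(I_n\ f(A))$ with $A$ a Toeplitz matrix and $f(x)\in\F_2[x]$; concretely I would search (via Magma~\cite{BCP1997}) over Toeplitz $A$ and low-degree $f$ for a systematic $[2n,n]_2$ code meeting the target distance, and verify it is LCD by checking that $GG^T$ is nonsingular (Proposition~\ref{prop1}.2). The target distances are: $d=5$ for $2n=22$; $d=6$ for $2n\in\{24,26,28,30,32\}$; $d=7$ for $2n\in\{34,36,38\}$; $d=8$ for $2n=40$; $d=9$ for $2n=50$. (ii) Invoke Theorem~\ref{th.equvilent fsd code} (with $\C_1=\C_2=\C$) to get a $[4n,2n,d]_2$ FSD code permutation-equivalent to $\C\oplus\C$, and Corollary~\ref{coro1.th.propagation1}.2) to see $\C\oplus\C$ is LCD; then Lemma~\ref{lem.per-equvilent hull equal} transfers the LCD property to the FSD code, exactly as in the proof of Theorem~\ref{th.FSD LCD}. (iii) Apply the $s\geq 1$ iteration from Remark~\ref{rem.equvialent fsd}.2) together with Theorem~\ref{th.FSD LCD} to lift each $[4n,2n,d]_2$ code to a $[4sn,2sn,d]_2$ Euclidean FSD LCD code for every $s\geq 1$. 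Finally I would tabulate the witnessing pairs $(A,f)$ for all thirteen values of $n$, which is the only computational content.

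The main obstacle is step (i): one must actually find, for each of the listed lengths, a systematic $[2n,n]_2$ code that is simultaneously (a) LCD, (b) of the Toeplitz/polynomial form $(I_n\ f(A))$ required to trigger Theorem~\ref{th.equvilent fsd code}, and (c) of the claimed minimum distance, which in several cases ($d=7$ at length $38$, $d=8$ at length $40$, $d=9$ at length $50$) is close to what is achievable for rate-$1/2$ binary codes. There is no a priori guarantee that the LCD constraint, the structural Toeplitz constraint, and the distance requirement are jointly satisfiable, so this is genuinely a finite but nontrivial computer search; the rest of the proof is bookkeeping on top of Theorems~\ref{th.equvilent fsd code} and~\ref{th.FSD LCD}. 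I would present the outcome of that search as an explicit table (one row per $n$, giving the Toeplitz parameters of $A$, the polynomial $f$, and the verified parameters), and note that optimality or near-optimality of the resulting minimum distances can be read off against the Database~\cite{Grass}.
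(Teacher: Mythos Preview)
Your proposal is correct and follows essentially the same approach as the paper: pick $\C_1=\C_2$ a suitable $[2n,n,d]_2$ Euclidean LCD code with generator matrix $(I_n\ f(A))$ for Toeplitz $A$, and invoke Theorem~\ref{th.FSD LCD}. The only difference is that the paper bypasses your proposed Magma search by citing \cite[Table~1]{LS2022}, where binary Euclidean FSD LCD codes of exactly the required Toeplitz form and parameters $[22,11,5]_2$, $[2n_1,n_1,6]_2$, $[2n_2,n_2,7]_2$, $[40,20,8]_2$, $[50,25,9]_2$ are already tabulated.
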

\begin{proof}
	In \cite[Table 1]{LS2022}, the authors list the known binary Euclidean FSD LCD codes obtained from 
	some Toeplitz matrices. Let $\C_1=\C_2=[50,25,9]_2$ be the binary Euclidean FSD LCD code 
	in \cite[Table 1]{LS2022}. From Theorem \ref{th.FSD LCD}, the binary Euclidean FSD LCD 
	$[100s,50s,9]_2$ code exists for any integer $s\geq 1$. Similarly, the remaining binary Euclidean FSD LCD codes exist. 
\end{proof}

\begin{theorem}\label{th.3-ary FSD LCD codes}
	For $2n_1\in \{18, 20\}$, $2n_2\in \{22, 24, 26\}$, $2n_3\in \{28, 30\}$, 
	there is a ternary Euclidean FSD LCD 
	$[4sn_1,2sn_1,6]_3$, $[4sn_2,2sn_2,7]_3$ and $[4sn_3,2sn_3,8]_3$ code for any integer 
	$s\geq 1$. In addition, the ternary Euclidean FSD LCD $[32s,16s,5]_3$ code exists for 
	any integer $s\geq 1$. 
\end{theorem}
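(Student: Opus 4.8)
The plan is to mirror the proof of Theorem~\ref{th.2-ary FSD LCD codes}: for each prescribed minimum distance, it suffices to exhibit a single ternary Euclidean FSD LCD code whose generator matrix has the Toeplitz form $(I_n\ f(A))$ required by Theorem~\ref{th.equvilent fsd code}, take it as both $\C_1$ and $\C_2$, and then feed it into Theorem~\ref{th.FSD LCD} to obtain the whole family indexed by $s\geq 1$.

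Concretely, I would first collect from the literature (the tables of known ternary Euclidean FSD LCD codes obtained from Toeplitz matrices, e.g.\ in \cite{LS2022}) a ternary Euclidean FSD LCD code $\D$ with parameters $[16,8,5]_3$, a code with parameters $[2n_1,n_1,6]_3$ for each $2n_1\in\{18,20\}$, a code with parameters $[2n_2,n_2,7]_3$ for each $2n_2\in\{22,24,26\}$, and a code with parameters $[2n_3,n_3,8]_3$ for each $2n_3\in\{28,30\}$. Each such $\D$ is of the form $(I_n\ f(A))$ with $A$ a Toeplitz matrix, hence satisfies the hypotheses of Theorem~\ref{th.equvilent fsd code}, and being Euclidean LCD (equivalently, $G_1G_1^T$ nonsingular by Proposition~\ref{prop1}) it satisfies the hypotheses of Theorem~\ref{th.FSD LCD}.

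Then, applying Theorem~\ref{th.FSD LCD} with $\C_1=\C_2=\D$ immediately yields, for every integer $s\geq 1$, a $[4sn,2sn,d]_3$ Euclidean FSD LCD code, where $n$ and $d$ are the (half-)length and minimum distance of $\D$. Taking $\D=[16,8,5]_3$ gives the $[32s,16s,5]_3$ family; taking $\D$ with parameters $[2n_1,n_1,6]_3$ gives $[4sn_1,2sn_1,6]_3$; taking $\D$ with parameters $[2n_2,n_2,7]_3$ gives $[4sn_2,2sn_2,7]_3$; and taking $\D$ with parameters $[2n_3,n_3,8]_3$ gives $[4sn_3,2sn_3,8]_3$. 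This exhausts all the claimed families.

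The main obstacle — indeed the only nontrivial input — is establishing existence of the seed codes: for each of the eight listed lengths one must actually produce a ternary Euclidean FSD LCD code of that length with the stated minimum distance and with a generator matrix of the prescribed $(I_n\ f(A))$ Toeplitz shape. For this I would rely on an explicit computer search over Toeplitz matrices $A$ and polynomials $f\in\F_3[x]$ (as was done in the binary case), verifying with the Magma software package that $G_1G_1^T$ is nonsingular and that the minimum distance is as claimed. Once these finitely many base codes are in hand, the remainder of the argument is a one-line appeal to Theorem~\ref{th.FSD LCD}.
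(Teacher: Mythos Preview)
Your proposal is correct and follows essentially the same approach as the paper: both proofs import the seed ternary Euclidean FSD LCD codes of Toeplitz type from \cite[Table 3]{LS2022}, set $\C_1=\C_2$ equal to each such code, and invoke Theorem~\ref{th.FSD LCD} to produce the $[4sn,2sn,d]_3$ families for all $s\geq 1$. Your write-up is more explicit about the role of the seed codes and the verification needed, but the underlying argument is identical.
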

\begin{proof}
	In \cite[Table 3]{LS2022}, the authors list the known ternary Euclidean FSD LCD codes 
	obtained from some Toeplitz matrices. Let $\C_1=\C_2=[16,8,5]_3$ be the ternary Euclidean 
	FSD LCD code in \cite[Table 3]{LS2022}. From Theorem \ref{th.FSD LCD}, the ternary 
	Euclidean FSD LCD $[32s,16s,5]_3$ code exists for any integer $s\geq 1$. 
	Similarly, the remaining ternary Euclidean FSD LCD codes exist. 
\end{proof}

\begin{theorem}\label{th.4-ary FSD LCD codes}
	Quaternary Hermitian FSD LCD $[32s,16s,6]_4$, $[40s,20s,7]_4$ and $[48s,24s,8]_4$ codes 
	exist for any integer $s\geq 1$. 
\end{theorem}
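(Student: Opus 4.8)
The plan is to follow the template used for Theorems~\ref{th.2-ary FSD LCD codes} and~\ref{th.3-ary FSD LCD codes}: exhibit three ``seed'' quaternary Hermitian FSD LCD codes, each of half the target length and each presented by a generator matrix of the form $(I_n\ f(A))$ with $A$ a Toeplitz matrix over $\F_4$, and then apply Theorem~\ref{th.FSD LCD} with $\C_1=\C_2$ equal to each seed. Concretely, one needs a $[16,8,6]_4$, a $[20,10,7]_4$, and a $[24,12,8]_4$ quaternary Hermitian FSD LCD code of that shape (with $n=8,10,12$ respectively); these are available among the known quaternary Hermitian FSD LCD codes obtained from Toeplitz matrices.

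With a $[16,8,6]_4$ seed, Theorem~\ref{th.FSD LCD} gives a $[4sn,2sn,\min\{d_1,d_2\}]_4=[32s,16s,6]_4$ Hermitian FSD LCD code for every integer $s\ge1$; likewise the $[20,10,7]_4$ seed yields the family $[40s,20s,7]_4$, and the $[24,12,8]_4$ seed yields the family $[48s,24s,8]_4$. No additional argument is required: Theorem~\ref{th.FSD LCD} already packages the FSD property together with the Hermitian LCD property (through Corollary~\ref{coro1.th.propagation1}.2) applied to $\C_1\oplus\C_2$), the dimension, and the minimum distance, and permutation equivalence preserves all of these by Lemma~\ref{lem.per-equvilent hull equal}.

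The only real obstacle is the verification step for the three seeds: one must produce explicit Toeplitz matrices $A$ and polynomials $f(x)\in\F_4[x]$ so that $(I_n\ f(A))$ generates a quaternary Hermitian LCD code with exactly the claimed minimum distance. This is a finite check, carried out with Magma~\cite{BCP1997}, and the bottleneck is the bookkeeping of exhibiting these data rather than any conceptual difficulty; once the seeds are fixed, the three families follow immediately from Theorem~\ref{th.FSD LCD}.
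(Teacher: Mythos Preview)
Your proposal is correct and follows essentially the same approach as the paper: the paper cites \cite[Table~5]{LS2022} for the seed quaternary Hermitian FSD LCD codes $[16,8,6]_4$, $[20,10,7]_4$, and $[24,12,8]_4$ obtained from Toeplitz matrices, takes $\C_1=\C_2$ equal to each seed, and invokes Theorem~\ref{th.FSD LCD}. The only difference is that the paper dispatches the verification of the seeds by direct citation rather than suggesting a Magma check.
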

\begin{proof}
	In \cite[Table 5]{LS2022}, the authors list the known quaternary Hermitian FSD LCD codes 
	obtained from some Toeplitz matrices. Let $\C_1=\C_2=[24,12,8]_4$ be the quaternary Hermitian 
	FSD LCD code in \cite[Table 5]{LS2022}. From Theorem \ref{th.FSD LCD}, the quaternary  
	Hermitian FSD LCD $[48s,24s,8]_4$ code exists for any integer $s\geq 1$. 
	Similarly, the remaining quaternary Hermitian FSD LCD codes exist. 
\end{proof}

\begin{remark}\label{rem.fsd lcd codes}
	As it was claimed in the proof of Theorems \ref{th.2-ary FSD LCD codes}-\ref{th.4-ary FSD LCD codes}, 
	in \cite{LS2022}, the authors chose some suitable Toeplitz matrices and polynomials 
	$f(x)$ to generate good FSD LCD codes based on computer searches. However, we obtain 
	many FSD LCD codes using the direct sum construction in Theorems 
	\ref{th.2-ary FSD LCD codes}-\ref{th.4-ary FSD LCD codes}. 
	In comparison with \cite{LS2022}, it is apparent that our computational complexity 
	is much less. In fact, the reader  can notice that our method needs little additional 
	computation. Naturally, this phenomenon will be more noticeable for the construction 
	of FSD LCD codes of large lengths. Furthermore, we suspect that based on the direct 
	sum construction, one could possibly design more efficient algorithms for finding 
	appropriate Toeplitz matrices and polynomials $f(x)$ to construct FSD LCD codes with 
	larger minimum distances. 
\end{remark}

\subsection{Application to even-like codes}
Note that Corollary \ref{coro2.th.propagation2} depends on the existence of even-like 
codes. 
Therefore we will give some even-like codes so that we can use Corollary 
\ref{coro2.th.propagation2} to construct binary Euclidean (almost) self-orthogonal codes. 

\begin{proposition}[\cite{HKS2021}]\label{prop.even weight codes}
	$\C$ is an even-like code if one of the following conditions holds:   
	\begin{enumerate}
		\item  $\C$ is an $[n,n-1,2]_2$ code, where $n\geq 2$;
		\item  $\C$ is a binary Euclidean self-orthogonal code; 
		\item  $\C$ is a binary simplex $[2^t-1,t,2^{t-1}]_2$ code. 
	\end{enumerate}
\end{proposition}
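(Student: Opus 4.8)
The plan is to reduce everything to the single observation that a binary code $\C\subseteq\F_2^n$ is even-like if and only if $\mathbf{1_n}\in\C^{\bot_E}$ — equivalently $\C\subseteq\langle\mathbf{1_n}\rangle^{\bot_E}$ — since $\sum_{i=1}^n c_i=(\mathbf{c},\mathbf{1_n})_E$ in $\F_2$, and then to verify this condition for each of the three families in turn (it is enough to test it on a generating set of $\C$, as $\mathbf{x}\mapsto(\mathbf{x},\mathbf{1_n})_E$ is linear). Each case exploits a different structural feature, so I would not try to unify the three arguments beyond this reformulation.

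For (1), I would use that $\dim\C=n-1$ forces $\C^{\bot_E}=\langle\mathbf{v}\rangle$ for a single nonzero vector $\mathbf{v}$, whence $\C=\mathbf{v}^{\bot_E}$. The key step is to show $\mathbf{v}=\mathbf{1_n}$: if not, then $v_j=0$ for some coordinate $j$, so the weight-one vector $\mathbf{e}_j$ satisfies $(\mathbf{e}_j,\mathbf{v})_E=v_j=0$ and hence lies in $\mathbf{v}^{\bot_E}=\C$, contradicting $d(\C)=2$. Therefore $\mathbf{v}=\mathbf{1_n}$, so $\C=\langle\mathbf{1_n}\rangle^{\bot_E}$ is exactly the even-weight code, and in particular even-like (with minimum distance $2$ for $n\geq 2$, consistent with the hypothesis).

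For (2), I would simply note that Euclidean self-orthogonality gives $(\mathbf{c},\mathbf{c})_E=0$ for every $\mathbf{c}\in\C$; since $c_i^2=c_i$ over $\F_2$, this reads $\sum_{i=1}^n c_i=0$, so every codeword is even-like — no computation needed. For (3), I would invoke the constant-weight property of the binary simplex code: every nonzero codeword has weight exactly $2^{t-1}$ (for a fixed nonzero message $\mathbf{a}\in\F_2^t$, exactly $2^{t-1}$ of the $2^t-1$ nonzero columns $\mathbf{x}$ of the generator matrix satisfy $\mathbf{a}\cdot\mathbf{x}=1$), and $2^{t-1}$ is even once $t\geq 2$, so all codewords (including $\mathbf 0$) are even-like.

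None of the three steps is genuinely difficult; the only one requiring an actual argument rather than a one-line remark is the identification $\C=\langle\mathbf{1_n}\rangle^{\bot_E}$ in case (1), which is the one place where the precise value $d=2$ (as opposed to merely $d\geq 1$) is used to rule out every other one-dimensional dual. I would also flag the implicit hypothesis $t\geq 2$ in case (3), since the degenerate simplex code $[1,1,1]_2$ is odd-like.
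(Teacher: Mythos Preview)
Your argument is correct. Note, however, that the paper does not supply a proof of this proposition at all: it is simply quoted from \cite{HKS2021} as a known fact and used as input for later constructions. So there is no ``paper's own proof'' to compare against.

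That said, your write-up is a clean self-contained justification. The reformulation ``$\C$ is even-like iff $\mathbf{1_n}\in\C^{\bot_E}$'' is the right organizing principle, and each of the three verifications is sound. In case~(1) your use of $d=2$ to pin down the dual as $\langle\mathbf{1_n}\rangle$ is exactly the needed step (and indeed the $[n,n-1,2]_2$ code is \emph{the} even-weight code, a standard fact). Cases~(2) and~(3) are the usual one-liners. Your remark that case~(3) tacitly requires $t\geq 2$ is a genuine observation: the paper only ever invokes this proposition with $t\geq 1$ in Theorem~\ref{th.binary almost so with d=2^k-1}, but the resulting $[2,2,1]_2$ code listed in Example~\ref{exam.binary almost so with d=2^t-1} for $t=1$ is indeed a boundary case where the simplex code itself is odd-like.
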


In addition to the even-like codes listed in Proposition \ref{prop.even weight codes}, 
the following two theorems give two general methods for constructing even-like and odd-like codes. 
They come from the applications of two propagation rules. The first theorem is related to the 
$(\mathbf{u},\mathbf{u+v})$ construction. 

\begin{theorem}\label{th.even-like and odd-like codes from uv}
	Let $\C_2$ be an $[n,1,n]_2$ repetition code generated by $\mathbf{1_n}=(\underbrace{1\ 1\ \cdots\ 1}_{n})$ 
	and other notations be the same as Theorem \ref{th.propagation1.(u,u+v)-construction}. Then the following statements hold. 
	\begin{enumerate}
		\item  $\C$ is a $[2n,k_1+1,\min\{2d_1,n\}]_2$ odd-like code if and only if $n$ is odd.  
		
		\item  $\C$ is a $[2n,k_1+1,\min\{2d_1,n\}]_2$ even-like code if and only if $n$ is even. 
	\end{enumerate}
\end{theorem}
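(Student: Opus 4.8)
The plan is to read off the claimed parameters from the general $(\mathbf{u},\mathbf{u+v})$ theorem and then settle the even-like/odd-like dichotomy by a one-line computation of the coordinate sum of a generic codeword. First I would invoke Theorem~\ref{th.propagation1.(u,u+v)-construction}.1): since $\C_2$ is the $[n,1,n]_2$ repetition code, $\C$ is a $[2n,k_1+1,\min\{2d_1,n\}]_2$ linear code (here $d_2=n$), so in both items it only remains to determine for which $n$ the code $\C$ is even-like, respectively odd-like.

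Next I would take an arbitrary codeword $(\mathbf{u},\mathbf{u}+\mathbf{v})$ of $\C$, with $\mathbf{u}=(u_1,\dots,u_n)\in\C_1$ and $\mathbf{v}\in\C_2=\{\mathbf{0},\mathbf{1_n}\}$, and compute its coordinate sum over $\F_2$:
\begin{equation*}
\sum_{i=1}^n u_i+\sum_{i=1}^n(u_i+v_i)=2\sum_{i=1}^n u_i+\sum_{i=1}^n v_i=\sum_{i=1}^n v_i ,
\end{equation*}
which is $0$ if $\mathbf{v}=\mathbf{0}$ and is $n\bmod 2$ if $\mathbf{v}=\mathbf{1_n}$. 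Thus every codeword of $\C$ has coordinate sum $0$ exactly when $n$ is even; and when $n$ is odd the codeword $(\mathbf{0},\mathbf{1_n})\in\C$ (obtained with $\mathbf{u}=\mathbf{0}$, $\mathbf{v}=\mathbf{1_n}$) has odd coordinate sum.

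From here the two equivalences follow by contraposition. If $n$ is even, the display shows $\C$ has no odd-like codeword, so $\C$ is even-like, hence not odd-like; if $n$ is odd, $(\mathbf{0},\mathbf{1_n})$ witnesses that $\C$ is odd-like, hence not even-like. I do not expect a genuine obstacle here: the only points to keep straight are the distinction between an even-like (resp. odd-like) \emph{vector} and an even-like (resp. odd-like) \emph{code}, and the observation that — in contrast with Theorem~\ref{th.propagation2.(u,u+v)-construction} — the argument never uses that $\C_1$ is even-like, since the coordinate sum of $(\mathbf{u},\mathbf{u}+\mathbf{v})$ depends only on $\mathbf{v}$, which is exactly why this statement is more general.
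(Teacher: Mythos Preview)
Your proposal is correct and follows essentially the same approach as the paper: both invoke Theorem~\ref{th.propagation1.(u,u+v)-construction}.1) for the parameters and then compute the coordinate sum of a generic codeword over $\F_2$, reducing the question to the parity of $n$. The only cosmetic difference is that the paper writes the codeword as a linear combination of the rows of the generator matrix $G$ and arrives at $2(\cdots)+n\cdot u_{k_1+1}$, whereas you work directly with the $(\mathbf{u},\mathbf{u}+\mathbf{v})$ form; your version is slightly cleaner and, as you note, makes it transparent that no hypothesis on $\C_1$ is used.
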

\begin{proof}
	From Theorem \ref{th.propagation1.(u,u+v)-construction}.{1)}, under the given conditions, 
	$\C$ is an $[2n,k_1+1,\min\{2d_1,n\}]_2$ linear code with a generator matrix 
	$$G=\left( \begin{array}[]{cc}
	G_1 & G_1 \\
	O_{1\times n} & \mathbf{1_n} \\
	\end{array} \right).$$ 
	Set $$G_1=\left( \begin{array}{cccc}
	g_{11} & g_{12} & \cdots & g_{1n} \\
	g_{21} & g_{22} & \cdots & g_{2n} \\
	\vdots & \vdots & \cdots & \vdots \\ 
	g_{k_11} & g_{k_12} & \cdots & g_{k_1 n} \\
	\end{array}
	\right)$$ 
	and hence, any one codeword $\bc$ of $\C$ can be expressed as 
	$$\bc=(\underbrace{\sum_{i=1}^{k_1}u_ig_{i1}, \sum_{i=1}^{k_1}u_ig_{i2}, \cdots, \sum_{i=1}^{k_1}u_ig_{in}}_{n},
	\underbrace{u_{k_1+1}+\sum_{i=1}^{k_1}u_ig_{i1}, u_{k_1+1}+\sum_{i=1}^{k_1}u_ig_{i2}, \cdots, u_{k_1+1}+\sum_{i=1}^{k_1}u_ig_{in}}_{n}),$$ 
	where $u_1,u_2,\dots,u_{k_1},u_{k_1+1}\in \F_2$. Since $q=2$, it follows that the weight of any codeword $\bc\in \C$ can be 
	calculated as follows: 
\begin{equation*}
	\begin{split}
	\wt(\bc) & = \sum_{i=1}^{k_1}u_ig_{i1}+\sum_{i=1}^{k_1}u_ig_{i2}+\cdots+\sum_{i=1}^{k_1}u_ig_{in}
	+u_{k_1+1}+\sum_{i=1}^{k_1}u_ig_{i1}+ u_{k_1+1}+\sum_{i=1}^{k_1}u_ig_{i2}+ \cdots+ u_{k_1+1}+\sum_{i=1}^{k_1}u_ig_{in} \\
	& = 2(\sum_{i=1}^{k_1}u_ig_{i1}+\sum_{i=1}^{k_1}u_ig_{i2}+\cdots+\sum_{i=1}^{k_1}u_ig_{in})+n\cdot u_{k_1+1}.
	\end{split}
\end{equation*}
	Noting that $2(\sum_{i=1}^{k_1}u_ig_{i1}+\sum_{i=1}^{k_1}u_ig_{i2}+\cdots+\sum_{i=1}^{k_1}u_ig_{in})$ 
	must be even, we have $\wt(\bc)$ is even for any $\bc \in \C$ if and only if $n$ is even. 
	This completes the proof.  
\end{proof}
\begin{remark}
	Note that in Theorem 
	\ref{th.even-like and odd-like codes from uv}, the initial code $C_1$ 
	need not be even-like. 
\end{remark}

From Proposition \ref{prop.even weight codes}.{2)}, we know that binary Euclidean self-orthogonal 
codes are even-like. However, there are even-like codes that are not binary Euclidean 
self-orthogonal. From Corollary \ref{coro2.th.propagation2} and 
Theorem \ref{th.even-like and odd-like codes from uv}, we can unify both of them in our construction, i.e., we can 
get the following corollary. 
\begin{corollary}
	{With the  notations and conditions stated in Theorem \ref{th.propagation2.(u,u+v)-construction}}. 
	Then the following statements hold. 
	\begin{enumerate}
		\item $\C$ is a $[2n,k_1+1,\min\{2d_1,n\}]_2$ odd-like code if and only if $\C$ is 
		a binary Euclidean almost self-orthogonal code.  
		
		\item $\C$ is a $[2n,k_1+1,\min\{2d_1,n\}]_2$ even-like code if and only if $\C$ is 
		a binary Euclidean self-orthogonal code. 
	\end{enumerate}
\end{corollary}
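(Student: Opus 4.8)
The plan is to reduce both equivalences to a single combinatorial condition, namely the parity of $n$, for which the relevant characterizations have already been proved. Under the hypotheses of Theorem~\ref{th.propagation2.(u,u+v)-construction} (so $\C_2$ is the $[n,1,n]_2$ repetition code and $\C_1$ is an even-like code), both Theorem~\ref{th.even-like and odd-like codes from uv} and Lemma~\ref{coro2.th.propagation2} apply to the code $\C=\{(\mathbf{u},\mathbf{u+v}):\mathbf{u}\in\C_1,\ \mathbf{v}\in\C_2\}$; in particular, the even-likeness of $\C_1$ is part of the present hypotheses, so no additional verification is needed before invoking either result. First I would record, from Theorem~\ref{th.propagation2.(u,u+v)-construction}.{1)}, that $\C$ has parameters $[2n,k_1+1,\min\{2d_1,n\}]_2$, so that all four statements below refer to a code with the same parameters.

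For part~1), Theorem~\ref{th.even-like and odd-like codes from uv}.{1)} gives that $\C$ is a $[2n,k_1+1,\min\{2d_1,n\}]_2$ odd-like code if and only if $n$ is odd, while Lemma~\ref{coro2.th.propagation2}.{1)} gives that $\C$ is a binary Euclidean almost self-orthogonal code if and only if $n$ is odd. Chaining these two equivalences through the common condition ``$n$ is odd'' yields statement~1). For part~2), the identical argument applies with ``odd'' replaced by ``even'', using Theorem~\ref{th.even-like and odd-like codes from uv}.{2)} together with Lemma~\ref{coro2.th.propagation2}.{2)}.

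There is essentially no obstacle here, since the substantive work has already been carried out in the two cited results; the only point deserving (minimal) care is checking that the parameter string $[2n,k_1+1,\min\{2d_1,n\}]_2$ appearing in the two statements genuinely coincides, which it does because both are inherited from Theorem~\ref{th.propagation2.(u,u+v)-construction}.{1)}. If instead one wanted a self-contained proof, one could recompute $GG^T$ as in the proof of Theorem~\ref{th.propagation2.(u,u+v)-construction} to get $\dim(\Hull_E(\C))\in\{k_1,k_1+1\}$ according to the parity of $n$, and combine this with the weight identity $\wt(\bc)=2(\sum_{i=1}^{k_1}u_ig_{i1}+\cdots+\sum_{i=1}^{k_1}u_ig_{in})+n\cdot u_{k_1+1}$ from the proof of Theorem~\ref{th.even-like and odd-like codes from uv}; but routing everything through the two lemmas is cleaner and is the approach I would take.
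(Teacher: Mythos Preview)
Your proof is correct and follows exactly the same route as the paper: the paper's own proof simply states that the results follow immediately from Lemma~\ref{coro2.th.propagation2} and Theorem~\ref{th.even-like and odd-like codes from uv}, which is precisely the chaining through the parity of $n$ that you carry out.
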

\begin{proof}
	The desired results follow immediately from 
	Corollary \ref{coro2.th.propagation2} and 
	Theorem \ref{th.even-like and odd-like codes from uv}. 
\end{proof}

Similarly, by employing the direct sum construction over $\F_2$, one can also easily determine 
whether the derived code is even-like or odd-like. The proof is completely similar to Theorem
\ref{th.even-like and odd-like codes from uv}, so it is omitted. 

\begin{theorem}\label{th.even-like and odd-like codes from direct sum}
	{With the  notations and conditions stated in Theorem \ref{th.propagation1.direct sum}.} Then the following statements hold over $\F_2$. 
	\begin{enumerate}
		\item  If both $\C_1$ and $\C_2$ are even-like, then $\C_1\oplus \C_2$ is even-like.
		\item  If one of $\C_1$ and $\C_2$ is odd-like, then $\C_1\oplus \C_2$ is odd-like. 
		\item  If both $\C_1$ and $\C_2$ are odd-like, then $\C_1\oplus \C_2$ is even-like or odd-like.  
	\end{enumerate}
\end{theorem}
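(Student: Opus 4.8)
The plan is to reduce the whole statement to one elementary observation about weights. For any codeword $(\mathbf{c}_1,\mathbf{c}_2)\in\C_1\oplus\C_2$ with $\mathbf{c}_1\in\C_1$ and $\mathbf{c}_2\in\C_2$, the two blocks occupy disjoint coordinate positions, so $\wt(\mathbf{c}_1,\mathbf{c}_2)=\wt(\mathbf{c}_1)+\wt(\mathbf{c}_2)$. Over $\F_2$ this says that the parity of the sum of the coordinates of $(\mathbf{c}_1,\mathbf{c}_2)$ is the modulo-$2$ sum of the parities of $\mathbf{c}_1$ and $\mathbf{c}_2$; in particular $(\mathbf{c}_1,\mathbf{c}_2)$ is even-like exactly when $\mathbf{c}_1$ and $\mathbf{c}_2$ have the same parity. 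From here everything is bookkeeping, in the same spirit as the proof of Theorem~\ref{th.even-like and odd-like codes from uv}, which is why the authors record the result by analogy and omit the details.

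For 1), I would argue that if both $\C_1$ and $\C_2$ are even-like then for every codeword $(\mathbf{c}_1,\mathbf{c}_2)$ of $\C_1\oplus\C_2$ the weight $\wt(\mathbf{c}_1)+\wt(\mathbf{c}_2)$ is a sum of two even integers, hence even; since this holds for all codewords, $\C_1\oplus\C_2$ is even-like. For 2), assume without loss of generality that $\C_1$ is odd-like and pick $\mathbf{c}_1\in\C_1$ with $\wt(\mathbf{c}_1)$ odd; since $\mathbf{0}\in\C_2$, the vector $(\mathbf{c}_1,\mathbf{0})\in\C_1\oplus\C_2$ has odd weight $\wt(\mathbf{c}_1)$, so $\C_1\oplus\C_2$ contains an odd-like codeword and is therefore odd-like (if instead it is $\C_2$ that is odd-like, use $(\mathbf{0},\mathbf{c}_2)$). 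For 3), both $\C_1$ and $\C_2$ being odd-like is a special case of the hypothesis of 2), so $\C_1\oplus\C_2$ is odd-like; in any event it is even-like or odd-like, since by the definition recalled in Section~\ref{sec1} every binary code is one or the other.

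There is essentially no obstacle here: the only points requiring care are the parity arithmetic over $\F_2$ and the remark that $\mathbf{0}$ always belongs to a linear code, which is what furnishes the odd-weight codeword needed in 2) and 3). If a sharper conclusion in case 3) were wanted — namely that the direct sum is always odd-like, not merely even-like-or-odd-like — the argument above already delivers it, since 2) applies verbatim whenever either summand is odd-like.
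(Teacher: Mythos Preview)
Your proof is correct; the paper itself omits the proof entirely, noting only that it is ``completely similar to Theorem~\ref{th.even-like and odd-like codes from uv}'', which is precisely the analogy you invoke via the weight additivity $\wt(\mathbf{c}_1,\mathbf{c}_2)=\wt(\mathbf{c}_1)+\wt(\mathbf{c}_2)$. Your observation that part~3) in fact always yields an odd-like code (being a special case of~2)) is a valid sharpening of the statement as written.
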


\subsection{Applications to binary (almost) self-orthogonal codes}
In this subsection, we employ the results described in Corollary \ref{coro2.th.propagation2} to 
construct some interesting binary Euclidean (almost) self-orthogonal codes. Many of them 
are optimal according to Database \cite{Grass} and are self-dual. 
In particular, a family of constructed binary almost self-orthogonal codes 
meets {the so-called Griesmer's bound} given in Equation~(\ref{eq. Griesmer bound}). 

\begin{theorem}\label{th.binary sd with d=4}
	The following statements hold.
	\begin{enumerate}
		\item A binary Euclidean almost self-dual $[2n,n,\min\{4,n\}]_2$ code exists for each 
		odd $n\geq 3$. Moreover, the binary Euclidean almost self-dual $[2n,n,4]_2$ code exists for each odd $n\geq 5$. 
		
		\item A binary Euclidean self-dual $[2n,n,\min\{4,n\}]_2$ code exists for each even $n\geq 2$. 
		Moreover, the binary Euclidean self-dual $[2n,n,4]_2$ code exists for each even $n\geq 4$. 
	\end{enumerate}
\end{theorem}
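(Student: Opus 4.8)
The plan is to realise every code in the statement as a single instance of the $(\mathbf{u},\mathbf{u+v})$-construction of Theorem~\ref{th.propagation2.(u,u+v)-construction}. Fix $n\geq 2$, let $\C_2$ be the $[n,1,n]_2$ repetition code generated by $\mathbf{1_n}$ as in that theorem, and take $\C_1$ to be the binary even-weight code, i.e.\ the $[n,n-1,2]_2$ code of Proposition~\ref{prop.even weight codes}.1). Since $\C_1$ is even-like, the hypotheses of Theorem~\ref{th.propagation2.(u,u+v)-construction} are met with $k_1=n-1$ and $d_1=2$, and that theorem then produces a binary linear code $\C$ of length $2n$, dimension $k_1+1=n$, and minimum distance $\min\{2d_1,n\}=\min\{4,n\}$, whose Euclidean hull has dimension $n-1$ when $n$ is odd and dimension $n$ when $n$ is even.

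The decisive point is the length--dimension relation $2n=2\dim(\C)$. When $n$ is even, this together with $\dim(\Hull_E(\C))=\dim(\C)$ says exactly that $\C$ is a binary Euclidean self-dual code; when $n$ is odd, $\dim(\Hull_E(\C))=\dim(\C)-1$ together with $2n=2\dim(\C)$ is precisely the definition of a binary almost Euclidean self-dual code (Definition~\ref{def.almost so and almost sd}). Equivalently, one may read this off from Lemma~\ref{coro2.th.propagation2}, which already identifies $\C$ as self-orthogonal (resp.\ almost self-orthogonal) according to the parity of $n$, and then add the observation $n=\dim(\C)$. The existence of $\C_1$ requires $n\geq 2$, so imposing the two parities gives the ranges $n\geq 3$ odd and $n\geq 2$ even, which is the first assertion in each of the two items.

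For the ``moreover'' clauses one only needs $\min\{4,n\}=4$, which holds as soon as $n\geq 4$; combining this with the parity restriction yields the $[2n,n,4]_2$ codes for all odd $n\geq 5$ and for all even $n\geq 4$. I do not expect a genuine obstacle here, since Theorem~\ref{th.propagation2.(u,u+v)-construction} does all the work; the one place that needs attention is the choice of $\C_1$, which must be an even-like code of dimension \emph{exactly} $n-1$ --- a smaller even-like code (for instance a self-orthogonal one) would destroy the relation $2n=2\dim(\C)$ --- and whose minimum distance $d_1=2$ is exactly what makes $\min\{2d_1,n\}$ collapse to $\min\{4,n\}$.
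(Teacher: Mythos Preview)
Your proof is correct and follows essentially the same approach as the paper: you choose $\C_1$ to be the even-weight $[n,n-1,2]_2$ code and $\C_2$ the repetition $[n,1,n]_2$ code, then invoke Theorem~\ref{th.propagation2.(u,u+v)-construction} (equivalently Lemma~\ref{coro2.th.propagation2}) together with the definition of (almost) self-dual codes. Your write-up is in fact more explicit than the paper's, which simply names $\C_1$, $\C_2$ and cites Proposition~\ref{prop.even weight codes}.1) and Corollary~\ref{coro2.th.propagation2}.
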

\begin{proof}
	Let $\C_1$ be the $[n,n-1,2]_2$ code and $\C_2$ be the $[n,1,n]_2$ code, where $n\geq 2$. 
	From Proposition \ref{prop.even weight codes}.{1)}, $\C_1$ is an even-like code. 
	Then the desired results follow immediately from Corollary \ref{coro2.th.propagation2}
	and the definition of binary Euclidean (almost) self-dual codes. 
\end{proof}

\begin{example}\label{exam.binary sd with d=4}
	Taking $2\leq n\leq 10$, we list all possible binary Euclidean (almost) self-dual 
	codes derived from Theorem \ref{th.binary sd with d=4} in Table \ref{tab:binary sd}. 
	Notice that most of them are (almost) optimal according to Database \cite{Grass}. 
	
	\begin{table}[!htb]
		\centering
		\caption{Some binary (almost) self-dual codes derived from Theorem \ref{th.binary sd with d=4} for $2\leq n\leq 10$}
		\label{tab:binary sd}       
		\begin{tabular}{cccc}\hline
			$n$ & $[n,k,d]_2$ &{\bf Self-dual/Almost self-dual} & {\bf Optimal/Almost Optimal} \\ \hline
			2 & $[4,2,2]_2$ & Self-dual & Optimal \\
			3 & $[6,3,3]_2$ & Almost self-dual & Optimal \\
			4 & $[8,4,4]_2$ & Self-dual & Optimal \\
			5 & $[10,5,4]_2$ & Almost self-dual & Optimal \\
			
			6 & $[12,6,4]_2$ & Self-dual & Optimal \\
			7 & $[14,7,4]_2$ & Almost self-dual & Optimal \\
			8 & $[16,8,4]_2$ & Self-dual & Almost optimal \\
			9 & $[18,9,4]_2$ & Almost self-dual & - \\
			10 & $[20,10,4]_2$ & Self-dual & - \\
			
			\hline
		\end{tabular}
	\end{table}
	
\end{example}

\begin{theorem}\label{th.binary so with d=2^{t+2}}
	A binary even-like Euclidean self-orthogonal $[2^{t+1}n,n+t,2^{t+2}]_2$ code exists for each integer $t\geq 0$ and 
	even $n\geq 4$. 
\end{theorem}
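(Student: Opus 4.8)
The plan is to prove the statement by induction on $t$, with $n$ fixed, the inductive mechanism being the $(\mathbf{u},\mathbf{u+v})$-construction against a repetition code as packaged in Theorem~\ref{th.propagation2.(u,u+v)-construction}: each application exactly doubles the length, adds one to the dimension, and (because the resulting code turns out to be self-orthogonal) preserves the even-like property via Proposition~\ref{prop.even weight codes}.2). So the whole proof amounts to setting up a good base code and then iterating.

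For the base case $t=0$, I would take $\C_1$ to be the $[n,n-1,2]_2$ even-weight code, which is even-like by Proposition~\ref{prop.even weight codes}.1), and $\C_2$ the $[n,1,n]_2$ repetition code. Since $n$ is even, Theorem~\ref{th.propagation2.(u,u+v)-construction} yields a $[2n,n,\min\{4,n\}]_2$ code $\C$ with $\dim(\Hull_E(\C))=n$, i.e.\ a Euclidean self-orthogonal code; since $n\geq 4$ its minimum distance equals $4=2^{0+2}$, and it is even-like by Proposition~\ref{prop.even weight codes}.2). This is exactly the claimed $[2^{1}n,\,n+0,\,2^{2}]_2$ code.

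For the inductive step, assume a binary even-like Euclidean self-orthogonal $[2^{t+1}n,\,n+t,\,2^{t+2}]_2$ code $\C_1$ exists. Its length $N=2^{t+1}n$ is even, so I would apply Theorem~\ref{th.propagation2.(u,u+v)-construction} to $\C_1$ together with the $[N,1,N]_2$ repetition code. The output $\C$ has length $2N=2^{t+2}n$, dimension $(n+t)+1=n+(t+1)$, and $\dim(\Hull_E(\C))=(n+t)+1$, hence $\C$ is Euclidean self-orthogonal and therefore even-like by Proposition~\ref{prop.even weight codes}.2). Its minimum distance is $\min\{2\cdot 2^{t+2},\,2^{t+1}n\}=\min\{2^{t+3},\,2^{t+1}n\}$, which equals $2^{t+3}=2^{(t+1)+2}$ because $n\geq 4$ forces $2^{t+1}n\geq 2^{t+3}$. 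This is precisely the code claimed for $t+1$, closing the induction.

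The proof is essentially bookkeeping, so there is no serious obstacle; the two points that need care are (i) verifying that at every stage the minimum distance is attained \emph{exactly} equal to $2^{t+2}$, which uses the exact distance formula in Theorem~\ref{th.propagation2.(u,u+v)-construction}.1) rather than a mere lower bound, and (ii) noting that the hypothesis $n\geq 4$ enters only through the inequality $2^{t+1}n\geq 2^{t+3}$, which is exactly what prevents the repetition-code contribution from truncating the distance below $2^{t+2}$.
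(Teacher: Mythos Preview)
Your proof is correct and follows essentially the same route as the paper: iterate the $(\mathbf{u},\mathbf{u+v})$-construction with the repetition code, using Theorem~\ref{th.propagation2.(u,u+v)-construction} (equivalently Corollary~\ref{coro2.th.propagation2}) to guarantee self-orthogonality at each step and maintaining the even-like property throughout. The only cosmetic differences are that the paper packages the base case by citing Theorem~\ref{th.binary sd with d=4} rather than rebuilding it, and invokes Theorem~\ref{th.even-like and odd-like codes from uv} for the even-like property where you instead use Proposition~\ref{prop.even weight codes}.2); both chains are equivalent.
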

\begin{proof}
	From Theorem \ref{th.binary sd with d=4}, the binary self-dual $[2n,n,4]_2$ code exists 
	for each even $n\geq 4$. And from Proposition \ref{prop.even weight codes}.{2)}, for each 
	even $n\geq 4$, the $[2n,n,4]_2$ code is even-like. Then, the desired result can be derived 
	from the following process: 
	\begin{itemize}
		\item Take the even-like $[2n,n,4]_2$ code as $\C_1$ and take the repetition 
		$[2n,1,2n]_2$ code as $\C_2$. Since $n\geq 4$ is even, we can derive even-like 
		$[4n,n+1,8]_2$ codes from Theorem \ref{th.even-like and odd-like codes from uv}. 
		It follows from Corollary \ref{coro2.th.propagation2} that 
		such $[4n,n+1,8]_2$ codes are also binary Euclidean self-orthogonal codes. 
		
		\item Take the even-like $[4n,n+1,8]_2$ code as $\C_1$ and take the repetition 
		$[4n,1,4n]_2$ code as $\C_2$. Since $n\geq 4$ is even, we can derive even-like 
		$[8n,n+2,16]_2$ codes from Theorem \ref{th.even-like and odd-like codes from uv}. 
		It follows from Corollary \ref{coro2.th.propagation2} that 
		such $[8n,n+2,16]_2$ codes are also binary Euclidean self-orthogonal codes. 
		
		\item And we keep on this process.
		
		\item Take the even-like $[2^tn,n+t-1,2^{t+1}]_2$ code as $\C_1$ and take the repetition 
		$[2^tn,1,2^tn]_2$ code as $\C_2$. Since $n\geq 4$ is even, we can derive even-like 
		$[2^{t+1}n,n+t,2^{t+2}]_2$ codes from Theorem \ref{th.even-like and odd-like codes from uv}. 
		It follows from Corollary \ref{coro2.th.propagation2} that 
		such $[2^{t+1}n,n+t,2^{t+2}]_2$ codes are also binary Euclidean self-orthogonal codes. 
	\end{itemize}
	
	Notice that $[2^{t+1}n,n+t,2^{t+2}]_2$ codes are still even-like, then the above procedure 
	can continue to be executed, and hence, { a} binary Euclidean self-orthogonal 
	$[2^{t+1}n,n+t,2^{t+2}]_2$ code exists for each integer $t\geq 0$ and even $n\geq 4$.  
\end{proof}

\begin{example}\label{exam.binary so with d=2^{t+2}}
	Taking $0\leq t\leq 5$ and even $4\leq n\leq 8$, we list all possible binary even-like self-orthogonal 
	codes derived from Theorem \ref{th.binary so with d=2^{t+2}} in Table \ref{tab:binary SO}. 
	Notice that many of them are (almost) optimal according to Database \cite{Grass}.

	\begin{table}[!htb]
		\centering
		\caption{All possible binary self-orthogonal code derived from Theorem 
			\ref{th.binary so with d=2^{t+2}} for $0\leq t\leq 5$ and even $4\leq n\leq 8$}
		\label{tab:binary SO}       
		\begin{tabular}{cccc}\hline
			 $t$ & $n$ & {\bf Binary\ self-orthogonal\ codes} & {\bf Optimal/Almost Optimal} \\ \hline
			0 & 4 & $[8,4,4]_2$ & Optimal \\
			0 & 6 & $[12,6,4]_2$ & Optimal \\
			1 & 4 & $[16,5,8]_2$ & Optimal \\
			0 & 8 & $[16,8,4]_2$ & Almost optimal \\
			
			1 & 6 & $[24,7,8]_2$ & - \\
			2 & 4 & $[32,6,16]_2$ & Optimal \\
			1 & 8 & $[32,9,8]_2$ & - \\
			
			2 & 6 & $[48,8,16]_2$ & - \\
			3 & 4 & $[64,7,32]_2$ & Optimal \\
			
			2 & 8 & $[64,10,16]_2$ & - \\ 
			3 & 6 & $[96,9,32]_2$ & - \\
			4 & 4 & $[128,8,64]_2$ & Optimal \\ 
			
			3 & 8 & $[128,11,32]_2$ & - \\
			4 & 6 & $[192,10,64]_2$ & - \\
			5 & 4 & $[256,9,128]_2$ & Optimal \\
			
			4 & 8 & $[256,12,64]_2$ & - \\
			5 & 6 & $[384,11,128]_2$ & - \\
			5 & 8 & $[512,13,128]_2$ & - \\
			\hline
		\end{tabular}
	\end{table}
\end{example}

\begin{theorem}\label{th.binary so with d=2^t}
	A binary even-like Euclidean self-orthogonal $[2^{t+1},t+1,2^{t}]_2$ code exists for 
	each integer $t\geq 1$. 
\end{theorem}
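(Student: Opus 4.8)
The plan is to prove the statement by induction on $t$, using the $(\mathbf{u},\mathbf{u+v})$-construction with a repetition code at each stage, in the same spirit as Theorem~\ref{th.binary so with d=2^{t+2}} but starting from a smaller seed code.

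For the base case $t=1$, I would take $\C_1=\C_2$ to be the $[2,1,2]_2$ repetition code, which is even-like and Euclidean self-orthogonal (indeed self-dual). Applying Theorem~\ref{th.propagation2.(u,u+v)-construction} with $n=2$, the resulting code $\C=\{(\mathbf{u},\mathbf{u+v}):\ \mathbf{u}\in\C_1,\ \mathbf{v}\in\C_2\}$ is a $[4,2,\min\{4,2\}]_2=[2^{2},2,2^{1}]_2$ code; since $n=2$ is even, Lemma~\ref{coro2.th.propagation2}.2) shows $\C$ is Euclidean self-orthogonal and Theorem~\ref{th.even-like and odd-like codes from uv}.2) shows it is even-like, which is the claim for $t=1$. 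Alternatively one may invoke the $[4,2,2]_2$ self-dual code of Theorem~\ref{th.binary sd with d=4}.2) together with Proposition~\ref{prop.even weight codes}.2).

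For the inductive step, assume a binary even-like Euclidean self-orthogonal $[2^{t+1},t+1,2^{t}]_2$ code $\C_1$ exists. I would take $\C_2$ to be the $[2^{t+1},1,2^{t+1}]_2$ repetition code, so that $\C_1$ and $\C_2$ share the common length $n=2^{t+1}$, and form the $(\mathbf{u},\mathbf{u+v})$-construction $\C$ from $\C_1$ and $\C_2$. Since $\C_1$ is even-like, Theorem~\ref{th.propagation2.(u,u+v)-construction} applies and yields that $\C$ is a $[2^{t+2},t+2,\min\{2\cdot 2^{t},2^{t+1}\}]_2=[2^{t+2},t+2,2^{t+1}]_2$ code; note that here $\min\{2d_1,n\}=\min\{2^{t+1},2^{t+1}\}=2^{t+1}$, so the minimum distance does not collapse. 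As $n=2^{t+1}$ is even for every $t\ge 1$, Lemma~\ref{coro2.th.propagation2}.2) gives that $\C$ is Euclidean self-orthogonal, and Theorem~\ref{th.even-like and odd-like codes from uv}.2) gives that $\C$ is even-like. Hence $\C$ is a binary even-like Euclidean self-orthogonal $[2^{(t+1)+1},(t+1)+1,2^{t+1}]_2$ code, which is the statement with $t$ replaced by $t+1$, and the induction closes.

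There is no genuinely difficult step here; the only things to keep track of are that the even-like property must be propagated along the induction (it is exactly the hypothesis required to reapply Theorem~\ref{th.propagation2.(u,u+v)-construction}), and that the length $n=2^{t+1}$ of the current code is always even, so that Lemma~\ref{coro2.th.propagation2}.2) produces a self-orthogonal rather than merely almost self-orthogonal code; both facts are immediate. If one prefers to bypass these auxiliary results, the self-orthogonality and even-likeness of $\C$ can be verified directly from the generator matrix displayed in the proof of Theorem~\ref{th.propagation2.(u,u+v)-construction} by repeating the computation of $GG^T$ carried out there.
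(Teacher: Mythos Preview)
Your proposal is correct and follows essentially the same route as the paper: both start from the even-like self-dual $[4,2,2]_2$ code (you construct it from the $[2,1,2]_2$ repetition code via Theorem~\ref{th.propagation2.(u,u+v)-construction}, while the paper simply cites Theorem~\ref{th.binary sd with d=4}) and then iterate the $(\mathbf{u},\mathbf{u+v})$-construction with the repetition code of matching length, invoking Lemma~\ref{coro2.th.propagation2}.2) and Theorem~\ref{th.even-like and odd-like codes from uv}.2) at each step. Your explicit induction is a slightly more careful write-up of the same argument the paper summarizes as ``analogous to the proof of Theorem~\ref{th.binary so with d=2^{t+2}}.''
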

\begin{proof}
	From Theorem \ref{th.binary sd with d=4}.{1)}, there is a binary self-dual $[4,2,2]_2$ code, 
	which is even-like according to Proposition \ref{prop.even weight codes}.{2)}. Taking 
	the $[4,2,2]_2$ code as $\C_1$ and the $[4,1,4]_2$ code as $\C_2$, a binary Euclidean 
	self-orthogonal $[8,3,4]_2$ code can be derived from Corollary \ref{coro2.th.propagation2}, 
	which is also even-like according to Proposition \ref{prop.even weight codes}.{2)} or Theorem \ref{th.even-like and odd-like codes from uv}.{2)}. 
	Analogous to the proof of Theorem \ref{th.binary so with d=2^{t+2}}, we can obtain 
	a binary Euclidean self-orthogonal $[2^{t+1},t+1,2^{t}]_2$ code for each integer 
	$t\geq 1$. 
	
\end{proof} 
\begin{example}\label{exam.binary so with d=2^t}
	According to Database \cite{Grass}, 
	{the binary even-like Euclidean self-orthogonal $[2^{t+1},t+1,2^{t}]_2$ code derived 
		from Theorem \ref{th.binary so with d=2^t} is optimal for each $1\leq t\leq 7$.} 
	All parameters are listed as follows: 
	$$[4,2,2]_2,\ [8,3,4]_2,\ [16,4,8]_2,\ $$
	$$[32,5,16]_2,\ [64,6,32]_2,\ [128,7,64]_2,\ [256,8,128]_2.$$
\end{example}

\begin{theorem}\label{th.binary almost so with d=2^k-1}
	A binary odd-like Euclidean almost self-orthogonal $[2^{t+1}-2,t+1,2^{t}-1]_2$ code exists for 
	each integer $t\geq 1$ and such codes meet {the Griesmer's bound}. 
\end{theorem}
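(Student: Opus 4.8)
The plan is to realize the desired code as a $(\mathbf{u},\mathbf{u+v})$-construction in which $\C_1$ is a binary simplex code and $\C_2$ is a repetition code of the same odd length, and then to verify the Griesmer bound by a direct computation with ceilings. First I would take $\C_1$ to be the binary simplex $[2^t-1,t,2^{t-1}]_2$ code, which is an even-like code by Proposition~\ref{prop.even weight codes}.3), and $\C_2$ to be the $[2^t-1,1,2^t-1]_2$ repetition code; write $n=2^t-1$, which is odd. Applying Theorem~\ref{th.propagation2.(u,u+v)-construction}, the resulting code $\C$ is $[2n,k_1+1,\min\{2d_1,n\}]_2$, and since $\min\{2\cdot 2^{t-1},\,2^t-1\}=\min\{2^t,\,2^t-1\}=2^t-1$, this is exactly $[2^{t+1}-2,\,t+1,\,2^t-1]_2$. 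The same theorem gives $\dim(\Hull_E(\C))=k_1=t=(t+1)-1$ because $n$ is odd, so $\C$ is almost Euclidean self-orthogonal in the sense of Definition~\ref{def.almost so and almost sd} (equivalently, by Corollary~\ref{coro2.th.propagation2}.1)), and $\C$ is odd-like by Theorem~\ref{th.even-like and odd-like codes from uv}.1). One can also read off both facts directly from the shape of the codewords $(\mathbf{u},\mathbf{u}+v\mathbf{1_n})$: for $v=0$ the weight is the even number $2\wt(\mathbf{u})\geq 2^t$, while for $v=1$ it is the odd number $\wt(\mathbf{u})+(n-\wt(\mathbf{u}))=n=2^t-1$.

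It then remains to check that $\C$ meets the Griesmer bound of Lemma~\ref{lem.Griesmer bound}, that is, that for $[n,k,d]_2=[2^{t+1}-2,\,t+1,\,2^t-1]_2$ one has $n=\sum_{i=0}^{k-1}\lceil d/2^i\rceil$. I would evaluate the sum term by term: the $i=0$ term is $2^t-1$, and for $1\leq i\leq t$ we have $2^t-1=2^i\cdot 2^{t-i}-1$, hence $\lceil(2^t-1)/2^i\rceil=2^{t-i}$. Therefore
\begin{equation*}
\sum_{i=0}^{t}\left\lceil\frac{2^t-1}{2^i}\right\rceil=(2^t-1)+\sum_{i=1}^{t}2^{t-i}=(2^t-1)+(2^t-1)=2^{t+1}-2=n,
\end{equation*}
so the Griesmer inequality holds with equality and $\C$ is optimal.

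The argument is largely mechanical, so I do not expect a serious obstacle; the two points that need a little care are the following. First, one should confirm that the entry $\min\{2d_1,d_2\}$ coming from Theorem~\ref{th.propagation1.(u,u+v)-construction}.1) is genuinely attained as the minimum distance (it is: $2d_1$ is even and $d_2=2^t-1$ is odd, and the weight computation above exhibits both $2^t$ and $2^t-1$ as actual codeword weights). Second, the case $t=1$ is degenerate, since the simplex ``code'' is then $[1,1,1]_2$, which is odd-like rather than even-like, so Proposition~\ref{prop.even weight codes}.3) does not supply a valid $\C_1$; I would therefore examine $t=1$ separately (the length is then $2^{t+1}-2=2$, forcing a $[2,2,1]_2$ code, whose hull dimension must be checked directly) or state the construction for $t\geq 2$, where it is uniform.
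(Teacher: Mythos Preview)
Your approach is essentially the same as the paper's: take $\C_1$ the binary simplex $[2^t-1,t,2^{t-1}]_2$ code, $\C_2$ the repetition code, apply Theorem~\ref{th.propagation2.(u,u+v)-construction} (via Corollary~\ref{coro2.th.propagation2}.1) and Theorem~\ref{th.even-like and odd-like codes from uv}.1)), and then compute the Griesmer sum term by term.

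Your caution about $t=1$ is well placed and in fact identifies a genuine flaw that the paper overlooks. For $t=1$ the simplex code $[1,1,1]_2$ is odd-like, so Proposition~\ref{prop.even weight codes}.3) does not apply and the construction does not go through. Worse, the statement itself fails at $t=1$: the only $[2,2]_2$ code is $\F_2^2$, whose Euclidean hull is $\{0\}$, so it is LCD rather than almost self-orthogonal (hull dimension $0\neq k-1=1$). The paper's Example~\ref{exam.binary almost so with d=2^t-1} nonetheless lists $[2,2,1]_2$. Your suggestion to restrict to $t\geq 2$ is the correct fix.
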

\begin{proof}
	From Proposition \ref{prop.even weight codes}.{3)}, the binary simplex $[2^t-1,t,2^{t-1}]_2$ 
	code is even-like for each integer $t\geq 1$. Taking the $[2^t-1,t,2^{t-1}]_2$ code as 
	$\C_1$ and the $[2^t-1,1,2^t-1]_2$ code as $\C_2$, a binary Euclidean almost 
	self-orthogonal $[2^{t+1}-2,t+1,2^{t}-1]_2$ code can be derived from 
	Corollary \ref{coro2.th.propagation2} and the fact that $2^{t}-1$ is odd 
	for each integer $t\geq 1$, which is also odd-like according to 
	Theorem \ref{th.even-like and odd-like codes from uv}.{1)} . 
	
	We now prove that such codes meet  the Griesmer's bound. For $q=2$, 
	from Equation~(\ref{eq. Griesmer bound}), we have 
	$$\sum_{i=0}^{t+1-1} \lceil \frac{2^t-1}{2^i} \rceil=2^t-1+2^{t-1}+\cdots+2+1=2^{t+1}-2,$$
	which completes the proof. 
\end{proof}

\begin{example}\label{exam.binary almost so with d=2^t-1}
	According to Theorem \ref{th.binary almost so with d=2^k-1}, {the obtained binary odd-like Euclidean 
		almost self-orthogonal $[2^{t+1}-2,t+1,2^{t}-1]_2$ code} is optimal for each integer 
	$t\geq 1$. For $1\leq t\leq 10$, all parameters are listed as follows: 
	$$[2,2,1]_2,\ [6,3,3]_2,\ [14,4,7]_2,\ [30,5,15]_2,\ [62,6,31]_2,$$
	$$[126,7,63]_2,\ [254,8,127]_2,\ [510,9,255]_2,\ [1022,10,511]_2,\ [2046,11,1023]_2$$
\end{example}

\subsection{Application to non-binary self-orthogonal codes}
In this subsection, we employ Corollary \ref{coro1.th.propagation2} and nested GRS codes 
to obtain some interesting Euclidean self-orthogonal codes over $\F_4$ and $\F_8$, and 
Hermitian self-orthogonal codes over $\F_{16}$ and $\F_{64}$.
Noting that some of them are (almost) optimal codes according to Database \cite{Grass}. 
To this end, we need the following results. 

\begin{lemma}\label{lem.Euclidean So GRS codes}
	Let $q=2^m\geq 4$. If $3<n\leq q$, there exists an Euclidean self-orthogonal 
	$[n,k,n-k+1]_q$ GRS code for any $1\leq k\leq \lfloor \frac{n}{2} \rfloor$.  
\end{lemma}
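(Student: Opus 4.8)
The plan is to produce, for each admissible $k$, an explicit multiplier vector $\mathbf{v}\in(\F_q^{*})^{n}$ making $\GRS_k(\mathbf{a},\mathbf{v})$ Euclidean self-orthogonal, the MDS property then yielding the distance. Fix $n$ distinct elements $a_1,\dots,a_n\in\F_q$ (possible since $3<n\le q$) and put $\mathbf{a}=(a_1,\dots,a_n)$. The starting point is to translate self-orthogonality into a vanishing condition using the generator matrix in Equation~(\ref{eq.GRS.generator matrix}) and Proposition~\ref{prop1}.{1)}: if $G$ denotes the generator matrix of $\GRS_k(\mathbf{a},\mathbf{v})$ with rows indexed by $0\le j\le k-1$, then $(GG^T)_{j,j'}=\sum_{l=1}^{n}v_l^{2}a_l^{\,j+j'}$, so $\GRS_k(\mathbf{a},\mathbf{v})$ is Euclidean self-orthogonal precisely when $\sum_{l=1}^{n}v_l^{2}a_l^{\,s}=0$ for every $s$ with $0\le s\le 2k-2$.

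Next I would build a vector with the right squares. Since $1\le k\le\lfloor n/2\rfloor$ we have $1\le 2k-1\le n-1$, so $\GRS_{2k-1}(\mathbf{a},\mathbf{1})$ is a genuine GRS code of length $n$; by Lemma~\ref{lem.GRS.Euclidean} its Euclidean dual is $\GRS_{n-2k+1}(\mathbf{a},\mathbf{u})$ for some $\mathbf{u}\in(\F_q^{*})^{n}$, where $n-2k+1\ge 1$. Choosing the constant polynomial $f\equiv 1$ in Equation~(\ref{eq. GRS.df}) shows $\mathbf{u}\in\GRS_{n-2k+1}(\mathbf{a},\mathbf{u})=\GRS_{2k-1}(\mathbf{a},\mathbf{1})^{\bot_E}$, hence $\mathbf{u}$ is orthogonal to each row $(a_1^{\,s},\dots,a_n^{\,s})$, $0\le s\le 2k-2$, of the generator matrix of $\GRS_{2k-1}(\mathbf{a},\mathbf{1})$; in other words $\sum_{l=1}^{n}u_l a_l^{\,s}=0$ for $0\le s\le 2k-2$.

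Because $q$ is a power of $2$, every element of $\F_q$ is a square, so each $u_l$ admits a square root $v_l\in\F_q$, and $v_l\ne 0$ since $u_l\ne 0$; set $\mathbf{v}=(v_1,\dots,v_n)\in(\F_q^{*})^{n}$. Then $v_l^{2}=u_l$, so the self-orthogonality criterion of the first paragraph reduces to exactly $\sum_{l=1}^{n}u_l a_l^{\,s}=0$ for $0\le s\le 2k-2$, which was just verified. Hence $\GRS_k(\mathbf{a},\mathbf{v})$ is Euclidean self-orthogonal, and since a $k$-dimensional GRS code of length $n$ is MDS, it has parameters $[n,k,n-k+1]_q$, completing the argument.

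I expect the only delicate point to be the requirement that the multiplier have full support: an arbitrary vector annihilating the Vandermonde rows $(a_l^{\,s})_{0\le s\le 2k-2}$ may have zero coordinates, but passing through the dual GRS code $\GRS_{n-2k+1}(\mathbf{a},\mathbf{u})$ lets us use its multiplier $\mathbf{u}$, whose entries are nonzero by the very definition of a GRS code, and the hypothesis $q=2^m$ is exactly what guarantees that $\mathbf{u}$ has a square root still lying in $(\F_q^{*})^{n}$. The index bookkeeping ($1\le 2k-1\le n-1$, $n-2k+1\ge 1$, and $1<n\le q$, all consequences of $1\le k\le\lfloor n/2\rfloor$ together with $3<n\le q$) is routine and ensures every GRS code invoked is well defined.
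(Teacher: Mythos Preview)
Your argument is correct. The self-orthogonality criterion via $GG^T$ is exactly Proposition~\ref{prop1}.{1)}, the production of a nowhere-vanishing $\mathbf{u}$ by invoking Lemma~\ref{lem.GRS.Euclidean} is clean (and neatly sidesteps having to write the Lagrange weights $\prod_{j\neq l}(a_l-a_j)^{-1}$ explicitly), the index bounds $1\le 2k-1\le n-1$ and $n-2k+1\ge 1$ are verified correctly from $1\le k\le\lfloor n/2\rfloor$, and the passage from $\mathbf{u}$ to $\mathbf{v}$ via Frobenius surjectivity in characteristic $2$ is exactly where the hypothesis $q=2^m$ enters.

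By contrast, the paper does not give an independent proof at all: it simply observes that the proof of \cite[Theorem~3]{LCC2018} produces, under the stated hypotheses, an $[n,k,n-k+1]_q$ GRS code with Euclidean hull of any prescribed dimension $1\le l\le k$, and then takes $l=k$. So your route is genuinely different in presentation---you give a self-contained construction using only the tools already assembled in this paper (Equation~(\ref{eq.GRS.generator matrix}), Proposition~\ref{prop1}, Lemma~\ref{lem.GRS.Euclidean}), whereas the paper outsources the work to the cited reference. What your approach buys is transparency and independence from external sources; what the paper's citation buys is brevity and, incidentally, the stronger statement that the hull dimension can be prescribed to any value in $\{1,\dots,k\}$, not just $k$. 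Substantively the constructions coincide: the argument in \cite{LCC2018} also rests on solving the power-sum system $\sum_l v_l^{2}a_l^{\,s}=0$ via square roots in even characteristic.
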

\begin{proof}
	From the proof of \cite[Theorem 3]{LCC2018}, we can know that there exists 
	an $[n,k,n-k+1]_q$ GRS code with $l$-dimensional Euclidean hull for any 
	$1\leq l\leq k$ under given conditions. Taking $l=k$, the desired result follows. 
\end{proof}

\begin{lemma}\label{lem.Hermitian So GRS codes}
	Let $q=2^m\geq 4$. If $n\leq q$, there exists a Hermitian self-orthogonal 
	$[n,k,n-k+1]_{q^2}$ GRS code for any $1\leq k\leq \lfloor \frac{n}{2} \rfloor$.  
\end{lemma}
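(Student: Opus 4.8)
The plan is to reduce the Hermitian problem over $\F_{q^2}$ to the Euclidean problem over $\F_q$ that was already settled in Lemma~\ref{lem.Euclidean So GRS codes}. The point of the hypothesis $n\le q$ is that it lets us choose the $n$ distinct evaluation points $\mathbf{a}=(a_1,\dots,a_n)$ inside the subfield $\F_q\subseteq\F_{q^2}$, so that $a_i^q=a_i$ for every $i$.

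First I would write out the Hermitian Gram matrix. If $G$ is the standard generator matrix of $\GRS_k(\mathbf{a},\mathbf{v})$ over $\F_{q^2}$ from Equation~(\ref{eq.GRS.generator matrix}), then $(GG^\dagger)_{j\ell}=\sum_{i=1}^{n}v_i^{q+1}a_i^{\,j+q\ell}$ for $0\le j,\ell\le k-1$. When $a_i\in\F_q$ this collapses to $\sum_{i=1}^{n}v_i^{q+1}a_i^{\,j+\ell}$, since $a_i^{q\ell}=(a_i^q)^{\ell}=a_i^{\ell}$. Hence, by Proposition~\ref{prop1}.1), $\GRS_k(\mathbf{a},\mathbf{v})$ is Hermitian self-orthogonal if and only if $\sum_{i=1}^{n}v_i^{q+1}a_i^{\,e}=0$ for all $0\le e\le 2k-2$. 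On the other hand, for a multiplier $\mathbf{v}'\in(\F_q^*)^n$, the code $\GRS_k(\mathbf{a},\mathbf{v}')$ over $\F_q$ is Euclidean self-orthogonal if and only if $\sum_{i=1}^{n}(v_i')^2 a_i^{\,e}=0$ for all $0\le e\le 2k-2$. Because $q=2^m$, the squaring map is a bijection of $\F_q$, and the norm map $\F_{q^2}^*\to\F_q^*$, $v\mapsto v^{q+1}$, is surjective; so for any $\mathbf{v}'\in(\F_q^*)^n$ there exists $\mathbf{v}\in(\F_{q^2}^*)^n$ with $v_i^{q+1}=(v_i')^2$ for every $i$, and the two self-orthogonality conditions then coincide term by term.

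With this dictionary in hand, the argument is short. For $4\le n\le q$, Lemma~\ref{lem.Euclidean So GRS codes} supplies an Euclidean self-orthogonal $[n,k,n-k+1]_q$ GRS code $\GRS_k(\mathbf{a},\mathbf{v}')$, whose $n$ evaluation points automatically lie in $\F_q$; lifting the multiplier to $\mathbf{v}\in(\F_{q^2}^*)^n$ as above yields an MDS code $\GRS_k(\mathbf{a},\mathbf{v})$ over $\F_{q^2}$ of parameters $[n,k,n-k+1]_{q^2}$ which, by the equality of conditions, is Hermitian self-orthogonal. The only lengths outside the scope of Lemma~\ref{lem.Euclidean So GRS codes} are $n\in\{2,3\}$, where necessarily $k=1$ and the requirement is just to find $w_1,\dots,w_n\in\F_q^{*}$ with $\sum_i w_i=0$: take $w_1=w_2=1$ when $n=2$, and $w_1=1$, $w_2\in\F_q\setminus\{0,1\}$, $w_3=1+w_2$ when $n=3$ (possible since $q\ge4$), then set $v_i^{q+1}=w_i$.

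I expect the main (and fairly mild) obstacle to be bookkeeping rather than anything deep: checking that the exponent reduction $a_i^{\,j+q\ell}=a_i^{\,j+\ell}$ is valid over the full index range $0\le j,\ell\le k-1$, confirming that both squaring and the norm map are onto $\F_q^{*}$ in characteristic $2$, and separately disposing of the two short lengths $n=2,3$ not covered by Lemma~\ref{lem.Euclidean So GRS codes}. As an alternative route, the statement could instead be obtained directly from a known existence result for GRS codes of length $n\le q$ over $\F_{q^2}$ with Hermitian hull of prescribed dimension, specialized to hull dimension $k$, exactly mirroring how Lemma~\ref{lem.Euclidean So GRS codes} is derived in the Euclidean setting.
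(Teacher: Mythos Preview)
Your argument is correct. The paper's own proof takes precisely the alternative route you flag in your last paragraph: it simply invokes \cite[Theorem~3.5]{FFLZ2019}, which for $n\le q$ produces $[n,k,n-k+1]_{q^2}$ GRS codes with Hermitian hull of any prescribed dimension $0\le l\le k$, and then specializes to $l=k$ --- mirroring exactly how Lemma~\ref{lem.Euclidean So GRS codes} was obtained in the Euclidean case.

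Your main argument is genuinely different: instead of citing an external hull-dimension result, you reduce the Hermitian problem over $\F_{q^2}$ to the Euclidean problem over $\F_q$ already handled in Lemma~\ref{lem.Euclidean So GRS codes}. The key observation --- that with evaluation points in $\F_q$ the Hermitian Gram entries $\sum_i v_i^{q+1}a_i^{j+q\ell}$ collapse to $\sum_i v_i^{q+1}a_i^{j+\ell}$, and that the norm map $v\mapsto v^{q+1}$ (together with squaring in characteristic~$2$) lets you match $v_i^{q+1}=(v_i')^2$ --- is clean and makes the proof essentially self-contained within the paper, at the cost of a small case analysis for $n\in\{2,3\}$. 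The paper's approach is terser but leans on an external reference; yours explains \emph{why} the Hermitian case over $\F_{q^2}$ with $n\le q$ behaves exactly like the Euclidean case over $\F_q$.
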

\begin{proof}
	From the proof of \cite[Theorem 3.5]{FFLZ2019}, we can know that there exists 
	an $[n,k,n-k+1]_{q^2}$ GRS code with $l$-dimensional Euclidean hull for any 
	$0\leq l\leq k$ under given conditions. Taking $l=k$, the desired result follows. 
\end{proof}

\begin{lemma}\label{lem.Hermitian So GRS codes2}
	Let $q=2^m\geq 4$. There exists a Hermitian self-orthogonal 
	$[q^2+1,q,q^2-q+2]_{q^2}$ GRS code.  
\end{lemma}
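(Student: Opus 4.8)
The plan is to realize the code explicitly as an \emph{extended} generalized Reed--Solomon code over $\F_{q^2}$ and then reduce Hermitian self-orthogonality to a single power-sum computation via Proposition~\ref{prop1}.1). Concretely, I would enumerate $\F_{q^2}=\{\alpha_1,\alpha_2,\dots,\alpha_{q^2}\}$ and take $\C$ to be the length-$(q^2+1)$ code with generator matrix
\begin{equation*}
G=\left(\begin{array}{ccccc}
1 & 1 & \cdots & 1 & 0 \\
\alpha_1 & \alpha_2 & \cdots & \alpha_{q^2} & 0 \\
\vdots & \vdots & \ddots & \vdots & \vdots \\
\alpha_1^{q-2} & \alpha_2^{q-2} & \cdots & \alpha_{q^2}^{q-2} & 0 \\
\alpha_1^{q-1} & \alpha_2^{q-1} & \cdots & \alpha_{q^2}^{q-1} & 1 \\
\end{array}\right),
\end{equation*}
that is, the code $\GRS_q(\mathbf{a},\mathbf{1})$ evaluated at \emph{all} elements of $\F_{q^2}$ and then extended by one coordinate ``at infinity''. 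It is a standard fact (see, e.g., \cite{LX2004}) that a singly-extended Reed--Solomon code of length $|\F_{q^2}|+1$ is MDS, so $\C$ is an $[q^2+1,q,q^2-q+2]_{q^2}$ code, and it remains only to verify $GG^\dagger=O_q$.

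Next I would compute $GG^\dagger$ entrywise. For $0\le i,j\le q-1$ the $(i,j)$-entry is $\sum_{\alpha\in\F_{q^2}}\alpha^{i}(\alpha^{j})^{q}+\varepsilon_{ij}=\sum_{\alpha\in\F_{q^2}}\alpha^{\,i+jq}+\varepsilon_{ij}$, where $\varepsilon_{ij}=1$ precisely when $i=j=q-1$ — the unique entry to which the last column of $G$ contributes — and $\varepsilon_{ij}=0$ otherwise. I would then invoke the classical power-sum identity: $\sum_{\alpha\in\F_{q^2}}\alpha^{t}$ equals $-1$ when $(q^2-1)\mid t$ and $t\neq 0$, and equals $0$ otherwise (the case $t=0$ vanishing since $q^2\equiv 0$ in characteristic $2$). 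Because $(i,j)\mapsto i+jq$ is a bijection from $\{0,1,\dots,q-1\}^2$ onto $\{0,1,\dots,q^2-1\}$, the only nonzero multiple of $q^2-1$ occurring among the exponents $i+jq$ is $q^2-1$ itself, attained exactly at $i=j=q-1$. Hence $\sum_{\alpha}\alpha^{\,i+jq}=-\varepsilon_{ij}$, so every entry of $GG^\dagger$ equals $-\varepsilon_{ij}+\varepsilon_{ij}=0$ (indeed $-1=1$ since $q$ is even); thus $GG^\dagger=O_q$ and $\C$ is Hermitian self-orthogonal by Proposition~\ref{prop1}.1).

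The only genuinely delicate ingredient — and the step I expect to be the main obstacle — is the extended-GRS bookkeeping: one must arrange the infinity coordinate so that it feeds into \emph{only} the $(q-1,q-1)$-entry of $GG^\dagger$ (which is why the last column of $G$ carries a single nonzero entry, aligned with the top-degree row), and one must cite correctly that adjoining this coordinate keeps the code MDS of length $q^2+1$. Once these points are pinned down, the power-sum evaluation and the base-$q$ bijection argument are immediate. As a side remark, deleting the bottom-degree row and the infinity column from the construction recovers the length-$q^2$ Hermitian self-orthogonal GRS code, so this lemma is the natural ``extended'' companion to Lemma~\ref{lem.Hermitian So GRS codes}.
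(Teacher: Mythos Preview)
Your argument is correct and self-contained: the extended Reed--Solomon code over all of $\F_{q^2}$ (with the infinity column placed in the top-degree row) is MDS of the asserted parameters, and the power-sum identity together with the base-$q$ bijection $(i,j)\mapsto i+jq$ forces $GG^\dagger=O_q$. The bookkeeping you flag as ``delicate'' is handled cleanly by your choice of column multipliers $\mathbf{v}=\mathbf{1}$, which makes the only off-diagonal contribution from the infinity coordinate land exactly where the unique nonzero power sum $-1$ sits.

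The paper takes a much shorter route: it simply invokes \cite[Theorem~3.11]{FFLZ2019}, which produces $[q^2+1,q]_{q^2}$ extended GRS codes with Hermitian hull of any prescribed dimension $0\le l\le q$, and specializes to $l=q$. In effect you have reproved, from scratch, the $l=q$ case of that cited theorem. Your approach buys an explicit generator matrix and avoids an external reference; the paper's approach is a one-line citation but gives no insight into the actual code. Note also that your computation never uses the hypothesis $q\ge 4$ (only that $q$ is a power of $2$), so your version is marginally more general than what is stated.
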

\begin{proof}
	From the proof of \cite[Theorem 3.11]{FFLZ2019}, we can know that there exists 
	an $[q^2+1,q,q^2-q+2]_{q^2}$ GRS code with $l$-dimensional Euclidean hull for any 
	$0\leq l\leq q$ under given conditions. Taking $l=q$, the desired result follows. 
\end{proof}

\begin{theorem}\label{th.quaternary and octet Euclidean SO codes}
	The quaternary and octet Euclidean self-orthogonal or self-dual codes listed in Table 
	\ref{tab:quaternary and octet Euclidean self-orthogonal codes} exist. 
\end{theorem}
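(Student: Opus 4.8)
The plan is to realize every code listed in Table~\ref{tab:quaternary and octet Euclidean self-orthogonal codes} as the image of one of the two propagation rules of Section~\ref{sec.Propagations rules}, fed with suitable self-orthogonal \GRS\ codes supplied by Lemma~\ref{lem.Euclidean So GRS codes}. For a target Euclidean self-orthogonal (or self-dual) code over $\F_q$ with $q\in\{4,8\}$, I would first fix an admissible length $n$ with $3<n\le q$ and use Lemma~\ref{lem.Euclidean So GRS codes} to produce an Euclidean self-orthogonal $[n,k_2,n-k_2+1]_q$ \GRS\ code $\C_2=\GRS_{k_2}(\mathbf{a},\mathbf{v})$ for any $k_2\le\lfloor n/2\rfloor$. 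By Proposition~\ref{prop.GRS nested} the code $\C_1=\GRS_{k_1}(\mathbf{a},\mathbf{v})$ with $1\le k_1\le k_2$ is a subcode of $\C_2$, so $\C_1\subseteq\C_2\subseteq\C_2^{\bot_E}$; in particular $\C_1$ is itself Euclidean self-orthogonal.

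Next, since $q=2^m$ and $\C_2$ is Euclidean self-orthogonal (indeed both $\C_1$ and $\C_2$ are), Corollary~\ref{coro1.th.propagation2}.1) applies, and because $\C_1\subseteq\C_2^{\bot_E}$ it yields that $\C=\{(\mathbf{u},\mathbf{u+v}):\mathbf{u}\in\C_1,\ \mathbf{v}\in\C_2\}$ is Euclidean self-orthogonal; its parameters $[2n,\,k_1+k_2,\,\min\{2(n-k_1+1),\,n-k_2+1\}]_q$ follow from Theorem~\ref{th.propagation1.(u,u+v)-construction}.1), and whenever $k_1+k_2=n$ the code is in fact self-dual. To reach lengths beyond $2q$, I would iterate: as $\C$ is self-orthogonal one may feed $\C_1'=\C_2'=\C$ into the same construction, or form the direct sum $\C\oplus\C$ via Corollary~\ref{coro1.th.propagation1}.1), doubling the length each time; more generally, Corollary~\ref{coro1.th.propagation1}.1) lets one take the direct sum of two self-orthogonal \GRS\ codes of possibly different lengths $n_1,n_2\le q$ to obtain a self-orthogonal $[n_1+n_2,k_1+k_2,\min\{d_1,d_2\}]_q$ code, which broadens the attainable length spectrum.

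With these building blocks in hand, the rest is a finite verification: for each row of the table, select $n$ (or the pair $n_1,n_2$), the dimensions $k_1,k_2$, and the \GRS\ multiplier vectors so that the length, dimension, and, crucially, the minimum distance $\min\{2d_1,d_2\}$ (resp.\ $\min\{d_1,d_2\}$) match the tabulated values; the self-dual rows are exactly those with $k_1+k_2$ equal to half the length, and the optimality/almost-optimality annotations are read off by comparison with Grassl's Database~\cite{Grass}. The main obstacle is bookkeeping rather than a new idea: over $\F_4$ only $n=4$ satisfies $3<n\le q$ and over $\F_8$ only $n\in\{4,5,6,7,8\}$ does, so for several entries one must decide which combination of the $(\mathbf{u},\mathbf{u+v})$-iteration and direct sums actually realizes the listed length while keeping the distance at the claimed value, and one must re-check that the self-orthogonality hypotheses of Corollary~\ref{coro1.th.propagation2} (resp.\ Corollary~\ref{coro1.th.propagation1}) hold at every step of an iterated construction.
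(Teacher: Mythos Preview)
Your nesting goes the wrong way. You take $\C_1=\GRS_{k_1}(\mathbf{a},\mathbf{v})\subseteq\C_2=\GRS_{k_2}(\mathbf{a},\mathbf{v})$ with $k_1\le k_2\le\lfloor n/2\rfloor$, so both ingredients are self-orthogonal and of dimension at most $\lfloor n/2\rfloor$. But look at Table~\ref{tab:quaternary and octet Euclidean self-orthogonal codes}: in every row $\dim\C_1\ge\dim\C_2$, and in most rows $\dim\C_1>\lfloor n/2\rfloor$ (e.g.\ $\C_1=[4,3,2]_4$ or $\C_1=[8,6,3]_8$), so $\C_1$ \emph{cannot} be self-orthogonal. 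Your scheme therefore cannot reach these entries: over $\F_4$ with $n=4$ the best you get with $k_1=k_2=2$ is $[8,4,\min\{6,3\}]_4=[8,4,3]_4$, not the tabulated $[8,4,4]_4$; the row $[8,3,4]_4$ likewise fails.

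The paper nests inside $\C_2^{\bot_E}$, not inside $\C_2$. By Lemma~\ref{lem.GRS.Euclidean} the dual of the self-orthogonal $\GRS_{k_2}(\mathbf{a},\mathbf{v})$ is again a GRS code $\GRS_{n-k_2}(\mathbf{a},\mathbf{v}')$, and Proposition~\ref{prop.GRS nested} then supplies $\C_1=\GRS_{k_1}(\mathbf{a},\mathbf{v}')\subseteq\C_2^{\bot_E}$ for every $k_1\le n-k_2$. Only hypothesis~1) of Corollary~\ref{coro1.th.propagation2} ($q=2^m$ and $\C_2$ self-orthogonal) is used, so $\C_1$ need not be self-orthogonal; the containment $\C_1\subseteq\C_2^{\bot_E}$ is exactly the condition in part~1) of that corollary that makes the $(\mathbf{u},\mathbf{u+v})$-code self-orthogonal (and self-dual when $k_1=n-k_2$). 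Keeping $k_2$ small (mostly $1$ or $2$) keeps $d_2=n-k_2+1$ large while letting $k_1$ run up to $n-k_2$, which is what produces the listed parameters. Your iteration and direct-sum discussion is also off target: every length in the table is $2n$ with $n\le q$, so a single application of the $(\mathbf{u},\mathbf{u+v})$-construction suffices.
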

\begin{proof}
	From Lemma \ref{lem.Euclidean So GRS codes}, we know that Euclidean self-orthogonal 
	$[4,1,4]_4$, $[4,1,4]_8$, $[5,1,5]_8$, $[5,2,4]_8$, $[6,1,6]_8$, $[6,2,5]_8$, 
	$[7,1,7]_8$, $[7,2,6]_8$, $[8,1,8]_8$ and $[8,2,7]_8$ GRS codes exist. 
	Taking the Euclidean self-orthogonal $[4,1,4]_4$ GRS code as an example, the desired 
	result can be derived from the following steps: 
	\begin{itemize}
		\item \textbf{Step 1:} From Lemma \ref{lem.GRS.Euclidean}, the Euclidean dual 
		code of the $[4,1,4]_4$ GRS code is a $[4,3,2]_4$ GRS code; 
		\item \textbf{Step 2:} From Proposition \ref{prop.GRS nested}, for any $1\leq k\leq 3$, 
		there exists a $[4,k,5-k]_4$ GRS code, which is contained to the $[4,3,2]_4$ GRS code;
		\item \textbf{Step 3:} Taking the $[4,k,5-k]_4$ GRS code as $\C_1$ and the Euclidean 
		self-orthogonal $[4,1,4]_4$ GRS code as $\C_2$, it follows from 
		Corollary \ref{coro1.th.propagation2}  that an Euclidean self-orthogonal 
		or self-dual $[8,k+1,\min\{2\times(5-k), 4\}]_8$ code exists for any $1\leq k\leq 3$. 
	\end{itemize} 
	
	\begin{table}[h]
		\centering
		\caption{Some interesting quaternary and octet Euclidean self-orthogonal code from Theorem \ref{th.quaternary and octet Euclidean SO codes}}
		\label{tab:quaternary and octet Euclidean self-orthogonal codes}       
		\begin{tabular}{cccccc}\hline
			$q$ & $\C_1$ & $\C_2$ & \tabincell{c}{New Euclidean\\self-orthogonal codes} & \tabincell{c}{Optimal minimum \\ distance in \cite{Grass}} & $\tabincell{c}{Optimal/\\ Almost optimal }$ \\ \hline
			$q=4$ & $[4, 1, 4]_4$ & $[4, 1, 4]_4$ & $[8, 2, 4]_4$ & 6  & -\\ 
			& $[4, 2, 3]_4$ & $[4, 1, 4]_4$ & $[8, 3, 4]_4$ & 5  & Almost optimal\\ 
			& $[4, 3, 2]_4$ & $[4, 1, 4]_4$ & $[8, 4, 4]_4$ & 4  & Optimal \\ \hline
			
			$q=8$ & $[4, 2, 3]_8$ & $[4, 1, 4]_8$ & $[8, 3, 4]_8$ & 6  & -\\ 
			& $[4, 3, 2]_8$ & $[4, 1, 4]_8$ & $[8, 4, 4]_8$ & 5  & Almost optimal \\

			& $[5, 3, 3]_8$ & $[5, 1, 5]_8$ & $[10, 4, 5]_8$ & 6 & Almost optimal \\
			& $[5, 3, 3]_8$ & $[5, 2, 4]_8$ & $[10, 5, 4]_8$ & 5 & Almost optimal \\
			
			& $[6, 3, 4]_8$ & $[6, 1, 6]_8$ & $[12, 4, 6]_8$ & 8 & - \\
			& $[6, 4, 3]_8$ & $[6, 1, 6]_8$ & $[12, 5, 6]_8$ & 7 & Almost optimal \\
			& $[6, 4, 3]_8$ & $[6, 2, 5]_8$ & $[12, 6, 5]_8$ & 6 & Almost optimal \\
			
			& $[7, 4, 4]_8$ & $[7, 1, 7]_8$ & $[14, 5, 7]_8$ & 9 & - \\
			& $[7, 5, 3]_8$ & $[7, 1, 7]_8$ & $[14, 6, 6]_8$ & 8 & - \\
			& $[7, 5, 3]_8$ & $[7, 2, 6]_8$ & $[14, 7, 6]_8$ & 7 & Almost optimal \\
			
			& $[8, 4, 5]_8$ & $[8, 1, 8]_8$ & $[16, 5, 8]_8$ & 10 & - \\ 
			& $[8, 5, 4]_8$ & $[8, 1, 8]_8$ & $[16, 6, 8]_8$ & 9 & Almost optimal \\ 
			& $[8, 6, 3]_8$ & $[8, 1, 8]_8$ & $[16, 7, 6]_8$ & 8 & - \\ 
			& $[8, 6, 3]_8$ & $[8, 2, 7]_8$ & $[16, 8, 6]_8$ & 8 & - \\ 
			\hline
		\end{tabular}
	\end{table}
\end{proof}

\begin{theorem}\label{th.Hermitian SO codes over F4, F16, F64}
	The Hermitian self-orthogonal or self-dual codes over $\F_{16}$ and $\F_{64}$ listed 
	in Table \ref{tab:Hermitian SO codes over F4, F16, F64} exist. 
\end{theorem}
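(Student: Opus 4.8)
The plan is to follow the same template as the proof of Theorem~\ref{th.quaternary and octet Euclidean SO codes}, replacing each Euclidean ingredient by its Hermitian analogue. Since $\F_{16}$ and $\F_{64}$ are exactly the fields $\F_{q^2}$ with $q=2^m\in\{4,8\}$, I would first call on Lemma~\ref{lem.Hermitian So GRS codes} and Lemma~\ref{lem.Hermitian So GRS codes2} to produce the Hermitian self-orthogonal GRS codes that serve as the seeds: an $[n,k,n-k+1]_{q^2}$ GRS code for every $n\le q$ and $1\le k\le\lfloor n/2\rfloor$, and the extremal $[q^2+1,q,q^2-q+2]_{q^2}$ GRS code. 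Fix one such code $\C_2=\GRS_k(\mathbf{a},\mathbf{v})$ of length $n$; because $q^2$ is a power of $2$, the first condition of Corollary~\ref{coro1.th.propagation2} is met (for any choice of $\C_1$), since $\C_2$ is Hermitian self-orthogonal.

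Next I would make $\C_2^{\bot_H}$ explicit in order to exhibit the nesting needed for self-orthogonality. By Lemma~\ref{lem.GRS.Hermitian}, $\C_2^{\bot_H}=\GRS_{n-k}(\mathbf{a}^{q},\mathbf{w})$ for a suitable multiplier vector $\mathbf{w}\in(\F_{q^2}^{*})^{n}$ --- in particular it is again a GRS code, now on the Frobenius-twisted evaluation vector $\mathbf{a}^{q}$ (still $n$ distinct points, as $x\mapsto x^{q}$ is an automorphism). Hence, by Proposition~\ref{prop.GRS nested}, for every $1\le k'\le n-k$ the MDS code $\C_1:=\GRS_{k'}(\mathbf{a}^{q},\mathbf{w})$, with parameters $[n,k',n-k'+1]_{q^2}$, satisfies $\C_1\subseteq\C_2^{\bot_H}$. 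Applying the self-orthogonality criterion of Corollary~\ref{coro1.th.propagation2} to this pair, the $(\mathbf{u},\mathbf{u+v})$-code $\C=\{(\mathbf{u},\mathbf{u+v}):\mathbf{u}\in\C_1,\ \mathbf{v}\in\C_2\}$ is Hermitian self-orthogonal, and Theorem~\ref{th.propagation1.(u,u+v)-construction}.1) tells us that $\C$ is a $[2n,\,k'+k,\,\min\{2(n-k'+1),\,n-k+1\}]_{q^2}$ code. Taking $k'=n-k$ gives $\C_1=\C_2^{\bot_H}$, whence $2n=2(k'+k)$ and the self-duality criterion of Corollary~\ref{coro1.th.propagation2} upgrades $\C$ to a Hermitian self-dual $[2n,\,n,\,\min\{2(k+1),\,n-k+1\}]_{q^2}$ code. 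Letting $k'$ range over its admissible values for $q=4$ and for $q=8$, and separately feeding in the $[q^2+1,q,q^2-q+2]_{q^2}$ code of Lemma~\ref{lem.Hermitian So GRS codes2}, yields precisely the rows of Table~\ref{tab:Hermitian SO codes over F4, F16, F64}.

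All that remains is the row-by-row check, which is routine: lengths and dimensions add, the minimum distance is $\min\{2d_1,d_2\}$ read off from the MDS parameters, and each code is self-orthogonal or self-dual according to whether $k'<n-k$ or $k'=n-k$. The only step that is not purely mechanical is verifying that $\C_2^{\bot_H}$ is itself a GRS code on an evaluation set for which the smaller GRS codes genuinely nest --- this is exactly the content of Lemma~\ref{lem.GRS.Hermitian}, where the twist $\mathbf{a}\mapsto\mathbf{a}^{q}$ is essential, so I expect the main effort to be careful bookkeeping of the column multipliers $\mathbf{v}$ and $\mathbf{w}$ under the Hermitian dual rather than any genuine difficulty.
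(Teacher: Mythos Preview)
Your proposal is correct and follows essentially the same approach as the paper, which simply declares that ``With the same argument of the proof of Theorem~\ref{th.quaternary and octet Euclidean SO codes}, the desired result follows.'' You have merely made the Hermitian substitutions explicit---invoking Lemma~\ref{lem.GRS.Hermitian} in place of Lemma~\ref{lem.GRS.Euclidean} and tracking the Frobenius twist $\mathbf{a}\mapsto\mathbf{a}^{q}$ on the evaluation points---which is exactly the bookkeeping the paper leaves to the reader.
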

\begin{proof}
	From Lemmas \ref{lem.Hermitian So GRS codes} and \ref{lem.Hermitian So GRS codes2}, 
	we know that Hermitian self-orthogonal 
	$[2, 1, 2]_{16}$, $[3, 1, 3]_{16}$, $[4, 1, 4]_{16}$, $[17, 4, 14]_{16}$, 
	$[2, 1, 2]_{64}$, $[3, 1, 3]_{64}$, $[4, 1, 4]_{64}$, $[5, 1, 5]_{64}$, $[5, 2, 4]_{64}$, 
	$[6, 1, 6]_{64}$, $[6, 2, 5]_{64}$, $[7, 1, 7]_{64}$, $[7, 2, 6]_{64}$, $[8, 1, 8]_{64}$, 
	$[8, 2, 7]_{64}$ and $[65, 8, 58]_{64}$
	GRS codes exist. With the same argument of the proof of Theorem 
	\ref{th.quaternary and octet Euclidean SO codes}, the desired result follows. 
	
	\begin{table}[h]
		\centering
		\caption{Some interesting Hermitian self-orthogonal code over $\F_{16}$ and $\F_{64}$ from Theorem \ref{th.Hermitian SO codes over F4, F16, F64}}
		\label{tab:Hermitian SO codes over F4, F16, F64}       
		\begin{tabular}{cccc}\hline
			$q^2$ & $\C_1$ & $\C_2$ & \tabincell{c}{New Hermitian self-orthogonal codes} \\ \hline
			
			$q^2=16$ & $[2, 1, 2]_{16}$ & $[2, 1, 2]_{16}$ & $[4, 2, 2]_{16}$ \\ 
			& $[3, 2, 2]_{16}$ & $[3, 1, 3]_{16}$ & $[6, 3, 3]_{16}$ \\ 
			& $[4, 3, 2]_{16}$ & $[4, 1, 4]_{16}$ & $[8, 4, 4]_{16}$ \\ 
			& $[17, 11, 7]_{16}$ & $[17, 4, 14]_{16}$ & $[34, 15, 14]_{16}$ \\  
			& $[17, 12, 6]_{16}$ & $[17, 4, 14]_{16}$ & $[34, 16, 12]_{16}$ \\
			& $[17, 13, 5]_{16}$ & $[17, 4, 14]_{16}$ & $[34, 17, 10]_{16}$ \\ \hline
			
			$q^2=64$ & $[2, 1, 2]_{64}$ & $[2, 1, 2]_{64}$ & $[4, 2, 2]_{64}$ \\ 
			& $[3, 2, 2]_{64}$ & $[3, 1, 3]_{64}$ & $[6, 3, 3]_{64}$ \\ 
			& $[4, 3, 2]_{64}$ & $[4, 1, 4]_{64}$ & $[8, 4, 4]_{64}$ \\
			
			& $[5, 3, 3]_{64}$ & $[5, 1, 5]_{64}$ & $[10, 4, 5]_{64}$ \\
			& $[5, 3, 3]_{64}$ & $[5, 2, 4]_{64}$ & $[10, 5, 4]_{64}$ \\
			
			& $[6, 4, 3]_{64}$ & $[6, 1, 6]_{64}$ & $[12, 5, 6]_{64}$ \\
			& $[6, 4, 3]_{64}$ & $[6, 2, 5]_{64}$ & $[12, 6, 5]_{64}$ \\
			
			& $[7, 4, 4]_{64}$ & $[7, 1, 7]_{64}$ & $[14, 5, 7]_{64}$ \\
			& $[7, 5, 3]_{64}$ & $[7, 2, 6]_{64}$ & $[14, 7, 6]_{64}$ \\
			
			& $[8, 5, 4]_{64}$ & $[8, 1, 8]_{64}$ & $[16, 6, 8]_{64}$ \\
			& $[8, 5, 4]_{64}$ & $[8, 2, 7]_{64}$ & $[16, 7, 7]_{64}$ \\
			& $[8, 6, 3]_{64}$ & $[8, 2, 7]_{64}$ & $[16, 8, 6]_{64}$ \\ 
			
			& $[65, 37, 29]_{64}$ & $[65, 8, 58]_{64}$ & $[130, 45, 58]_{64}$ \\ 
			& $[65, 42, 24]_{64}$ & $[65, 8, 58]_{64}$ & $[130, 50, 48]_{64}$ \\ 
			& $[65, 47, 19]_{64}$ & $[65, 8, 58]_{64}$ & $[130, 55, 38]_{64}$ \\ 
			& $[65, 52, 14]_{64}$ & $[65, 8, 58]_{64}$ & $[130, 60, 28]_{64}$ \\ 
			& $[65, 57, 9]_{64}$ & $[65, 8, 58]_{64}$ & $[130, 65, 18]_{64}$ \\ 
			\hline
		\end{tabular}
	\end{table}
\end{proof}

\begin{remark}$\,$
	\begin{enumerate}
		\item As it can be checked in Table~\ref{tab:quaternary and octet Euclidean self-orthogonal codes}, 
		many (almost) optimal Euclidean self-orthogonal and self-dual codes can be derived from this method. 
		\item Note that Database \cite{Grass} does not contain linear codes over $\F_{16}$ and $\F_{64}$. 
		And yet based on the excellent performance of the Euclidean self-orthogonal or self-dual codes obtained 
		in Table \ref{tab:quaternary and octet Euclidean self-orthogonal codes}, it is reasonable to believe that 
		the Hermitian self-orthogonal or self-dual codes obtained in Table \ref{tab:Hermitian SO codes over F4, F16, F64} 
		will similarly contain some (almost) optimal codes. 
	\end{enumerate}
	
\end{remark}

\subsection{Application to improve bounds on LCD codes}

In many works related to LCD codes (see for example \cite{B2021,L2022,LS2022} 
and the references therein), inequalities are used to enhance the known lower bound on the 
minimum distance of a LCD code. For these inequalities to be valid, one has to know the 
lower bound of the minimum distance of a LCD code of a certain length. From Corollary 
\ref{coro1.th.propagation1}.{2)}, we know that our constructions can be based on known 
LCD codes of small lengths to obtain LCD codes of larger lengths. Hence, more LCD codes 
can be known, and thus, we believe that combining our conclusions with these inequalities 
can help to better and faster determine the lower bounds on the minimum distance for LCD 
codes of larger lengths. 

Additionally, our methods themselves can improve the lower bounds of the minimum 
distance of some known LCD codes. We notice that, in 2021, Bouyuklieva et al. listed bounds for 
the minimum distance of binary Euclidean LCD codes of length $16\leq n\leq 40$ and 
$5\leq k\leq 32$ in \cite[Tables 1 and 2]{B2021}. We employ our results described in 
Corollary \ref{coro1.th.propagation1} to \cite[Table 1]{B2021} to improve some of the lower 
bounds in \cite[Table 2]{B2021}. Partial specific examples are listed in Table \ref{tablast} 
below and one can find more examples in a similar way. 

\begin{table}[!htb]
	\centering
	\caption{Improved lower bounds on the minimum distance of some binary Euclidean LCD codes in \cite[Table 2]{B2021}}
	\label{tablast}       
	\begin{tabular}{ccccc}\hline
		$\C_1$ & $\C_2$ &  Derived LCD codes & Best in \cite[Table 2]{B2021} \\ \hline
		$[16,8,5]_2$ & $[22,13,5]_2$ & $[38,21,5]_2$ & $[38,21,4]_2$  \\ 
		
		$[19,10,5]_2$ & $[20,11,5]_2$ & $[39,21,5]_2$ & $[39,21,4]_2$ \\ 
		
		$[17,9,5]_2$ & $[22,13,5]_2$ & $[39,22,5]_2$ & $[39,22,4]_2$  \\ 
		
		$[17,9,5]_2$ & $[23,12,6]_2$ & $[40,21,5]_2$ & $[40,21,4]_2$ \\ 
		
		$[20,11,5]_2$ & $[20,11,5]_2$ & $[40,22,5]_2$ & $[40,22,4]_2$ \\ 
		\hline
	\end{tabular}
\end{table}

\begin{remark}\label{rem.bounds on LCD}
	Some larger improvements of lower bounds on the minimum distance of binary 
	Euclidean LCD codes listed in \cite[Table 2]{B2021} were made in the recent article 
	\cite{IS2022} and the most recent preprint \cite{LS2022}. However,  note that our 
	method holds for any $q$, then on one hand, the results in Table \ref{tablast} indicate 
	that our method is simple and potentially effective in improving the lower bounds on 
	the minimum distance of LCD codes as well. On the other hand, as stated at the 
	beginning of this subsection, starting from these LCD codes with larger minimum 
	distances, more optimal LCD codes might be obtained by using the methods proposed in 
	the previous literature. 
\end{remark}

\section{Concluding remarks}\label{sec5}

In this paper, we consider two classical propagation rules, namely, the direct sum construction 
and the $(\mathbf{u},\mathbf{u+v})$-construction, and fully or partly determine their Euclidean and 
Hermitian hull dimensions in Theorems \ref{th.propagation1.direct sum}, \ref{th.propagation1.(u,u+v)-construction} 
and \ref{th.propagation2.(u,u+v)-construction}, respectively. 
As stated by Remarks \ref{rem1.direct sum}-\ref{rem5.mechanism of the two propagation rules}, 
our conclusions complement and generalize the previous results in the literature. 
In combination with Example \ref{exam.mechanism of the propagation rules}, one can 
easily find that our results are potentially effective for constructing new linear 
codes with prescribed hull dimensions. 
{As applications, we have constructed new 
	binary, ternary Euclidean FSD LCD codes, quaternary Hermitian LCD codes as well as Euclidean 
	and Hermitian (almost) self-orthogonal codes including (almost) self-dual codes in Theorems 
	\ref{th.2-ary FSD LCD codes}-\ref{th.4-ary FSD LCD codes}, \ref{th.binary sd with d=4}, 
	\ref{th.binary so with d=2^{t+2}}, \ref{th.binary so with d=2^t}, \ref{th.binary almost so with d=2^k-1}, 
	\ref{th.quaternary and octet Euclidean SO codes} and \ref{th.Hermitian SO codes over F4, F16, F64}.  
	In details, in Examples \ref{exam.binary sd with d=4}, \ref{exam.binary so with d=2^{t+2}}, 
	\ref{exam.binary so with d=2^t}, \ref{exam.binary almost so with d=2^t-1} and 
	Table \ref{tab:quaternary and octet Euclidean self-orthogonal codes}, on the basis of 
	our findings, we list numbers of Euclidean and Hermitian (almost) self-orthogonal codes, which 
	contain many (almost) optimal linear codes.}
Additionally, examples listed in Table \ref{tablast} show that our method is also potentially 
effective in improving the lower bounds on the minimum distance of LCD codes. 

{Note that we have employed nested GRS codes to obtain some interesting Euclidean and Hermitian 
	self-orthogonal codes over finite fields with even characteristic, and some (almost) optimal 
	codes are contained. Hence, it would be interesting to use other types of linear codes and 
	consider finite fields with odd characteristic so that more (almost) optimal self-orthogonal codes 
	can be derived. Subsequently, good quantum codes can be further obtained from the well-known 
	CSS construction \cite{CSS1,CSS2} and Hermitian construction \cite{AKKS2007,KKK2006}. }

\ifCLASSOPTIONcaptionsoff
  \newpage
\fi



%

%

\end{document}